\documentclass[a4paper,11pt,twoside]{ThesisStyle}


\usepackage{amsmath,amsthm,amssymb}
\usepackage{bbm}
\usepackage[T1]{fontenc}
\usepackage[utf8]{inputenc}
\usepackage[english]{babel}
\usepackage{complexity}
\usepackage{tikz}
\usepackage{xspace}
\usepackage{pdfpages}

\usepackage[left=1.1in,right=1.1in,top=1.1in,bottom=1.1in,includefoot,includehead,headheight=13.6pt]{geometry}

\usepackage{silence}

\WarningFilter{minitoc(hints)}{W0023}
\WarningFilter{minitoc(hints)}{W0024}
\WarningFilter{minitoc(hints)}{W0028}
\WarningFilter{minitoc(hints)}{W0030}

\usepackage{aecompl}
\usepackage{url}

\usepackage[printonlyused,withpage]{acronym}




\graphicspath{{.}{images/}}

\usepackage{color}
\definecolor{linkcol}{rgb}{0,0,0.4}
\definecolor{citecol}{rgb}{0.5,0,0}


\usepackage[nottoc, notlof, notlot]{tocbibind}
\usepackage{minitoc}
\setcounter{minitocdepth}{2}
\mtcindent=15pt


\setcounter{secnumdepth}{3}
\setcounter{tocdepth}{2}

\usepackage{rotating}                    
\usepackage{fancyhdr}                    



\pagestyle{fancy}                       
\fancyfoot{}                            

\fancyhead[LE,RO]{\bfseries\thepage}    
\fancyhead[RE]{\bfseries\nouppercase{\leftmark}}      
\fancyhead[LO]{\bfseries\nouppercase{\rightmark}}     

\let\headruleORIG\headrule
\renewcommand{\headrule}{\color{black} \headruleORIG}

\usepackage{colortbl}
\arrayrulecolor{black}

\fancypagestyle{plain}{
  \fancyhead{}
  \fancyfoot{}
  
}

\usepackage[ruled,vlined, linesnumbered]{algorithm2e}

\makeatletter

\def\cleardoublepage{\clearpage\if@twoside \ifodd\c@page\else%
  \hbox{}%
  \thispagestyle{empty}
  \newpage%
  \if@twocolumn\hbox{}\newpage\fi\fi\fi}

\makeatother



%
{%

\hrulefill
\vspace*{0.5cm}%
\end{minipage}
}

\let\minitocORIG\minitoc
\renewcommand{\minitoc}{\minitocORIG \vspace{1.5em}}

\usepackage{subfigure}
\usepackage{multirow}
{ \begin{list}%
	{$\bullet$}%
	{\setlength{\labelwidth}{25pt}%
	 \setlength{\leftmargin}{30pt}%
	 \setlength{\itemsep}{\parsep}}}%
{ \end{list} }

\newtheorem{definition}{Definition}
\renewcommand{\epsilon}{\varepsilon}



\ifpdf
  \usepackage[pagebackref,hyperindex=true]{hyperref}
\else
  \usepackage[dvipdfm,pagebackref,hyperindex=true]{hyperref}
\fi

\renewcommand*{\backref}[1]{}
\renewcommand*{\backrefalt}[4]{%
\ifcase #1 %
(Not cited.)%
\or
(Cited on page~#2.)%
\else
(Cited on pages~#2.)%
\fi}



\hypersetup
{
bookmarksopen=true,
pdftitle="Manuscript title",
pdfauthor="Your name",
pdfsubject="Manuscript topic in a few words", 
pdfmenubar=true, 
pdfhighlight=/O, 
colorlinks=true, 
pdfpagemode=UseNone, 
pdfpagelayout=SinglePage, 
pdffitwindow=true, 
linkcolor=linkcol, 
citecolor=citecol, 
urlcolor=linkcol 
}


\newcommand{\enum}[1]{\textsc{Enum}\smash{\cdot}#1}
\newcommand{\test}[1]{\textsc{Check}\smash{\cdot}#1}

\newcommand{\ext}[1]{\textsc{ExtSol}\smash{\cdot}#1}
\newcommand{\jsol}[1]{\textsc{RankedSol}\smash{\cdot}#1}
\newcommand{\Output}{\mathtt{Output}}
\newcommand{\OutputP}{\mathrm{OutputP}}
\newcommand{\EnumP}{\mathrm{EnumP}}
\newcommand{\IncP}{\mathrm{IncP}}
\newcommand{\DelayP}{\mathrm{DelayP}}
\newcommand{\NextP}{\mathrm{NextP}}
\newcommand{\SDelayP}{\mathrm{SDelayP}}
\newcommand{\QueryP}{\mathrm{QueryP}}
\newcommand{\strongpdelay}{{strong polynomial delay}}
\newcommand{\kSAT}[1]{#1\text{-}\SAT}

\newcommand{\ETH}{\mathsf{ETH}}
\newcommand{\SETH}{\mathsf{SETH}}
\newcommand{\UIncP}{\mathrm{UsualIncP}}
\newcommand{\enumDNF}{$\enum{\textsc{DNF}}$}

\newcommand{\cF}{\ensuremath{\mathcal{F}}}

\newcommand{\ccS}{\ensuremath{\mathcal{S}}}

\newcommand{\Next} {\ensuremath{\text{next}}\xspace}
\newcommand{\mot}[1] {\textbf{#1}\xspace}
\newcommand{\alphabet} {\ensuremath{\mathcal{A}}\xspace}

\newcommand{\setofmotifs} {\ensuremath{\mathcal{M}}\xspace}

\newcommand{\Cl}{\ensuremath{Cl}}

\newcommand{\EnumClo}{\textsc{EnumClosure}}

\newcommand{\Mem}{\textsc{Membership}}

\newcommand{\PIT}{\textsc{PIT}}

\tikzstyle{label}=[draw=black,fill=white,text=black,circle]%
\tikzstyle{labelsat}=[draw=black,fill=lightgray,text=black,circle]%
\tikzstyle{centre}=[draw=black,fill=gray,text=white,circle]
\tikzstyle{noeud}=[draw=black,fill=white,circle]

\newtheorem{theorem}{Theorem}[chapter]
\newtheorem{proposition}{Proposition}[chapter]
\newtheorem{lemma}{Lemma}[chapter]
\newtheorem{corollary}{Corollary}[chapter]
\newtheorem{openproblem}{\bf Open problem}[chapter]

\theoremstyle{definition}
\newtheorem{conjecture}{Conjecture}[section]
\newtheorem{example}{Example}

\begin{document}

\includepdf[pages=-]{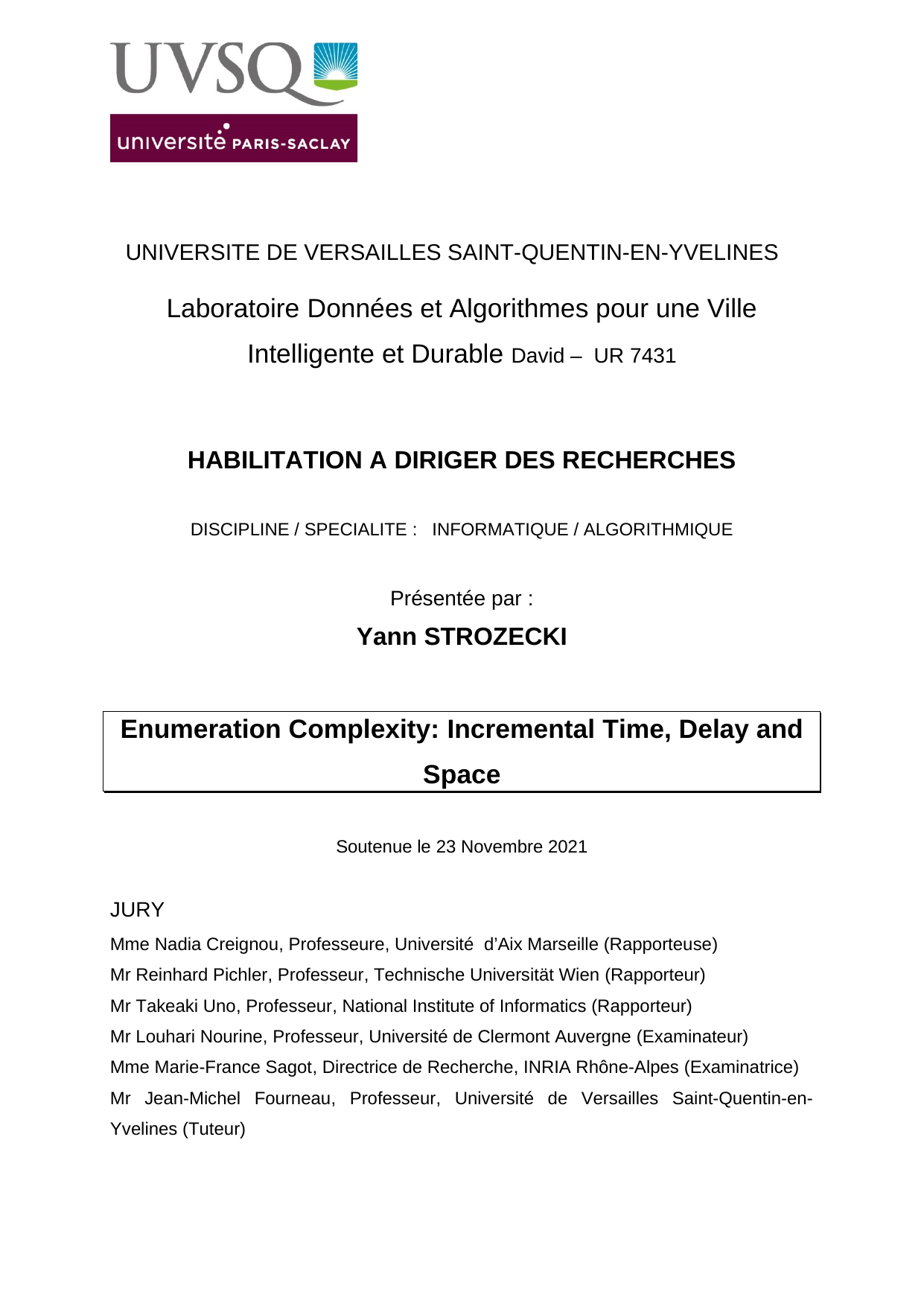}

\pagenumbering{roman}

\setcounter{page}{0}
\cleardoublepage

\section*{Acknowledgments}

Merci à mes enfants de m'avoir permis de ne pas trop me précipiter pour écrire mon habilitation.
Merci au confinement pour m'avoir permis de prendre encore plus de recul et moins de RER. Merci à ma compagne de m'avoir épousé pendant que je rédigeais ce manuscrit, j'espère pouvoir encore continuer longtemps à équilibrer ma vie familiale et professionelle ainsi. Merci à ma mère, mon frère et tout ma famille d'être toujours présents et de m'aider pour le pot :) 

Je suis très reconnaissant à tous les collègues qui m'ont proposé des problèmes, invité à des séminaires et des conférences et m'ont posé de nombreuses questions: c'est grâce à vous j'ai continué à travailler sur l'énumération. 
Je suis également reconnaissant à tous mes coauteurs, avec qui j'ai eu beaucoup de plaisir à travailler sur les résultats présentés dans ce manuscrit.
Je suis prêt pour des nouvelles questions Arnaud et Florent, il va falloir trouver une garde pour vos enfants ! 
Thanks to Olivier Commowick for distributing a nice thesis template (at \url{http://olivier.commowick.org/thesis_template.php}).

I heartily thank the reviewers of this manuscript, Nadia Creignou, Reinhard Pichler and Takeaki Uno to have accepted to take my manuscript as a holiday reading and for their excellent and quick work. I also thank the whole jury, to have found the time to be present for the defense. I want to especially thank Jean-Michel, my tutor, that I have so frequently bothered with administrative questions about this HDR.

\dominitoc	
\tableofcontents

\mainmatter

\chapter{Introduction}

\section{Foreword}
 
During the past ten years, I have worked in several areas related to algorithms and complexity.
I was first interested in algebraic complexity and computer algebra during a post-doc in Toronto, working with Bruno Grenet, Pascal Koiran and Natacha Portier.
Then, I began to study algorithmic game theory with David Auger, Pierre Coucheney and our PhD student Xavier Badin de Montjoye,
working on algorithms to find optimal strategies for simple stochastic games.
More recently, I have studied periodic scheduling problems with my PhD student Maël Guiraud, both from a theoretical and practical point of view, to optimize the latency of telecommunication networks.

In this habilitation thesis, I focus on a fourth topic I have studied since the beginning of my PhD: the structural complexity of enumeration, in particular the relationships between different notions of tractability. This thesis tries to be a survey on the complexity of enumeration, with an emphasis on my contributions to the area. A good part of this thesis is taken from my recent survey~\cite{strozecki2019enumeration} and some of my articles on the complexity of enumeration~\cite{capelli2019incremental,DBLP:journals/dmtcs/MaryS19,capelli2020enumerating}.
I found inspiration and a broad view on enumeration in several theses~\cite{Bagan09,phdstrozecki,brault2013pertinence,mary2013enumeration,marino2015enumeration,vigny2018query}. I was also motivated to write a survey on enumeration complexity and this thesis by the ever increasing community of researchers interested in enumeration
coming from many different backgrounds: graph algorithms, parametrized complexity, exact exponential algorithms, logic, databases, enumerative combinatorics, applied algorithms in bioinformatics, cheminformatics, networks \dots 
In the last ten years, the topic has attracted much attention and these different communities began to share their ideas and problems, as exemplified by the creation of WEPA, the International Workshop on Enumeration Problems and Applications, two recent Dagstuhl worskhops \emph{``Algorithmic Enumeration: Output-sensitive, Input-Sensitive, Parameterized, Approximative''} and \emph{``Enumeration in Data Management''}, and the creation of wikipedia pages for enumeration complexity and algorithms. 
The results I present have been obtained with my co-authors, in particular Florent Capelli, Arnaud Mary, Arnaud Durand, Sandrine Vial and Franck Quessette but they also originate from helpful discussions with my colleagues Nadia Creignou, Frédéric Olive, Étienne Grandjean, Stefan Mengel, Mamadou Kanté, Lhouari Nourine, Alexandre Vigny, Florent Madelaine and many others.

\section{Introduction}

Modern enumeration algorithms date back to the $70$'s with graph algorithms~\cite{tiernan1970efficient}, while fundamental complexity notions for enumeration have been proposed 30 years ago by Johnson, Yannakakis and Papadimitriou~\cite{johnson1988generating}. However, much older problems can be reinterpreted as enumeration: the baguenaudier game~\cite{lucas1882recreations} from the $19$th century can be seen as the problem of enumerating integers in Gray code order. There are even thousand years old examples of methods to list simple combinatorial structures, the subsets or the partitions of a finite set, as reported by Ruskey~\cite{ruskey2003combinatorial} in his book on combinatorial generation.
 Algorithms to list all integers, tuples, permutations, combinations, partitions, set partitions, trees of a given size are also called combinatorial algorithms and are particular enumeration problems. Combinatorial algorithms are the subject of a whole volume of the Art of Computer Programming~\cite{knuth2011art} and Knuth even confessed that these are his favorite algorithms; while I agree I would extend that appreciation to all enumeration algorithms.

Hundreds of different enumeration problems have now been studied, see the actively maintained list \href{http://www-ikn.ist.hokudai.ac.jp/~wasa/enumeration_complexity.html}{Enumeration of Enumeration Algorithms and Its Complexity}~\cite{wasa2016enumeration}. Some enumeration problems have important practical applications: A database query is the enumeration of assignments of a formula, data mining relies on generating all frequent objects in a large database (frequent itemsets~\cite{agrawal1994fast} and frequent subgraphs~\cite{jiang2013survey}), finding the minimal transversals of a hypergraph has applications in many fields~\cite{hagen2008algorithmic} such as biology, machine learning, cryptography \dots 

A classical approach to enumeration is to see it as a variation on decision or search problems where one tries to get more information. As an example, consider the matchings of a graph, we may want to solve the following tasks.
\begin{itemize}
 \item (decision problem) Decide whether there is a matching. 
 \item (search problem) Produce a (maximal) matching. 
 \item (optimization problem) Produce a matching of largest cardinality. 
 \item (counting problem) Count all matchings.
 \item (enumeration problem) List all (maximal) matchings.
\end{itemize}

Usually, to analyze the complexity of a problem, we relate the time to produce the output to the size of the input. The originality of enumeration problems is that the output is usually very large with regard to the input. Hence, interpreting the \emph{total time} to produce the whole set of solutions as a function of the input size is not informative, since it is almost always exponential. 
To make enumeration complexity relevant, we need to consider other parameters of the problem out of the instance size:  the size of the output (or its cardinal since it is a set of solutions) and the size of a single solution in the output. We must also study finer complexity measures than total time, since it does not allow to differentiate most enumeration problems. 

The simplest way to improve on the analysis of enumeration algorithms is to evaluate how the \emph{total time} to compute all solutions relates to \emph{the size of the input and of the output}. Algorithms whose complexity is given in this fashion are often called \emph{output sensitive}, by contrast to \emph{input sensitive algorithms}~\cite{fomin2010exact}.  Note that output sensitivity is relevant even when the number of objects to produce is small. In computational geometry, it allows to give better complexity bounds, for instance on the convex hull problem~\cite{chan1996optimal}. When the \emph{total time} is polynomial in the size of the input and the output, the algorithm is said to be \emph{output polynomial} (or sometimes in total polynomial time). Output polynomial is a good measure of tractability when \emph{all} elements of a set must be generated, for instance to count the number of solutions or to compute some statistics on the set of solutions. 

Often, output polynomial is not restrictive enough given the outstanding number of solutions and we must ask for a total time \emph{linear} in the number of solutions.
In that case, the relevant complexity measure is the total time divided by the number of solutions called {\em amortized time} or {\em average delay}. Many enumeration algorithms generating combinatorial objects are in constant amortized time or CAT, such as the generation of unrooted trees of a given size~\cite{wright1986constant}, linear extensions of a partial order~\cite{pruesse1994generating} or integers given in Gray code order~\cite{knuth1997art}. Uno also proposed in~\cite{uno2015constant} a general method to obtain constant amortized time algorithms, which can be applied, for instance, to find the matchings or the spanning trees of a graph.

Enumeration algorithms are also often used to compute an optimal solution by generating all admissible solutions. For instance, finding maximum common subgraphs up to isomorphism, a very important problem in cheminformatics, is $\NP$-hard and is solved by listing all maximal cliques~\cite{ehrlich2011maximum}.
The notion of best solution is not always clear and enumeration is then used to build libraries of interesting objects to be analyzed by experts, as it is done in biology, chemistry or network analytics~\cite{andrade2016enumeration,barth2015efficient,bohmova2018computing}. In particular, when confronted to a multicriteria optimisation problem, a natural approach is to enumerate the Pareto's frontier~\cite{papadimitriou2000approximability,vassilvitskii2005efficiently,bazgan2015approximate}.
In all these applications, if the set of solutions is too large, we are interested in generating the largest possible subset of solutions. Hence, a good enumeration algorithm should guarantee that it will find as many solutions as possible in a predictable amount of time. In this case, \emph{polynomial incremental time} algorithms are more suitable: an algorithm is in  polynomial incremental time if the time needed to enumerate the first $k$ solutions is polynomial in $k$ and in the size of the input. Such algorithms naturally appear when the enumeration task is of the following form: given a set of elements and a polynomial time function acting on tuples of elements, produce the closure of the set by the function. One can generate such closure by iteratively applying the function until no new element is found. As the set grows, finding new elements becomes harder. For instance, the best algorithm to generate all circuits of a matroid uses a closure property of the circuits~\cite{khachiyan2005complexity} and is thus in polynomial incremental time. The fundamental problem of generating the minimal transversals of a hypergraph can also be solved in quasi-polynomial incremental time~\cite{fredman1996complexity} and some of its restrictions in polynomial incremental time~\cite{eiter2003new}.

However, when one wants to process a set in a streaming fashion such as the
answers of a database query, incremental polynomial time is not enough and we need a
good \emph{delay} between the output of two consecutive solutions, usually bounded by a
polynomial in the input size. We refer to such algorithms as \emph{polynomial
  delay} algorithms. Many problems admit such algorithms, e.g. enumeration of
the cycles of a graph~\cite{read1975bounds}, the satisfying assignments of
some tractable variants of $\SAT$~\cite{creignou1997generating} or the spanning trees and
connected induced subgraphs of a graph~\cite{avis1996reverse}. All polynomial
delay algorithms are based on few methods such as \emph{backtrack search} (also called flashlight search or binary partition) or \emph{reverse search}, see~\cite{mary2013enumeration} for a survey.

When the size of the input is much larger than the size of one solution, think of generating subsets of vertices of a hypergraph or a small query over a large database, polynomial delay is an unsatisfactory measure of efficiency. The good notion of tractability is \emph{\strongpdelay}, i.e. the delay is polynomial \emph{in the size of the last solution}. A folklore example is the enumeration of the paths in a DAG, which is in delay linear in the size of the last generated path. More complex problems can then be reduced to generating paths in a DAG, such as enumerating the minimal dominating sets in restricted classes of graphs~\cite{golovach2018output}.

Unlike classical complexity classes, none of the classes introduced in this thesis but $\EnumP$, the equivalent of $\NP$, have complete problems. Because of that, no notion of reduction for enumeration
seems better than the others, and for many proof of hardness, ad hoc reductions are used. To overcome the lack of completness result, several works restrict themselves to smaller families of enumeration problems in the hope of better classifying their complexity: assignments of SAT formula~\cite{creignou1997generating}, homomorphisms~\cite{bulatov2012enumerating}, subsets given by saturation operators~\cite{DBLP:journals/dmtcs/MaryS19}, FO queries over various structures~\cite{DBLP:journals/sigmod/Segoufin15}, maximal subgraphs~\cite{cohen2008generating,conte2019new,conte2019listing}\dots 

\paragraph{Organization}

In Chapter~\ref{chap:framework}, enumeration problems and the related computation model are defined, with an emphasis
on the consequences of several definitional choices. 
Then, in Chapter~\ref{chap:time}, we introduce complexity classes related to three time complexity measures, \emph{total time}, \emph{incremental time} and \emph{delay}. For each of these classes, we provide a separation theorem and a characterization when possible. 
In Chapter~\ref{chap:space}, we study several restrictions on space in enumeration. In particular, we show that incremental linear time is in fact equal to polynomial delay, even with polynomial space.
Chapter~\ref{chap:low} is devoted to the presentation of low complexity classes, with strong contraints on the delay, to capture tractability in different contexts. In particular, we present several methods to show that a problem has a strong polynomial delay, applied to the problem of generating the models of a DNF formula. 
We also investigate the restricted class of problems which admits a uniform random generator of solutions, as a way to avoid using space in addition of time,
and show that it can be related to polynomial delay. 
Instead of generating solutions in a random order or an unspecified one, it is often relevant to generate them in a specified order, to get more ``interesting'' solutions first. We review how an additional constraint on the order of generated solutions may increase the complexity in Chapter~\ref{chap:order}.
In Chapter~\ref{chap:limits}, we review the few lower bound results for enumeration problems. To obtain more lower bounds and classification results, we present several restricted frameworks from logic and algebra. In addition to lower bounds, we provide many algorithms for tractable classes of saturation problems, interpolation problems and generalized first order queries.
 In Chapter~\ref{chap:practical}, we briefly present an algorithm developped to enumerate a certain kind of planar map useful in cheminformatic, underlying the differences between the design of a theoretical algorithm and 
 the implementation of an algorithm used in practice. To deal with a huge solution space, even constant delay algorithm are not satisfactory,
  hence, in Chapter~\ref{chap:horizons}, we present several alternative approaches to the task of listing solutions exhaustively.

\chapter{Enumeration Framework}
\label{chap:framework}
\minitoc

\section{Enumeration Problem}

Let $\Sigma$ be a finite alphabet and $\Sigma^*$ be the set of finite words built on $\Sigma$.
We denote by $|x|$ the length of $x \in \Sigma^*$.
Let $A\subseteq \Sigma^{*}\times\Sigma^{*}$ be a binary predicate, we write $A(x)$ for the set of $y$ such that $A(x,y)$ holds. The enumeration problem $\enum{A}$ is the function which associates $A(x)$ to $x$. The element $x$ is often called the instance or the input, while an element of $A(x)$ is called a solution. We denote the cardinality of a set $S$ by $|S|$.

\begin{example}
Let $H = (X,E)$ be an hypergraph, with $X$ the set of vertices and $E$ the set of hyperedges.
A transversal or hitting set is a subset $T$ of $X$, such that all hyperedges $e \in E$ have a non-empty intersection
with $T$. In other words, $T$ is a cover of the hypergraph $H$. The problem of generating all transversals minimal for inclusion
is a fundamental enumeration problem, with many applications~\cite{hagen2008algorithmic}.

The binary predicate $\textsc{Min-Transversals}(H,T)$ is true if and only if $T$ is a minimal transversal of $H$. 
The problem of listing minimal transversals is denoted by $\enum{\textsc{Min-Transversals}}$. Hypergraph $H$ is an input of the problem,  
$\textsc{Min-Transversals}(H)$ denotes the set of all minimal transversals of $H$ and $T \in \textsc{Min-Transversals}(H)$ is called a solution of $\enum{\textsc{Min-Transversals}}$ (for input $H$).
\end{example}

In this thesis, we only consider predicates $A$ such that $|A(x)|$ is finite for all $x$.
This assumption could be lifted and the definitions on the complexity of enumeration adapted to the infinite case. This is not done here because infinite sets of solutions behave quite differently when studying the complexity of their enumeration. However, there are many natural infinite enumeration problems such as listing all primes or all words of a context-free language~\cite{florencio2015naive}.

We can further reduce the set of enumeration problems by adding constraints on their solutions. First, 
the size of each solution can be bounded by a function of the instance size. It is reasonable since in many problems the size of a solution is fixed, known beforehand and not too large, otherwise we would not even try to produce them.

\begin{definition}
 A binary predicate $A$ is \emph{polynomially balanced} if there is a polynomial $p$, such that, for all $y \in A(x)$, $|y| \leq p(|x|)$. 
\end{definition}

Let $\test{A}$ be the problem of deciding, given $x$ and $y$, whether $y \in A(x)$. 
In almost all practical enumeration problems, one can check efficiently whether a string is a solution. This can be captured by the constraint $\test{A} \in \P$
From a polynomially balanced predicate $A$ with $\test{A}$ in polynomial time, we can define 
an $\NP$ problem by asking whether $A(x)$ is empty or a $\#\P$ problem by asking for $|A(x)|$.
It is then natural to define a class of enumeration problems from these predicates, analogous to $\NP$.

\begin{definition}
 The class $\EnumP$ is the set of all problems $\enum{A}$ where $A$ is polynomially balanced and 
 $\test{A} \in \P$.
\end{definition}

\begin{example}
Any transversal $T$ of an hypergraph $H = (X,E)$ satisfies $T \subseteq X$, hence $|T| \leq |H|$ and $\textsc{Min-Transversals}$ is polynomially balanced.
Moreover, $\test{\textsc{Min-Transversals}}$ can be solved in polynomial time in $H$. First we test whether $T$ covers all edges of $E$ and then
we verify it is minimal for this property, that is $T \setminus \{v\}$ is not a transversal for all $v$ vertices of $H$.
 Hence, $\enum{\textsc{Min-Transversals}} \in \EnumP$. 
\end{example}

The problems in $\EnumP$ can be seen as the task of listing the solutions (or witnesses)
of $\NP$ problems. One good property of $\EnumP$ is to have some complete problems for the simple parsimonious reduction. However, there is no standard notion of reduction which makes $\EnumP$-complete all seemingly hard problems of $\EnumP$, and this difficulty is already apparent in the definition of the polynomial hierarchy for enumeration.
This hierarchy is built in~\cite{creignou2017complexity} by adding oracles in the polynomial hierarchy
to polynomial delay or incremental polynomial time machine. This gives a strict hierarchy of hard problems,
with some natural complete examples in its first levels. Note that, in contrast to counting complexity,
defining a hierarchy through the complexity of $\test{A}$ does not seem relevant, since it is easy to define a ''simple'' enumeration problem with a hard $\test{A}$ problem by adding many trivial solutions to the problem.

Finally, note that in the definition of $\EnumP$ nothing specific about enumeration is taken into account. 
We are able to define it before even specifying the computation model or complexity measures specific to enumeration, as it only relies on the complexity of deciding the auxiliary problem $\test{A}$.

\section{Model of Computation} 

The model of computation is the random access machine (RAM) with comparison, addition, subtraction and multiplication as its basic arithmetic operations and an operation $\Output(i,j)$ which outputs the concatenation of the values of registers $R_i, R_{i+1}, \dots, R_j$. RAM machines have been introduced by Cook and Reckhow to better model the storage of existing machines~\cite{DBLP:journals/jcss/CookR73,aho1974design}; for variants designed for enumeration see~\cite{Bagan09,phdstrozecki}. All instructions are in constant time except the arithmetic instructions which are in time logarithmic in the sum of the integers they are called on. 

A RAM machine solves $\enum{A}$ if, on every input $x \in \Sigma^{*}$, it produces a sequence $(y_{1}, \dots, y_{s})$ such that $ A(x) = \left\lbrace y_{1}, \dots, y_{s} \right\rbrace $ and for all $i\neq j,\, y_{i} \neq y_{j}$, that is \emph{no solution must be repeated}! We may assume that all registers are initialized to zero. The space used by the machine at some point of its computation is the sum of the length of the integers up to the last registers it has accessed. We define by $T_M(x,i)$ the time taken by the machine $M$ on input $x$ up to point when the $i$th $\Output$ instruction is executed. Usually we drop the subscript $M$ and write $T(x,i)$ when the machine is clear from the context. 
The \emph{delay} of a RAM machine which outputs the sequence $(y_{1}, \dots, y_{s})$ is the maximum over all $i\leq s$ of the time the machine uses between the generation of $y_i$ and $y_{i+1}$, that is $\max_{1 \leq i < s}{T(x,i+1)-T(x,i)}$. In some works, preprocessing and postprocessing times are considered separately from the delay. It is extremely rare to need more time for deciding if the enumeration is finished than to output a solution, hence we consider in this thesis that \emph{there is no postprocessing}: enumeration stops as soon as the last solution is output. To make low complexity classes interesting, it is important to allow preprocessing, that is $T_M(x,1)$, to be larger than the delay and we will mention it when appropriate.
The \emph{average delay} or \emph{amortized time} of a RAM machine is the average time needed by the machine to produce a solution, that is $T(x,s)/s$. The delay is an upper bound to the average delay. 

\paragraph{Why a RAM instead of a Turing Machine?}

  While the RAM machine better maps to real computers and is thus better 
  to measure precisely the complexity of algorithms, it can be simulated with cubic slowdown
  by a Turing Machine~\cite{DBLP:journals/jcss/CookR73,papadimitriou2003computational}. A polynomial time Church's Thesis states that all realistic computational models are equivalent up to polynomial slowdown\footnote{the thesis is not true for quantum computers, but we still do not know whether they are physically implementable}, which makes the computation model irrelevant for classical complexity. However, we can isolate a sequence of operations \emph{during the execution} of a RAM which takes exponentially more time on a Turing Machine which simulates it. The RAM has the power of \emph{indirection}: it can access any address in constant time, while the Turing machine should traverse all its tape to read some cell. This allows to use dictionary data structures essential in enumeration such as AVL trees or tries~\cite{cormen2009introduction} which gives linear time access to elements inside an exponential set.

\paragraph{Why such a constant time OUTPUT instruction?}

The choice of the OUTPUT instruction and of its complexity is only relevant for 
algorithms with a sublinear delay in the size of the solution output, in particular a constant delay.
In all definitions of RAM machine for enumeration~\cite{Bagan09,phdstrozecki,mary2013enumeration,brault2013pertinence} the OUTPUT instruction can be issued in constant time. This is required to make constant delay interesting by capturing problems like Gray code enumeration or query answering, otherwise only a constant number of constant size solutions can be generated in constant delay. This is similar to the definition of logarithmic space, where machine can write an output in a special tape which is not taken into account in the space used~\cite{papadimitriou2003computational}. 
Constant time output is meaningful, when only the deltas between solutions are output rather than solutions themselves. It is also relevant, if we just do a constant time operation on each solution such as counting them or evaluating some measure which depends only on the constant amount of changes between two consecutive solutions.

The originality of the RAM model proposed in this thesis is to allow outputing solutions at different positions of the memory, to really take advantage of indirection. In previous models the size of the solution was implicit, here we make it explicit and also maintain its position in memory. This choice does not seem a big stretch from reality since in a computer, the memory zone in which solutions are stored is not always the same. In fact, a programmer does not even control (but rather the system and the cpu) where information is physically stored in memory. Our model enables us to list, with constant delay, consecutive solutions which may differ by an unbounded number of elements, which is not possible in the traditional model with fixed registers for the output. 

\paragraph{Why this cost model ?}

The cost model used to take into account the different operations
of the RAM has no impact on complexity classes defined by a polynomial bound, so the choice is mostly arbitrary. 
We choose to count addition and multiplication as a linear number of operations
in the size of their arguments. We could choose to count the addition as a unit time operation, 
but not the multiplication otherwise we can generate doubly exponential numbers in linear time.
Note that we allow unbounded integers in registers to deal with large data structures.

For small complexity classes, such as linear delay or constant delay, the choice of the cost model becomes extremely relevant. In such situations, sometimes implicitly, the \emph{uniform cost model} (see~\cite{DBLP:journals/jcss/CookR73,aho1974design}) is chosen: addition, multiplication and comparison are in constant time. However, if the input is of size $n$, the machine has $\log(n)$ word-size, i.e. integers in registers are bounded by $n$. This model is robust enough to define linear time computation~\cite{grandjean2002machine} and constant delay in enumeration~\cite{DBLP:journals/tocl/DurandG07,Bagan09}. It is similar to the word RAM model, a transdichotomous model~\cite{fredman1993surpassing}, used to give finer and more realistic bounds for data structures. 
In some enumeration algorithms, all solutions must be stored and there can be $2^n$ of them. 
Hence, restricting integers in registers to be of size $\log(n)$ or even $k\log(n)$ for a fixed $k$ is not sufficient to address all the memory needed to store the solutions. Hence, rather than bounding the register size, a good compromise is to take as cost of an instruction the logarithm of the sum of its arguments \emph{divided by $n$}. Alternatively, some constant time operations on unbounded integers such as comparison and incrementation can be allowed.

\section{Parsimonious Reduction}

In this section we introduce a first notion of reduction for enumeration problems. 
A reduction is a binary relation over enumeration problems (or equivalently over the relations which define the problems).
A reduction, denoted by $\prec$, must be \emph{transitive}, that is if $\enum{A} \prec \enum{B}$ and  $\enum{B} \prec \enum{C}$ then
$\enum{A} \prec \enum{C}$. 
A class of enumeration problems $\mathcal{C}$ is closed under the reduction $\prec$ if $\enum{B} \in \mathcal{C}$ and $\enum{A} \prec \enum{B}$ 
imply $\enum{A} \in \mathcal{C}$. This property guarantees that reductions can be used to provide algorithms and not just hardness results. Note that if a reduction allows for too much computation with regard to a class, the class is not closed under the reduction, for instance $\P$ is not closed under exponential time reductions.

The \emph{parsimonious reduction} for counting problems enforces equality of numbers of solutions, by requiring a bijection between sets of solutions in addition to the bijection between inputs. If this bijection is explicit and tractable, then such a reduction is adapted to enumeration complexity.

\begin{definition}[Parsimonious Reduction]
Let $\enum{A}$ and $\enum{B}$ be two enumeration problems. 
A parsimonious reduction from $\enum{A}$ to $\enum{B}$ is a pair of polynomial time computable functions $f, g$ such that for all $x$, $g(x,\cdot)$ restricted to $B(f(x))$ is a bijection with $A(x)$ and $g(x,\cdot)^{-1}$ is polynomial time computable.
\end{definition}

\begin{example}
Let $\textsc{Clique}(G,S)$ be the predicate which is true when $S$ is a clique in the graph $G$.
Let $\textsc{Independent-Set}(G,S)$ be the predicate which is true when $S$ is an independent set in the graph $G$.
Let $f$ be the function which maps the graph $G$ to its complement graph $H$, that is two vertices are adjacent in $G$ if and only if they are not adjacent in $H$. 
Let $g(G,S)$ be the function which maps $S$ a set of vertices in $G$ to the same set of vertices $S$ in $f(G)$.
Using these two functions computable in polynomial time, we have $\enum{\textsc{Clique}} \prec \enum{\textsc{Independent-Set}}$
(and also $\enum{\textsc{Independent-Set}}  \prec  \enum{\textsc{Clique}}$).
\end{example}

Let us denote by $g_x(y)$ the bijection beetween $B(f(x))$ and $A(x)$ in the definition of parsimonious reduction.
The condition that $g_x^{-1}$ is polynomial time computable is in the definition only to ensure that $\EnumP$ is stable under
parsimonious reduction as we now prove.
Recall that $\enum{B} \in \EnumP$ means that $B$ is polynomially balanced and  $\test{B}$ is in polynomial time.
To prove that $\EnumP$ is closed under parsimonious reductions, we must prove that if $\enum{A} \prec \enum{B}$ and $\enum{B} \in \EnumP$,
then $\enum{A} \in \EnumP$.  The predicate $A(x,y)$ holds if and only if $B(f(x),g_x^{-1}(y))$, hence it can be decided in 
polynomial time because we have assumed that $B$, $f$ and $g_x^{-1}$ are polynomial time computable. Second, since $f$ and $g$ are polynomial time computable, the elements of $A(x) = g(x,B(f(x)))$ are polynomial in the size of $x$. 

An $\EnumP$-complete problem is defined as a problem in $\EnumP$ to which any problem in $\EnumP$ reduces by parsimonious reduction.
 The problem $\enum{3SAT}$, the task of listing all solutions of a $3$-CNF formula is $\EnumP$-complete, since the reduction used in the proof that 3SAT is $\NP$-complete~\cite{cook1971complexity} is parsimonious.

 Let us consider the predicate $3SAT_{0}(\phi,x)$ which is true if and only if $x$ is a satisfying assignment of the $3$-CNF formula $\phi$ or $x$ is the all zero assignment. Then $3SAT_{0}(\phi)$ is never empty and therefore many problems of $\EnumP$ cannot be reduced to $\enum{3SAT_{0}}$ by parsimonious reduction. However, the problem intuitively feels like a complete problem and it can be made so by relaxing the reduction. Indeed, from an enumeration perspective,  $\enum{3SAT_{0}}$ is exactly as hard as $\enum{3SAT}$: given an algorithm to enumerate $\enum{3SAT_{0}}$, we can enumerate $\enum{3SAT}$ by discarding if necessary the all zero solution (or adding it when doing the inverse reduction). This reduction requires an additional polynomial time computation only. This shows that parsimonious reductions are not sufficient to classify the complexity of enumeration problems, and we present several other reductions in the next chapter.

\chapter{Time Complexity: Total Time, Incremental Time and Delay}
\label{chap:time}
\minitoc

\section{Output Polynomial Time}\label{sec:outputsens}

To measure the complexity of an enumeration problem, we consider the total time taken to compute all solutions. If the total time 
of an algorithm depends on the size of the input only, we speak of \emph{input sensitive} algorithm. The Bron-Kerbosch algorithm~\cite{bron1973algorithm} finds all maximal cliques
in a graph in time $O(2^{n/3})$, with $n$ the number of vertices of the graph. Since there can be as many as $2^{n/3}$ maximal cliques in a graph of $n/3$ disjoint triangles, Bron-Kerbosch algorithm
is optimal with regard to the input size. However, its complexity does not decrease with the number of solutions, or at least there is no proof of such result. Since the number of solutions is much smaller than $2^{n/3}$ for most instances, it is more precise and relevant to bound the total time as a function of the size of the input and of \emph{the output}. By contrast to input sensitive algorithms, such algorithms are called \emph{output sensitive}. Note that bounding the total time by a polynomial in the input size only is not relevant, since it forces the number of solutions to be polynomial. 
Then, to define efficient algorithms, it is natural to bound the total time by a polynomial in the number of solutions and in the size of the input. Algorithms with this complexity are said to be in \textbf{output polynomial time} or sometimes in polynomial total time. 

\begin{definition}[Output polynomial time]
 A problem $\enum{A}\in \EnumP$  is in $\OutputP$  if there is a polynomial $p(x,y)$ and a machine $M$ which solves $\enum{A}$ and such that for all $x$, $T(x,|A(x)|) < p(|x|,|A(x)|)$.
\end{definition}

For instance, if we represent a polynomial by its set of monomials, then classical algorithms for interpolating multivariate polynomials from their values are output polynomial~\cite{zippel1990interpolating,strozecki2010} as they produce the polynomial in time proportional to the number of its monomials, see Chapter~\ref{chap:limits} for more details. 

The classes $\EnumP$ and $\OutputP$ may be seen as analog of $\NP$ and $\P$ for enumeration.
It turns out that their separation is equivalent to the $\P = \NP$ question, since an algorithm 
in $\OutputP$ allows to decide whether there is at least one solution in polynomial time.

\begin{proposition}[Folklore, see~\cite{capelli2019incremental}]\label{prop:output}
 $\OutputP = \EnumP$ if and only if $\P = \NP$.
\end{proposition}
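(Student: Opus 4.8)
The plan is to prove the two implications separately; both are short.

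For the direction $\P = \NP \implies \OutputP = \EnumP$, I would take an arbitrary $\enum{A} \in \EnumP$, fix a polynomial $p$ with $|y| \le p(|x|)$ for all $y \in A(x)$, and run a backtrack (flashlight) search over prefixes of potential solutions. The search relies on the predicate $B(x,w)$ asserting ``some $y \in A(x)$ has $w$ as a prefix''; since $A$ is polynomially balanced and $\test{A} \in \P$, this predicate lies in $\NP$ (a witness being such a $y$, of length at most $p(|x|)$), hence in $\P$ under the hypothesis. Starting from the empty word, at a node $w$ one outputs $w$ whenever $\test{A}(x,w)$ holds and recurses into $w0$ (resp.\ $w1$) whenever $B(x,w0)$ (resp.\ $B(x,w1)$) holds. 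Since $B(x,w)$ certifies that $w$ is, or extends to, a solution, no branch is explored in vain, the search tree has depth at most $p(|x|)$, and each node costs polynomial time; hence the delay, and a fortiori the total time as a function of $|x|$ and $|A(x)|$, is polynomial. Thus $\enum{A} \in \OutputP$.

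For the converse $\OutputP = \EnumP \implies \P = \NP$, fix an arbitrary $L \in \NP$ with polynomial-time verifier $R$ and polynomial $q$, so that $x \in L$ iff there is $y$ with $|y| \le q(|x|)$ and $R(x,y)$. Let $A(x,y)$ hold iff $|y| \le q(|x|)$ and $R(x,y)$; then $A$ is polynomially balanced with $\test{A} \in \P$, so $\enum{A} \in \EnumP$, and $A(x) \ne \emptyset$ exactly when $x \in L$. By hypothesis there is a machine $M$ solving $\enum{A}$ and a polynomial $p$ whose total running time on input $x$ is below $p(|x|, |A(x)|)$. The crucial point is that when $x \notin L$ we have $|A(x)| = 0$, so $M$ halts within $p(|x|, 0)$ steps — a bound in $|x|$ alone — and, having no solution to produce, without ever issuing an $\Output$ instruction. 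This yields a polynomial-time algorithm for $L$: simulate $M$ on $x$ for $p(|x|, 0)$ steps; accept if $M$ issues an $\Output$ instruction, reject if $M$ halts without having issued one, and accept if $M$ is still running at the end of the simulation (for in that case $x$ cannot be outside $L$). Correctness follows from the crucial point together with the fact that $M$, solving $\enum{A}$, cannot halt with empty output when $A(x) \ne \emptyset$. Hence $L \in \P$, and since $L$ was arbitrary, $\P = \NP$.

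I expect the forward direction to be entirely routine — flashlight search with an $\NP$ oracle — so the only genuinely delicate step is the observation underlying the converse: that the output-polynomial time bound degenerates to a polynomial in $|x|$ alone on no-instances of $L$, together with the correct case analysis of the simulation (in particular, that an instance on which $M$ is still running past the time-out must be a yes-instance, and that halting with empty output is possible only on no-instances).
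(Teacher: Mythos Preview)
Your proof is correct and follows essentially the same approach as the paper. The paper does not spell out a full proof but indicates the converse direction with the remark that ``an algorithm in $\OutputP$ allows to decide whether there is at least one solution in polynomial time,'' which is exactly your time-out argument on $p(|x|,0)$; and for the forward direction the paper later uses the same flashlight/binary-partition search with the extension predicate (in fact proving the stronger $\P=\NP \Leftrightarrow \EnumP=\DelayP$).
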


Output polynomial algorithms are useful when all solutions need to be generated, 
for a proof of optimality or to build an exhaustive library of objects. However,
if we want to generate only a few solutions, because the whole process is too long, 
an output polynomial algorithm does not offer any guarantee, since all solutions can be produced
at the end of the computation, see Figure~\ref{fig:outputpoly}.

\begin{figure}
\label{fig:outputpoly}
\begin{center}
\begingroup%
  \makeatletter%
  \providecommand\color[2][]{%
    \errmessage{(Inkscape) Color is used for the text in Inkscape, but the package 'color.sty' is not loaded}%
    \renewcommand\color[2][]{}%
  }%
  \providecommand\transparent[1]{%
    \errmessage{(Inkscape) Transparency is used (non-zero) for the text in Inkscape, but the package 'transparent.sty' is not loaded}%
    \renewcommand\transparent[1]{}%
  }%
  \providecommand\rotatebox[2]{#2}%
  \newcommand*\fsize{\dimexpr\f@size pt\relax}%
  \newcommand*\lineheight[1]{\fontsize{\fsize}{#1\fsize}\selectfont}%
  \ifx\svgwidth\undefined%
    \setlength{\unitlength}{465.41879248bp}%
    \ifx\svgscale\undefined%
      \relax%
    \else%
      \setlength{\unitlength}{\unitlength * \real{\svgscale}}%
    \fi%
  \else%
    \setlength{\unitlength}{\svgwidth}%
  \fi%
  \global\let\svgwidth\undefined%
  \global\let\svgscale\undefined%
  \makeatother%
  \begin{picture}(1,0.23835109)%
    \lineheight{1}%
    \setlength\tabcolsep{0pt}%
    \put(0,0){\includegraphics[width=\unitlength,page=1]{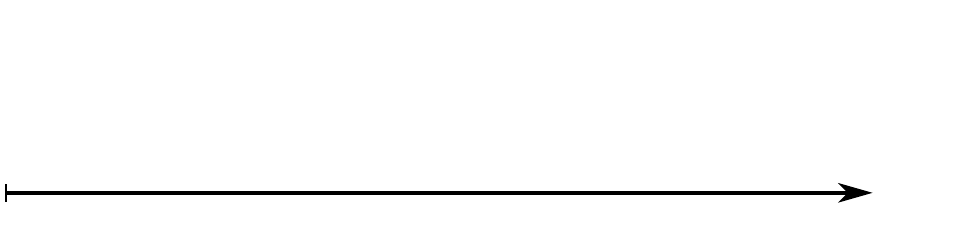}}%
    \put(0.9110806,0.0312695){\color[rgb]{0,0,0}\makebox(0,0)[lt]{\lineheight{0}\smash{\begin{tabular}[t]{l}$time$\end{tabular}}}}%
    \put(0,0){\includegraphics[width=\unitlength,page=2]{total.pdf}}%
    \put(0.23544903,0.21864857){\color[rgb]{0,0,0}\makebox(0,0)[lt]{\lineheight{0}\smash{\begin{tabular}[t]{l}$|A(x)|$ solutions in time $T(x,|A(x)|) \leq p(|x|,|A(x)|)$\end{tabular}}}}%
    \put(-0.00169958,0.00036508){\color[rgb]{0,0,0}\makebox(0,0)[lt]{\lineheight{1.25}\smash{\begin{tabular}[t]{l}0\end{tabular}}}}%
  \end{picture}%
\endgroup%

\end{center}
\caption{An enumeration in output polynomial time. A red bar represents the output of a solution at some point in time of the enumeration.}
\end{figure}

\section{Incremental Polynomial Time}\label{sec:inc}

When generating all solutions is too long, we want to be able to generate at least some.
To do so, we can fix an arbitrary parameter $k$ and ask to enumerate only $k$ solutions. Sometimes,
an order is fixed on the solutions and the problem is to find the \emph{$k$-best} ones. The $k$-shortest path and the $k$-minimum spanning tree problems are typical $k$-best problems with many applications as explained in the survey~\cite{eppstein2015k}. In the enumeration paradigm, we are slightly more general:
we do not require a parameter $k$ (either fixed or given as input), but want to measure and bound the time used by an algorithm to produce any given number of solutions. This dynamic version of the total time is called \emph{incremental time}. Given an enumeration problem $A$, we say that a machine $M$ solves $\enum{A}$ in incremental time $f(k)g(n)$ and preprocessing $h(n)$ if on every input $x$ of size $n$, we have:

\begin{itemize}
	\item $T(x,1) \leq h(n)$ (preprocessing)
	\item for all $k \leq |A(x)|$, $T(x,k) - T(x,1) \leq f(k)g(n)$ (incremental time)
\end{itemize}

Separating the preprocessing from the rest of the computation allows to set up expensive data structures or to simplify the input at the beginning of an enumeration.
In that way, classes with small incremental time contain more interesting problems and we better represent the time to enumerate many solutions, which otherwise may be artificially hidden by the preprocessing time. When the preprocessing is dominated by the incremental time, that is $h(n) \leq f(1)g(n)$, we omit it.

\begin{definition}[Incremental polynomial time]
Let $a \geq 1$. A problem $\enum{A} \in \EnumP$ is in $\IncP_a$ if there is a constant $b$, a polynomial $p$ and a machine $M$ which solves $\enum{A}$ on instances of size $n$ in incremental time $O(k^an^b)$ and preprocessing time $O(p(n))$. Moreover, we define $\IncP = \bigcup_{a \geq 1} \IncP_a$.
\end{definition}

Note that allowing or not arbitrary polynomial time preprocessing does not modify the class $\IncP$, since this preprocessing can always be interpreted as a polynomial time before outputing the first solution.
We require that $\enum{A} \in \EnumP$ so that $\enum{A} \in \IncP$, however there are a few examples of problems, whose solutions cannot be tested in polynomial time, but with an algorithm in incremental polynomial time.
A clause sequence is the set of clauses satisfied by some assignment of a CNF formula, hence deciding whether a formula admits the clause sequence with all clauses is $\NP$-complete. However, there is a clever 
incremental time algorithm to generate all clause sequences for $k$-CNFs~\cite{berczi2021generating}. It relies on finding enough simple to enumerate solutions, to have the time to divide the problem into many simpler subformulas with few sequences in common.

Let $A$ be a binary predicate, $\mathsf{AnotherSol_A}$ is the search problem defined as, given $x$ and a set $\mathcal{S}$, find $y \in A(x) \setminus \mathcal{S}$ or answer that $ A(x) \subseteq \mathcal{S}$ (see~\cite{phdstrozecki,creignou2017complexity}). The problems in $\IncP$ are the ones with a polynomial time search problem, as stated in the following proposition. 

\begin{proposition}[Folklore, see~\cite{capelli2019incremental}]
\label{prop:anothersolincp}
  Let $A$ be a predicate such that $\enum{A} \in \EnumP$. $\mathsf{AnotherSol_A}$ is in $\FP$ if and only if $\enum{A}$ is in $\IncP$.
\end{proposition}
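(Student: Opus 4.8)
The plan is to prove the two implications separately, each by a straightforward simulation. Throughout, write $n=|x|$, let $q$ be a polynomial witnessing that $A$ is polynomially balanced (so $|y|\le q(n)$ for every $y\in A(x)$), and recall that $\test{A}\in\P$ since $\enum{A}\in\EnumP$.

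First I would show that if $\mathsf{AnotherSol_A}\in\FP$ then $\enum{A}\in\IncP$. The algorithm maintains a set $\mathcal{S}$ of the solutions already produced, stored in a dictionary (a trie, using the RAM's indirection) supporting polynomial-time insertion and lookup, initialized to $\emptyset$. It repeatedly runs the $\FP$ procedure for $\mathsf{AnotherSol_A}$ on $(x,\mathcal{S})$: if this returns a solution $y$, it outputs $y$, inserts $y$ into $\mathcal{S}$, and continues; if it returns ``$A(x)\subseteq\mathcal{S}$'', it halts. An easy induction shows $\mathcal{S}$ is at every step exactly the set of solutions output so far, so no solution is repeated and the enumeration is correct and complete. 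For the timing, after $i-1$ solutions have been produced, polynomial balance gives $|\mathcal{S}|\le (i-1)q(n)$, hence the encoding of $(x,\mathcal{S})$ has size $O(i\cdot n^{c})$ for some constant $c$, so the $i$-th call costs $O\big((i\,n^{c})^{d}\big)=O(i^{d}n^{cd})$, where $d$ bounds the degree of the running time of the $\FP$ procedure. Summing over the first $k$ calls gives $T(x,k)-T(x,1)=O(k^{d+1}n^{cd})$, while $T(x,1)$ (the call on $(x,\emptyset)$) is polynomial in $n$; thus $\enum{A}\in\IncP_{d+1}\subseteq\IncP$.

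Next I would prove the converse: if $\enum{A}\in\IncP$ then $\mathsf{AnotherSol_A}\in\FP$. Fix a machine $M$ solving $\enum{A}$ with incremental time $f(k)g(n)$ and polynomial preprocessing $h(n)$, with $f,g$ polynomials. Given an instance $(x,\mathcal{S})$ with $m:=|\mathcal{S}|$, run $M$ on $x$ and inspect each solution it outputs, testing membership in $\mathcal{S}$ (polynomial time, since each output has size $\le q(n)$). As soon as $M$ outputs some $y\notin\mathcal{S}$, return $y$, which indeed lies in $A(x)\setminus\mathcal{S}$. If instead $M$ halts with all its outputs lying in $\mathcal{S}$, return ``$A(x)\subseteq\mathcal{S}$'', which is correct because $M$ enumerates all of $A(x)$ without repetition. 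For the running time: if the first $j-1$ outputs of $M$ all lie in $\mathcal{S}$, they are $j-1$ distinct elements of $A(x)\cap\mathcal{S}$, so $j-1\le|A(x)\cap\mathcal{S}|\le m$; hence either $M$ halts after producing at most $m$ distinct solutions, or it produces a solution outside $\mathcal{S}$ among its first $m+1$ outputs. In both cases $M$ runs for time at most $h(n)+f(m+1)g(n)$, plus at most $m+1$ membership tests; since $f,g,h$ are polynomials and $m$ is bounded by the size of the input $(x,\mathcal{S})$, the total is polynomial in the input size.

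The one genuinely delicate point is the use of polynomial balance in the forward direction. If solutions were not length-bounded, feeding them back as part of $\mathcal{S}$ into iterated $\FP$ calls could make their sizes grow at a doubly-exponential rate in the number of solutions found (a degree-$d\ge 2$ polynomial applied $i$ times to $n$ yields roughly $n^{d^{i}}$), which would wreck the incremental-time bound; so the hypothesis $\enum{A}\in\EnumP$ is truly needed there. In the converse direction it costs nothing, as $\IncP\subseteq\EnumP$ by definition. Everything else is routine bookkeeping about RAM dictionaries and summation of polynomials.
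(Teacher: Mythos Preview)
Your proof is correct and follows the standard folklore argument; the paper does not supply its own proof of this proposition (it is merely stated with a citation), so there is nothing to compare against beyond noting that your two simulations are exactly the intended ones. One tiny notational slip: when you write $|\mathcal{S}|\le (i-1)q(n)$ you mean the \emph{encoding length} of $\mathcal{S}$, not its cardinality (which is $i-1$); the bound and the ensuing complexity analysis are of course unaffected.
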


Recall that the \textbf{delay} of producing the $i$th solution by some algorithm, is the time between the production of two consecutive solutions, that is, for $1 \leq k < |A(x)|$, $|T(x,k+1) - T(x,k)|$.
The class $\IncP$ is usually defined as the class of problems solvable by an algorithm with a delay polynomial in the number of already generated solutions and the size of the input.

\begin{definition}[Usual definition of incremental polynomial time]
Let $a \geq 0$. A problem $\enum{A} \in \EnumP$ is in $\UIncP_a$ if there is a constant $b$, a polynomial $p$ and a machine $M$ which solves  $\enum{A}$ on instances of size $n$, with preprocessing time $O(p(n))$ and such that $|T(x,k+1) - T(x,k)| \in O(k^an^b)$.
 \end{definition}
 
This alternative definition is motivated by saturation algorithms, which produce solutions by applying some polynomial time rules to enrich a set of solutions until saturation. There are many saturation algorithms, for instance computing the accessible vertices from a source in a graph by flooding, computing a finite group from its generators, enumerating the circuits of matroids~\cite{khachiyan2005complexity} or computing a closure by set operations~\cite{mary2016efficient}.
\begin{example}
Let $\textsc{Union}$ be the binary predicate such that $\textsc{Union}(x,y)$ is true if and only if $x$ is a set $\{s_1,\dots,s_m\}$ of subsets of $[n]$ and 
if $y \subseteq [n]$ can be obtained by union of elements in $x$, i.e. there is $I \subseteq [m]$, such that $y = \cup_{i \in I} s_i$. 

$\enum{\textsc{Union}}$ can be easily solved by a saturation algorithm, where the production rule is the union of two solutions and the set of solutions is initialized by
$x$. As an example, consider $x = \{\{1,2\},\{2,3\},\{1,4\},\{1,3,4\}\}$. The production rule successively adds $\{1,2\} \cup \{2,3\}$, $\{1,2\} \cup \{1,4\}$, $\{1,2\} \cup \{1,3,4\}$ then the algorithm stops since no union of the obtained solutions gives a new solution. This algorithm proves that $\enum{\textsc{Union}} \in \IncP_2$, since for each new solution output we need to check its union with all other solutions.
\end{example}

With our definition of incremental polynomial time, we better capture the fact that investing more time guarantees more solutions to be output, which is a bit more general at first sight than bounding the delay because the time between two solutions is not necessarily regular. It turns out that these classes are the same by \emph{amortization}, that is storing the output solutions in a queue and outputing them regularly from the queue.

\begin{proposition}[Folklore, see~\cite{capelli2019incremental}]\label{prop:UInca}
For all $a \geq 0$,  $\IncP_{a+1} = \UIncP_{a}$.
\end{proposition}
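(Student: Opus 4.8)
The plan is to prove the two inclusions separately by explicit simulation, using queue-based amortization in one direction and a crude summation bound in the other. Throughout, fix an instance $x$ of size $n$ and write $s = |A(x)|$.

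First I would show $\UIncP_a \subseteq \IncP_{a+1}$. Suppose $M$ witnesses $\enum{A} \in \UIncP_a$, so after polynomial preprocessing $p(n)$ the delay between the $k$th and $(k+1)$th solution is $O(k^a n^b)$. Then $T(x,k) - T(x,1) = \sum_{j=1}^{k-1} \bigl(T(x,j+1) - T(x,j)\bigr) = \sum_{j=1}^{k-1} O(j^a n^b) = O(k^{a+1} n^b)$, using $\sum_{j<k} j^a = O(k^{a+1})$. Hence $M$ itself runs in incremental time $O(k^{a+1} n^b)$ with polynomial preprocessing, so $\enum{A} \in \IncP_{a+1}$. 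This direction is essentially the routine observation already alluded to before the statement (``the delay is an upper bound to the average delay'').

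The substantive direction is $\IncP_{a+1} \subseteq \UIncP_a$. Let $M$ witness $\enum{A} \in \IncP_{a+1}$: after preprocessing $O(p(n))$, we have $T(x,k) - T(x,1) \leq c\,k^{a+1} n^b$ for all $k \le s$. The idea is the standard amortization trick: run $M$ in the background, intercept each solution it outputs and push it onto a FIFO queue instead of emitting it, and have the wrapper machine $M'$ emit solutions from the front of the queue on a fixed schedule. Concretely, after preprocessing, $M'$ alternates between simulating a fixed number of steps of $M$ and, every time a budget of $\Theta((k+1)^a n^b)$ simulated steps has elapsed since the last emission (where $k$ is the number of solutions $M'$ has already emitted), popping and outputting one solution from the queue. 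The key invariant to maintain and verify by induction on $k$ is that when $M'$ is scheduled to emit its $(k+1)$th solution, the queue is nonempty — i.e. $M$ has by then already produced at least $k+1$ solutions. This holds because the total number of $M$-steps simulated by the time $M'$ wants to emit solution $k+1$ is $\Theta\bigl(\sum_{j\le k} (j+1)^a n^b\bigr) = \Theta(k^{a+1} n^b)$, which (choosing the constants appropriately) is at least $c(k+1)^{a+1} n^b \ge T(x,k+1) - T(x,1)$, so by that time $M$ has output $k+1$ solutions. One must also handle termination cleanly: once $M$ halts, $M'$ flushes the remaining queue, still respecting the delay bound since no fresh $M$-steps are needed and each pop is $O(1)$ plus the cost of the $\Output$ instruction. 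The per-step emission delay of $M'$ is then $O((k+1)^a n^b) = O(k^a n^b)$ as required (subsuming the $O(1)$ queue operations and noting, via Proposition on polynomial balance from $\EnumP$, that outputting a solution costs polynomial time), with preprocessing $O(p(n))$ plus the setup of the queue, so $\enum{A} \in \UIncP_a$.

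The main obstacle is the bookkeeping in this second simulation: one must pin down the emission schedule precisely enough that the induction on the queue-nonemptiness invariant goes through with honest constants, and one must be careful that the RAM cost model (arithmetic on the step-counter $k$, which may be only polynomially large, costs $O(\log k)$, and the queue lives in memory addressed by unbounded integers) does not inflate the claimed delay beyond $O(k^a n^b)$; since $\log k$ and the queue-pointer arithmetic are all polynomially bounded in $n$, they are absorbed into the $n^b$ factor. A minor subtlety worth a sentence is the edge case $a = 0$: then $\IncP_1$ is incremental linear time and the proposition asserts it equals $\UIncP_0$, polynomial delay — the same argument applies with a constant-size budget per emission, and this special case is exactly the statement used in Chapter~\ref{chap:space}.
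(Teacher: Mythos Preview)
Your proof is correct and follows exactly the approach the paper indicates: the easy inclusion by summing the per-step delays, and the substantive inclusion by the queue-based amortization (``storing the output solutions in a queue and outputting them regularly from the queue''). The paper does not spell out the details beyond that one sentence, so your write-up is a faithful expansion of the intended argument.
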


The amortization method plus a data structure to detect duplicates such as a trie, 
allow to deal with bounded repetitions of solutions and is often used to simplify the design of enumeration algorithms,
see for instance the well-named Cheater's Lemma~\cite{DBLP:conf/pods/CarmeliK19}.

Since $\IncP$ is characterized by the search problem $\mathsf{AnotherSol_A}$, it can be related to the class $\TFNP$ introduced in~\cite{megiddo1991total}. A problem in $\TFNP$ is a polynomially balanced polynomial time predicate $A$ such that for all $x$, $A(x)$ is not empty. An algorithm solving a problem $A$ of $\TFNP$ on input $x$ outputs one element of $A(x)$. The class $\TFNP$ can also be seen as the functional version of $\NP \cap \coNP$.
It turns out that the separation of $\IncP$ and $\OutputP$ is equivalent to the separation of $\TFNP$ and $\FP$.
The proof is based on modifying $\mathsf{AnotherSol_A}$ into a problem of $\TFNP$, so that, assuming $\TFNP = \FP$, a solution to this problem is found which can then be used to solve $\mathsf{AnotherSol_A}$. Conversely, a $\TFNP$ problem can be turned into an enumeration problem in $\OutputP$, by padding.

\begin{proposition}[\cite{capelli2019incremental}]\label{prop:sepincp}
 $\TFNP = \FP $ if and only if $\IncP = \OutputP$.
\end{proposition}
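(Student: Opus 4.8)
The plan is to prove the two implications separately, in both cases going through the characterization of $\IncP$ via the search problem $\mathsf{AnotherSol_A}$ (Proposition~\ref{prop:anothersolincp}) and the characterization of $\OutputP$ vs $\EnumP$ being tied to $\P$ vs $\NP$ only loosely, so the real work is a direct construction. For the forward direction, assume $\TFNP = \FP$ and take $\enum{A} \in \OutputP$; I want to show $\enum{A} \in \IncP$, and by Proposition~\ref{prop:anothersolincp} it suffices to solve $\mathsf{AnotherSol_A}$ in polynomial time, i.e. given $x$ and a set $\ccS$, either produce $y \in A(x) \setminus \ccS$ or certify $A(x) \subseteq \ccS$. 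First I would run the $\OutputP$ machine $M$ for $\enum{A}$ on $x$ for $p(|x|, |\ccS|+1)$ steps, where $p$ is its output-polynomial bound. If within that budget $M$ outputs a solution not in $\ccS$, we are done. If $M$ halts within the budget having produced only solutions in $\ccS$, then $A(x) \subseteq \ccS$ and we answer accordingly. The only remaining case is that $M$ is still running after $p(|x|,|\ccS|+1)$ steps and has so far produced only elements of $\ccS$: by the output-polynomial bound this forces $|A(x)| > |\ccS|$, hence $A(x) \setminus \ccS \neq \emptyset$, so a new solution is \emph{guaranteed to exist}. This is exactly the point where we invoke $\TFNP = \FP$: encode ``find an element of $A(x) \setminus \ccS$'' as a $\TFNP$ problem — the predicate ``$y \in A(x) \setminus \ccS$'' is polynomially balanced (since $A$ is) and polynomial-time checkable (since $\test{A} \in \P$ and membership in $\ccS$ is testable), and on the instances we have just isolated it is total — so by assumption it is solvable in $\FP$, yielding the desired $y$. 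To make this a genuine $\TFNP$ problem we pad: on inputs $(x,\ccS)$ that do \emph{not} satisfy ``$M$ still running after $p(|x|,|\ccS|+1)$ steps with output $\subseteq \ccS$'' we declare any trivial string a solution, so the predicate is total on all inputs while agreeing with ``new solution of $A$'' on the hard inputs.

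For the converse, assume $\IncP = \OutputP$ and take a $\TFNP$ problem given by a predicate $B$: $B$ is polynomially balanced, $\test{B} \in \P$, and $B(x) \neq \emptyset$ for all $x$. I want an $\FP$ algorithm outputting some $y \in B(x)$. The idea is to turn $B$ into an enumeration problem that is automatically in $\OutputP$ by padding it with many cheap solutions, then use $\IncP = \OutputP$ to get an incremental (hence polynomial-delay after amortization) algorithm whose \emph{first} output already solves $B$. Concretely, define a predicate $A$ on pairs (or on $x$ together with a counter) so that $A(x)$ consists of one ``real'' solution block $\{(0,y) : y \in B(x)\}$ together with a polynomial-in-$|x|$ number of ``dummy'' solutions $\{(1,i) : 1 \le i \le q(|x|)\}$; then $|A(x)| \ge q(|x|)$ always, and since $A(x)$ has at least polynomially many solutions and each is checkable in polynomial time, a brute-force enumerator (try all strings up to the polynomial length bound) runs in time polynomial in $|x| + |A(x)|$, so $\enum{A} \in \OutputP$. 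By hypothesis $\enum{A} \in \IncP$, so $\mathsf{AnotherSol_A}$ is in $\FP$ by Proposition~\ref{prop:anothersolincp}; now call it with $\ccS$ equal to the set of all dummy solutions $\{(1,i) : 1 \le i \le q(|x|)\}$ (a polynomial-size explicit set). Since $B(x) \neq \emptyset$, $A(x) \setminus \ccS = \{(0,y): y \in B(x)\} \neq \emptyset$, so $\mathsf{AnotherSol_A}$ must return some $(0,y)$ with $y \in B(x)$ in polynomial time — exactly an $\FP$ solution to $B$.

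The two constructions need a little care so that the auxiliary predicates are genuinely in the right classes: in the forward direction I must check that the $\TFNP$ instance depends polynomially on $(x,\ccS)$ — the simulation of $M$ for $p(|x|,|\ccS|+1)$ steps is polynomial, and $\test{A}\in\P$ handles checkability — and in the converse direction I must verify that padding keeps $\test{A} \in \P$ and keeps $A$ polynomially balanced, and that the number of dummies $q(|x|)$ is large enough to force output-polynomiality of brute force but small enough to still be listed as an explicit $\ccS$. The main obstacle, and the step I would spend the most effort on, is the forward direction: correctly packaging ``find a new solution, which is promised to exist'' as a total $\TFNP$ predicate. The subtlety is that $\mathsf{AnotherSol_A}$ is \emph{not} total (when $A(x) \subseteq \ccS$ it has no witness), so one cannot feed it directly to a $\TFNP = \FP$ oracle; the fix is precisely to first run the $\OutputP$ machine long enough to \emph{detect} which case we are in, and only in the promised case to invoke totality, with padding on all other inputs to restore totality globally. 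Everything else — the amortization turning incremental time into delay, the reduction between $\IncP$ and $\mathsf{AnotherSol_A}$ — is quoted from the propositions already established.
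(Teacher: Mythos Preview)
Your forward direction is correct and matches the paper's approach: turn $\mathsf{AnotherSol_A}$ into a genuine $\TFNP$ problem by first simulating the $\OutputP$ machine long enough to detect whether a new solution is guaranteed, padding the remaining instances with a trivial witness to restore totality.

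The converse has a genuine gap. You claim that adding a \emph{polynomial} number $q(|x|)$ of dummy solutions makes brute-force enumeration output-polynomial, but brute force over all strings of length up to the polynomial bound costs $2^{\mathrm{poly}(|x|)}$, while $|A(x)| = |B(x)| + q(|x|)$ may itself be only polynomial in $|x|$ (nothing forces $|B(x)|$ to be large). In fact, if your padded problem were in $\OutputP$ then so would $\enum{B}$ be (just discard the polynomially many dummies), and since $|B(x)| \ge 1$ its first solution would appear in polynomial time, solving $B$ in $\FP$ directly --- so your construction cannot lie in $\OutputP$ unless the conclusion already holds. The tension you yourself flag (``large enough for output-polynomiality, small enough to list as $\ccS$'') cannot be resolved with additive dummies.

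The fix is to pad \emph{multiplicatively}: set $A(x) = \{(y,z) : y \in B(x),\ z \in \Sigma^{p(|x|)}\}$, where $p$ bounds the length of $B$-witnesses. Then $|A(x)| = |B(x)| \cdot 2^{p(|x|)} \ge 2^{p(|x|)}$, so the obvious enumerator (loop over all candidate $y$, test $y \in B(x)$ in polynomial time, and on success output $(y,z)$ for every $z$) runs in time $O(|A(x)| \cdot \mathrm{poly}(|x|))$, giving $\enum{A} \in \OutputP$. By hypothesis $\enum{A} \in \IncP$, so its very first solution $(y,z)$ appears in time polynomial in $|x|$, and $y \in B(x)$ is the desired $\FP$ witness --- no exponential $\ccS$ needed.
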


The class $\OutputP$ is conditionally different from $\EnumP$ as shown in Proposition~\ref{prop:output}
and it is also strictly larger than $\IncP$ by Proposition~\ref{prop:sepincp}. However, no natural problem is known to be inside $\OutputP$ but not in $\IncP$.

\begin{openproblem}
Find a natural problem in $\OutputP$ but not in $\IncP$. It should be a problem for which $\mathsf{AnotherSol}$ is $\TFNP$-hard because of Proposition~\ref{prop:sepincp}, but the decision version of $\mathsf{AnotherSol}$, which asks whether another solution exists, should be in $\P$.  
One candidate problem is the enumeration of minimal dominating sets in $K_t$-free graphs, which is in $\OutputP$~\cite{bonamy2020enumerating} but is not known to be in $\IncP$.
\end{openproblem}

We can get a finer separation of classes than in Proposition~\ref{prop:sepincp}: $(\IncP_a)_{a\in \mathbb{R}^+}$ form a strict hierarchy inside $\IncP$.
We need to assume some complexity hypothesis since $\P = \NP$ implies $\IncP = \IncP_1$. Because we need 
to distinguish between different polynomial complexities as in fine grained complexity~\cite{williams2018some}, we also rely on the Exponential Time Hypothesis ($\ETH$). It states that there exists $\epsilon > 0$ such that there is no algorithm for $\kSAT{3}$ in time $\tilde{O}(2^{\epsilon n})$ where $n$ is the number of variables of the formula and $\tilde{O}$ means that factors in $n^{O(1)}$ are hidden. 

\begin{theorem}[\cite{capelli2019incremental}]\label{th:hierarchy}
  If $\ETH$ holds, then $\IncP_a \subsetneq \IncP_b$ for all $a<b$.
\end{theorem}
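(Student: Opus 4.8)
The plan is to establish the strict hierarchy by a padding argument that separates complexities at different polynomial levels, conditioned on $\ETH$. First I would set up a ``hard core'' enumeration problem whose enumeration complexity is tightly controlled: start from $\kSAT{3}$ on $n$ variables, and build an enumeration problem $\enum{A}$ whose instance encodes a $3$-CNF $\phi$ together with a padding parameter, so that $A$ has a controlled number of trivial ``dummy'' solutions (say roughly $N$ of them, easy to list with tiny delay) plus possibly one extra ``hard'' solution that exists if and only if $\phi$ is satisfiable and can only be found by essentially solving $\kSAT{3}$. The trick is to choose the amount of padding (hence $N$) as a function of $n$ so that an $\IncP_a$ algorithm, which is allowed time $O(k^a n^b)$ after $k$ solutions, has by the time $k = N$ enough time to brute-force satisfiability of $\phi$, whereas an $\IncP_b$ algorithm with $b < a$ does \emph{not}. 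Since every instance is a genuine member of $\EnumP$ (the dummy solutions are polynomially checkable and we can always test a candidate assignment), the problem sits in $\EnumP$, and the real content is locating it precisely between $\IncP_b$ and $\IncP_a$.

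The key steps, in order, are: (1) Fix $a < b$ and a rational interpolation parameter; define the padded problem $\Pad{}$ so that on a $3$-CNF $\phi$ with $n$ variables, the instance has size $m = \Theta(2^{\epsilon n / c})$ for a suitable constant $c$ depending on $a,b$, and $A$ on this instance consists of $m$ trivially enumerable solutions, plus the lexicographically first satisfying assignment of $\phi$ if one exists. (2) Upper bound: give an $\IncP_b$ algorithm for this problem — output the $m$ dummy solutions one per polynomial-delay step, and in parallel run the $2^n$-time brute force for satisfiability; by the time $k$ dummy solutions have been emitted we have spent $\Theta(k^1 \cdot \mathrm{poly}(m))$ time, and after all $m$ of them we have spent $\mathrm{poly}(m) = \mathrm{poly}(2^{\epsilon n/c})$, which we arrange to dominate $2^n$, so the hard solution is ready to be output last — this shows membership in $\IncP_b$ (indeed in $\IncP_1$ if parameters are generous, but the point is to calibrate $c$ so it is in $\IncP_b$ but the calibration for the lower bound rules out $\IncP_b' $ for $b' < b$; more precisely one does this for a \emph{family} of parameters interpolating densely). (3) Lower bound: suppose the problem were in $\IncP_{a}$ with $a < b$; then after emitting the (at most $m$) dummy solutions the algorithm has used only $O(m^{a} n^{O(1)})$ time, and it must then either produce the hard solution or halt, thereby deciding satisfiability of $\phi$ in time $O(m^{a}\,\mathrm{poly}(n)) = O(2^{\epsilon a n / c}\,\mathrm{poly}(n))$; choosing $c$ with $a/c < 1$ this is $\tilde O(2^{\delta n})$ for some $\delta < 1$ smaller than the $\ETH$ threshold, a contradiction. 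Tuning $c$ between the two regimes for each pair $a<b$ gives the strict inclusion $\IncP_a \subsetneq \IncP_b$.

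The main obstacle I expect is the \emph{bookkeeping of the amortization}: the definition of $\IncP_a$ bounds $T(x,k) - T(x,1)$ cumulatively rather than per-step, and there is a polynomial preprocessing allowance $O(p(n))$ that is ``free'' — so one must be careful that the preprocessing cannot already brute-force $\phi$ (this forces $p$ to be genuinely smaller than $2^n$, i.e. the padding size $m$ must exceed any fixed polynomial in $n$ used for preprocessing, which is automatic since $m$ is exponential in $n$, but the constants must be threaded through). A related subtlety is that $\IncP_a$ for real $a$ (the statement quantifies $a \in \mathbb{R}^+$, or at least $a < b$ arbitrary reals) requires the padding exponent to vary continuously, so the separation must be proved for every rational pivot between $a$ and $b$ and the density of rationals invoked; and one must confirm that the constructed problem genuinely lies in $\EnumP$ (polynomially balanced, $\test{A} \in \P$), which is where restricting the hard part to the single lexicographically-least satisfying assignment — polynomially checkable given the assignment and $\phi$ — is essential. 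Modulo these calibrations, the argument is the standard $\ETH$-based ``pad to move the brute-force threshold'' technique, and I would cite Proposition~\ref{prop:output} and the characterization via $\mathsf{AnotherSol}$ (Proposition~\ref{prop:anothersolincp}) only to frame why membership is easy, not in the separation itself.
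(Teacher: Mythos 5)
There is a genuine gap at the heart of your step (3): a single application of the padding argument cannot contradict $\ETH$. With your calibration, membership in $\IncP_b$ forces the number of dummy solutions $m$ to satisfy roughly $m^b \geq 2^n$ (otherwise the interleaved brute force does not fit into the incremental allowance by the time the dummies are exhausted), so the best the $\IncP_a$ assumption can yield is a $\kSAT{3}$ algorithm in time about $m^a\,\mathrm{poly}(n) \geq 2^{(a/b)n}$. An algorithm running in time $\tilde{O}(2^{\delta n})$ for a \emph{fixed} $\delta<1$ does not contradict $\ETH$, which only asserts the existence of some unknown $\epsilon>0$ below which no algorithm exists; to refute it you must solve $\kSAT{3}$ in time $\tilde{O}(2^{\epsilon n})$ for \emph{every} $\epsilon>0$, and your one-shot calibration cannot push $\delta$ below $a/b$. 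This is exactly where the paper's proof does more work: from the assumed collapse $\IncP_a=\IncP_b$ it first extracts a SAT algorithm ``slightly faster than brute force'', then uses that improved algorithm to re-pad with fewer trivial solutions (equivalently, to derive a larger collapse), multiplying the exponent by $a/b$ at each round; iterating drives the exponent below any fixed $\epsilon$, and only then is $\ETH$ contradicted. Your proposal is missing this amplification/iteration entirely, and without it the argument stops short of a contradiction.

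Two secondary points would also need repair. First, as written you pad the \emph{instance} to size $m=\Theta(2^{\epsilon n/c})$; but then $2^n$ is polynomial in $|x|$, so the brute force fits into the polynomial preprocessing (and into the $\mathrm{poly}(|x|)$ factor of the incremental-time bound), making the problem land in $\DelayP\subseteq\IncP_a$ and killing the separation — the padding must inflate only the \emph{number of solutions} while the instance stays of size $\mathrm{poly}(n)$, which is what the paper's construction (exponentially many trivial solutions plus exponentially repeated satisfying assignments) does. Second, taking the lexicographically first satisfying assignment as the unique hard solution does not keep the problem in $\EnumP$: checking that an assignment is lex-minimal among satisfying assignments is $\coNP$-hard, so $\test{A}\notin\P$ under standard assumptions; the correct move is to include \emph{all} satisfying assignments (with repetitions) as solutions, for which $\test{A}$ is trivially polynomial.
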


The theorem is proved by considering a modified version of SAT with a carefully chosen number of trivial solutions and an exponential repetition of the real ones. Then, if $\IncP_a = \IncP_b$ we can find a solution of any SAT formula slightly faster than brute force by using the incremental algorithm, which in turn proves a larger collapse of the $\IncP_a$ classes. Applying this trick repeatedly proves the theorem. With a similar but simpler proof, the same strict hierarchy for $\OutputP$ can be obtained.

\begin{openproblem}
Prove the converse of Theorem~\ref{th:hierarchy} or weaken the $\ETH$ hypothesis. 
\end{openproblem}

In several applications, in particular when enumerating the result of a query over a database, it is asked to output the 
solutions regularly, so to process those solutions on the fly without storing them nor pausing the enumeration process.  
Incremental polynomial time guarantes only a limited form of regularity as illustrated in Figure~\ref{fig:incremental}, which does not seem
sufficient.

\begin{figure}
\label{fig:incremental}
\begin{center}
\begingroup%
  \makeatletter%
  \providecommand\color[2][]{%
    \errmessage{(Inkscape) Color is used for the text in Inkscape, but the package 'color.sty' is not loaded}%
    \renewcommand\color[2][]{}%
  }%
  \providecommand\transparent[1]{%
    \errmessage{(Inkscape) Transparency is used (non-zero) for the text in Inkscape, but the package 'transparent.sty' is not loaded}%
    \renewcommand\transparent[1]{}%
  }%
  \providecommand\rotatebox[2]{#2}%
  \newcommand*\fsize{\dimexpr\f@size pt\relax}%
  \newcommand*\lineheight[1]{\fontsize{\fsize}{#1\fsize}\selectfont}%
  \ifx\svgwidth\undefined%
    \setlength{\unitlength}{465.41879248bp}%
    \ifx\svgscale\undefined%
      \relax%
    \else%
      \setlength{\unitlength}{\unitlength * \real{\svgscale}}%
    \fi%
  \else%
    \setlength{\unitlength}{\svgwidth}%
  \fi%
  \global\let\svgwidth\undefined%
  \global\let\svgscale\undefined%
  \makeatother%
  \begin{picture}(1,0.33220006)%
    \lineheight{1}%
    \setlength\tabcolsep{0pt}%
    \put(0,0){\includegraphics[width=\unitlength,page=1]{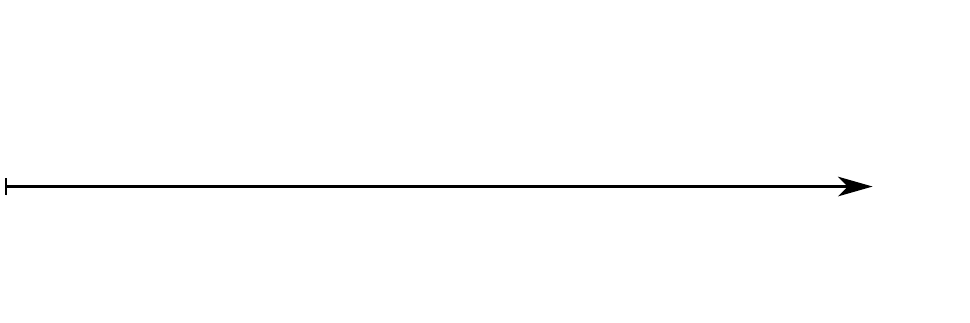}}%
    \put(0.91108059,0.13160861){\color[rgb]{0,0,0}\makebox(0,0)[lt]{\lineheight{0}\smash{\begin{tabular}[t]{l}$time$\end{tabular}}}}%
    \put(0,0){\includegraphics[width=\unitlength,page=2]{incremental.pdf}}%
    \put(0.00452229,0.31261085){\color[rgb]{0,0,0}\makebox(0,0)[lt]{\lineheight{0}\smash{\begin{tabular}[t]{l}$k$ solutions in time $O(k^an^b)$\end{tabular}}}}%
    \put(-0.00169958,0.1007042){\color[rgb]{0,0,0}\makebox(0,0)[lt]{\lineheight{1.25}\smash{\begin{tabular}[t]{l}0\end{tabular}}}}%
    \put(0,0){\includegraphics[width=\unitlength,page=3]{incremental.pdf}}%
    \put(0.36059798,0.00536311){\makebox(0,0)[lt]{\lineheight{1.25}\smash{\begin{tabular}[t]{l}delay in $O(k^{a-1}n^b)$\end{tabular}}}}%
  \end{picture}%
\endgroup%

\end{center}
\caption{An enumeration in incremental polynomial time. Delay guaranteed when Proposition~\ref{prop:UInca} is used.}
\end{figure}

\section{Polynomial Delay}\label{sec:delay}


If we want to have a regular enumeration process, then the delay between solutions should not change 
along the enumeration. The \textbf{delay of a machine}, on input $x$, is the maximum of the delay between two solutions it produces,
that is $\max_{1 \leq k < |A(x)|}|T(x,k+1) - T(x,k)|$. In this section, we bound the delay by a polynomial in the input size,
but we could require a stronger bound, a theme which is explored in Chapter~\ref{chap:low}. 

\begin{definition}[Polynomial delay]
A problem $\enum{A} \in \EnumP$ is in $\DelayP$ if it is solved by a machine $M$ with preprocessing and delay polynomial in its input size.
 \end{definition}
 
Observe that, by definition, $\DelayP = \UIncP_0$ and is thus equal to $\IncP_1$ and included in $\IncP$ by Proposition~\ref{prop:UInca}. Polynomial delay is the most usual notion of tractability in enumeration, both because it is a strong property (regularity and linear total time) and because it is relatively easy to obtain. Indeed, most methods used to design enumeration algorithms such as backtrack search with an easy extension problem~\cite{DBLP:journals/dmtcs/MaryS19}, or efficient traversal of a supergraph of solutions~\cite{LawlerLK80,avis1996reverse}, yield polynomial delay algorithms when they are applicable. Algorithms in $\DelayP$ have the same total time and the same incremental time as algorithms in $\IncP_1$, but they enumerate solutions more regularly as shown in Figure~\ref{fig:delay}. We further study in Chapter~\ref{chap:space} the relationship between $\DelayP$ and $\IncP_1$, when the space is polynomially bounded.

\begin{figure}
\label{fig:delay}
\begin{center}
\begingroup%
  \makeatletter%
  \providecommand\color[2][]{%
    \errmessage{(Inkscape) Color is used for the text in Inkscape, but the package 'color.sty' is not loaded}%
    \renewcommand\color[2][]{}%
  }%
  \providecommand\transparent[1]{%
    \errmessage{(Inkscape) Transparency is used (non-zero) for the text in Inkscape, but the package 'transparent.sty' is not loaded}%
    \renewcommand\transparent[1]{}%
  }%
  \providecommand\rotatebox[2]{#2}%
  \newcommand*\fsize{\dimexpr\f@size pt\relax}%
  \newcommand*\lineheight[1]{\fontsize{\fsize}{#1\fsize}\selectfont}%
  \ifx\svgwidth\undefined%
    \setlength{\unitlength}{465.41879248bp}%
    \ifx\svgscale\undefined%
      \relax%
    \else%
      \setlength{\unitlength}{\unitlength * \real{\svgscale}}%
    \fi%
  \else%
    \setlength{\unitlength}{\svgwidth}%
  \fi%
  \global\let\svgwidth\undefined%
  \global\let\svgscale\undefined%
  \makeatother%
  \begin{picture}(1,0.21013047)%
    \lineheight{1}%
    \setlength\tabcolsep{0pt}%
    \put(0,0){\includegraphics[width=\unitlength,page=1]{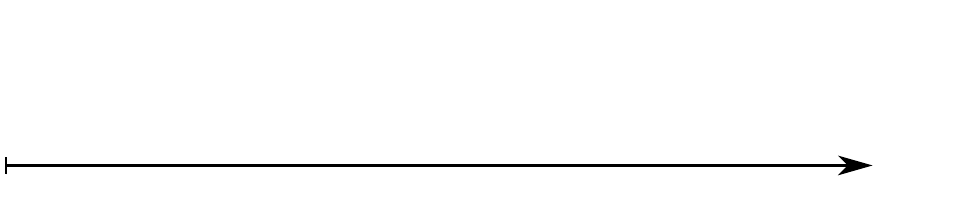}}%
    \put(0.91108056,0.03126949){\color[rgb]{0,0,0}\makebox(0,0)[lt]{\lineheight{0}\smash{\begin{tabular}[t]{l}$time$\end{tabular}}}}%
    \put(0,0){\includegraphics[width=\unitlength,page=2]{delay.pdf}}%
    \put(-0.00169958,0.00036508){\color[rgb]{0,0,0}\makebox(0,0)[lt]{\lineheight{1.25}\smash{\begin{tabular}[t]{l}0\end{tabular}}}}%
    \put(0,0){\includegraphics[width=\unitlength,page=3]{delay.pdf}}%
    \put(0.51643706,0.16673867){\makebox(0,0)[lt]{\lineheight{1.25}\smash{\begin{tabular}[t]{l}delay in $O(n^a)$\end{tabular}}}}%
    \put(0,0){\includegraphics[width=\unitlength,page=4]{delay.pdf}}%
    \put(0.03108613,0.19054126){\makebox(0,0)[lt]{\lineheight{1.25}\smash{\begin{tabular}[t]{l}preprocessing\\\end{tabular}}}}%
  \end{picture}%
\endgroup%

\end{center}
\caption{An enumeration in polynomial delay.}
\end{figure}

Contrarily to $\IncP$, there is no characterization of $\DelayP$ related to the complexity of some
search or decision problem, but there are some interesting implications. Let $\ext{A}$ be the problem 
of deciding, given $x$ and $y$, whether there is $y'$ such that $yy' \in A(x)$. 
Assuming the alphabet is $\{0,1\}$, then the set $A(x)$ can be divided into solutions beginning by $0$ and solutions beginning by $1$.
Then, $\ext{A}$ can be used to decide whether any of these subsets of solutions is empty. If $\ext{A} \in \P$, then this method of partitionning
the set of solutions, applied recursively on non-empty sets of solutions is an efficient version of backtrack search, often called \emph{binary partition} (see e.g.~\cite{DBLP:journals/dmtcs/MaryS19}). 
In this manuscript we call this method the \emph{flashlight search}, the flashlight stands for the use of $\ext{A}$ which ''illuminates'' a whole subtree of the search tree
to discard it if it contains no solution.

\begin{proposition}
 If $\ext{A} \in \P$, then $\enum{A} \in \DelayP$.
\end{proposition}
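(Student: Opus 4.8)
The plan is to formalise the \emph{flashlight search} sketched just above the statement and then bound its delay; setting up the algorithm and checking its correctness is routine, so I expect the delay analysis to be the only step requiring genuine care.

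Since $\enum{A}\in\EnumP$, fix a polynomial $p$ witnessing that $A$ is polynomially balanced and set $m=p(|x|)$, so that every $y\in A(x)$ satisfies $|y|\le m$; recall also that $\test{A}\in\P$. The algorithm performs a depth-first traversal of the tree of words over $\{0,1\}$ of length at most $m$, rooted at the empty word $\epsilon$. At a node $w$ it first outputs $w$ whenever $\test{A}(x,w)$ holds, and then, for $b=0$ and $b=1$ in turn, it recurses into the child $wb$ precisely when $\ext{A}(x,wb)$ holds. At each visited node the work is dominated by one membership test and two extension queries, hence bounded by a fixed polynomial $c(|x|)$; the recursion (DFS) stack has height at most $m$.

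For correctness: the algorithm outputs $w$ only when $w\in A(x)$, and it visits each node at most once, so no solution is repeated; conversely, for any $y\in A(x)$ every prefix $w$ of $y$ has $\ext{A}(x,w)$ true (witnessed by $y$ itself), so the traversal follows the path from $\epsilon$ to the node $y$ and outputs $y$ there. The delay bound rests on the following observation, which is the crux of the argument: if a visited node $w$ has no visited child, i.e.\ both $\ext{A}(x,w0)$ and $\ext{A}(x,w1)$ are false, then $w\in A(x)$ — indeed $\ext{A}(x,w)$ holds, so $ww'\in A(x)$ for some $w'$, and $w'$ can begin neither with $0$ nor with $1$, forcing $w'=\epsilon$. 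Thus every leaf of the explored subtree is a solution. Now consider the portion of the run between two consecutive outputs (and, symmetrically, from the start to the first output, and from the last output to termination): the DFS first backtracks by at most $m$ levels and then descends, and after at most $m$ further steps it reaches a node with no visited child, which by the observation is a solution and is output at once. Hence at most $2m$ nodes are processed between two consecutive outputs and before the first one, so both the delay and the preprocessing are $O(m\cdot c(|x|))$, a polynomial in $|x|$. Therefore $\enum{A}\in\DelayP$. (For an arbitrary finite alphabet $\Sigma$ the same argument goes through, with $|\Sigma|$ as an extra branching constant.)
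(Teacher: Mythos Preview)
Your proof is correct and is precisely the formalisation of the flashlight (binary partition) search that the paper sketches just before the proposition; the paper itself does not spell out a proof, treating the statement as an immediate consequence of that method, and your delay bound of $O(m\cdot c(|x|))$ matches the ``depth of the tree times the complexity of solving $\ext{A}$'' that the paper quotes in the example following the proposition. One tiny wording point: your delay analysis phrases the walk as ``backtrack then descend'', but since you output in pre-order the case where the previous output was at an internal node means the next segment is purely a descent (zero backtracking); this is still covered by your $2m$ bound, so nothing is wrong, but you might note explicitly that the backtracking part can be empty.
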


\begin{example}
As an illustration of the flashlight search, we use it to solve $\enum{\textsc{Union}}$.
Let us assume that a set over $n$ element is given by its characteristic vector. Let $y$ be a partial characteristic vector,
which can be represented by $A$ the set of positions of $1$ in $y$ (elements in the set represented by $y$) and $B$ the set of positions of $0$.
Extending $y$ such that $yy' \in \textsc{Union}(x)$ can be recast as finding a set $S \in \textsc{Union}(x)$ such that $A \subseteq S$ and $S \cap B = \emptyset$.
Let $\displaystyle{x^B = \cup_{s \in x,\, s \cap B = \emptyset} s}$, by construction it is the largest set of $\textsc{Union}(x)$ which has an empty intersection with $B$.
Hence, $\ext{\textsc{Union}}(x,y)$ is true if and only if $A \subseteq x^B$, and $\ext{\textsc{Union}}$ can thus be solved in time $O(mn)$, when
$x$ is a set of $m$ subsets of $[n]$.
The enumeration algorithm is a recursive algorithm, which does a depth first traversal of the implicit tree of partial solutions,
as represented in Figure~\ref{fig:binary}. The delay of the algorithm is the depth of the tree times the complexity of solving $\ext{\textsc{Union}}$,
that is $O(mn^2)$ but it can be improved to $O(mn)$ by amortizing the cost of solving $\ext{\textsc{Union}}$ on a branch, see~\cite{DBLP:journals/dmtcs/MaryS19}.
\end{example}

\begin{figure}
\begin{center}
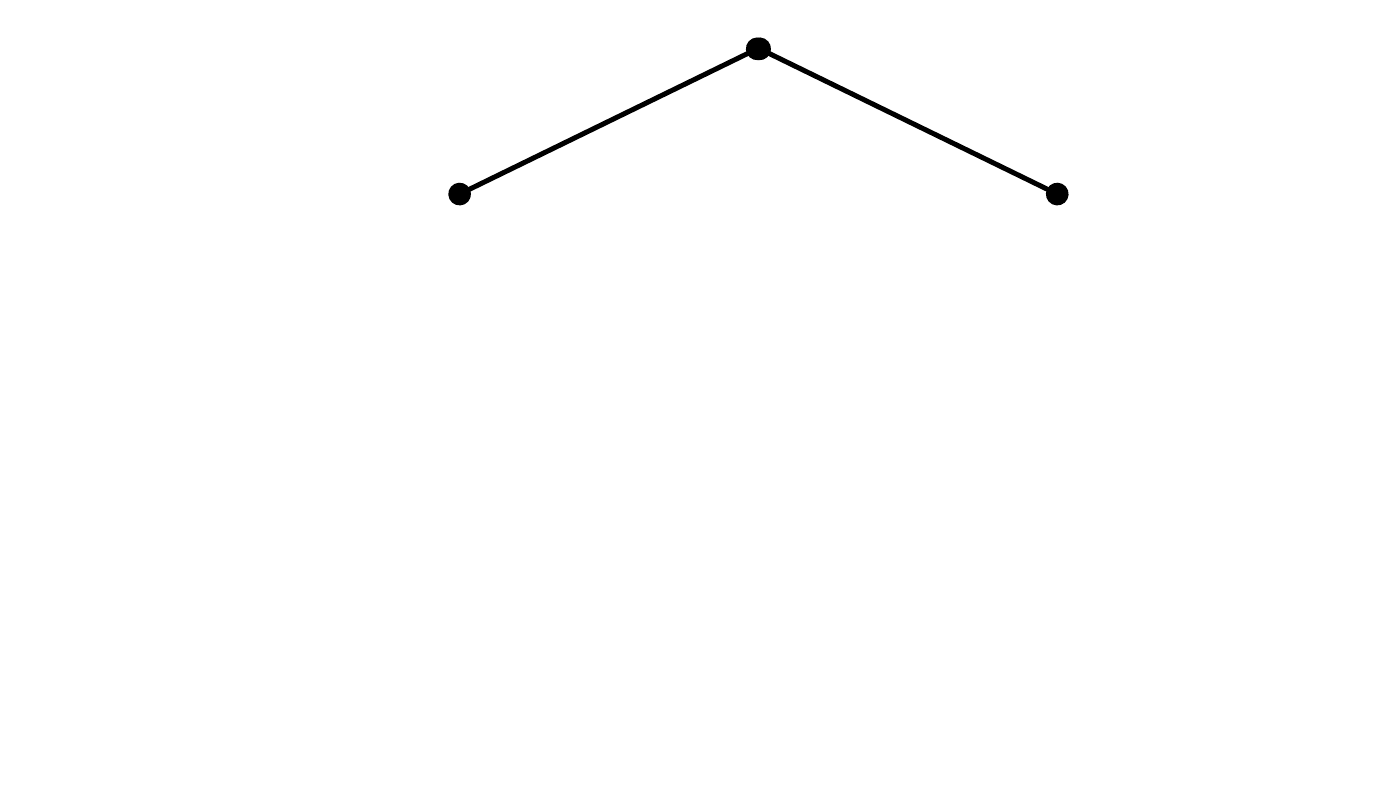
\end{center}
\caption{The tree of partial solutions of $\enum{\textsc{Union}}$, on the input $\{\{1,2\},\{2,3\},\{1,4\},\{1,3,4\}\}$. The branches containing no solution
are not explored by the algorithm nor represented in the figure.}
\label{fig:binary}
\end{figure}

The second method to solve an enumeration problem $\enum{A}$ in polynomial delay is to define 
a graph with set of vertices $A(x)$, often called the \emph{supergraph}. Assume that the supergraph satisfies the following properties:
\begin{itemize}
\item The supergraph is connected.
\item The neighborhood of a vertex can be produced in polynomial time.
\end{itemize}

Then, a depth first traversal of the graph allows to find all solutions with polynomial delay.  
Often, the supegraph is oriented, then to obtain a polynomial delay traversal, either it should be strongly connected
or we must be able to efficiently compute a set of vertices from which all other vertices can be reached. 
Several methods have been proposed to define strongly connected supergraphs, such as proximity search~\cite{conte2019new} or retaliation-free paths~\cite{cao2020enumerating}.

Contrarily to the flashlight method, the supergraph method requires as much space as there are solutions, since all elements of $A(x)$ must be marked (and thus kept in memory) to realize the traversal. To get rid of the space, one can try to define a spanning tree or a spanning forest over the supergraph. If the children and the parent of any vertex in the spanning tree can be found in polynomial time, then a traversal of the spanning tree is an enumeration algorithm and it does not need to store all solutions but only a single one.
This method, called the \emph{reverse search}, has been introduced by Avis and Fukuda~\cite{avis1996reverse} to solve problems like enumerating triangulations of points in the plane,
 spanning trees of a graph or topological orderings of an acyclic graph.

\begin{example}
The supergraph method can be used to solve $\enum{\textsc{Union}}$. 
Let $x = \{s_1, \dots, s_m\}$ an instance of $\enum{\textsc{Union}}$, with for all $i$, $s_i \subseteq [n]$.
We define a directed graph $G_x$ with for vertices, the elements of $\textsc{Union}(x)$ and $\emptyset$.
 There is an arc $(y_1,y_2)$ in $G_x$ if there exists $i$ such that $y_2 = y_1 \cup s_i$. Clearly, any element
in $\textsc{Union}(x)$ can be reached from $\emptyset$, hence traversing $G_x$ solves $\enum{\textsc{Union}}$
with polynomial delay. See Figure~\ref{fig:supergraph} for an example.

An implicit spaning tree of the supergraph can be defined as follows: $y_1$ is the parent of $y_2$
if it is the smallest lexicographic element such that $(y_1,y_2)$ is an arc of the supergraph. When traversing $G_x$, it is possible to skip all arcs which are
not in the parent relation, by computing the smallest set in $A(x)$ which can yield some $y$ by union with an element of $x$. 
This can be done in polynomial time, but with higher complexity than the binary partition method presented in the previous example.
\end{example}

\begin{figure}
\begin{center}
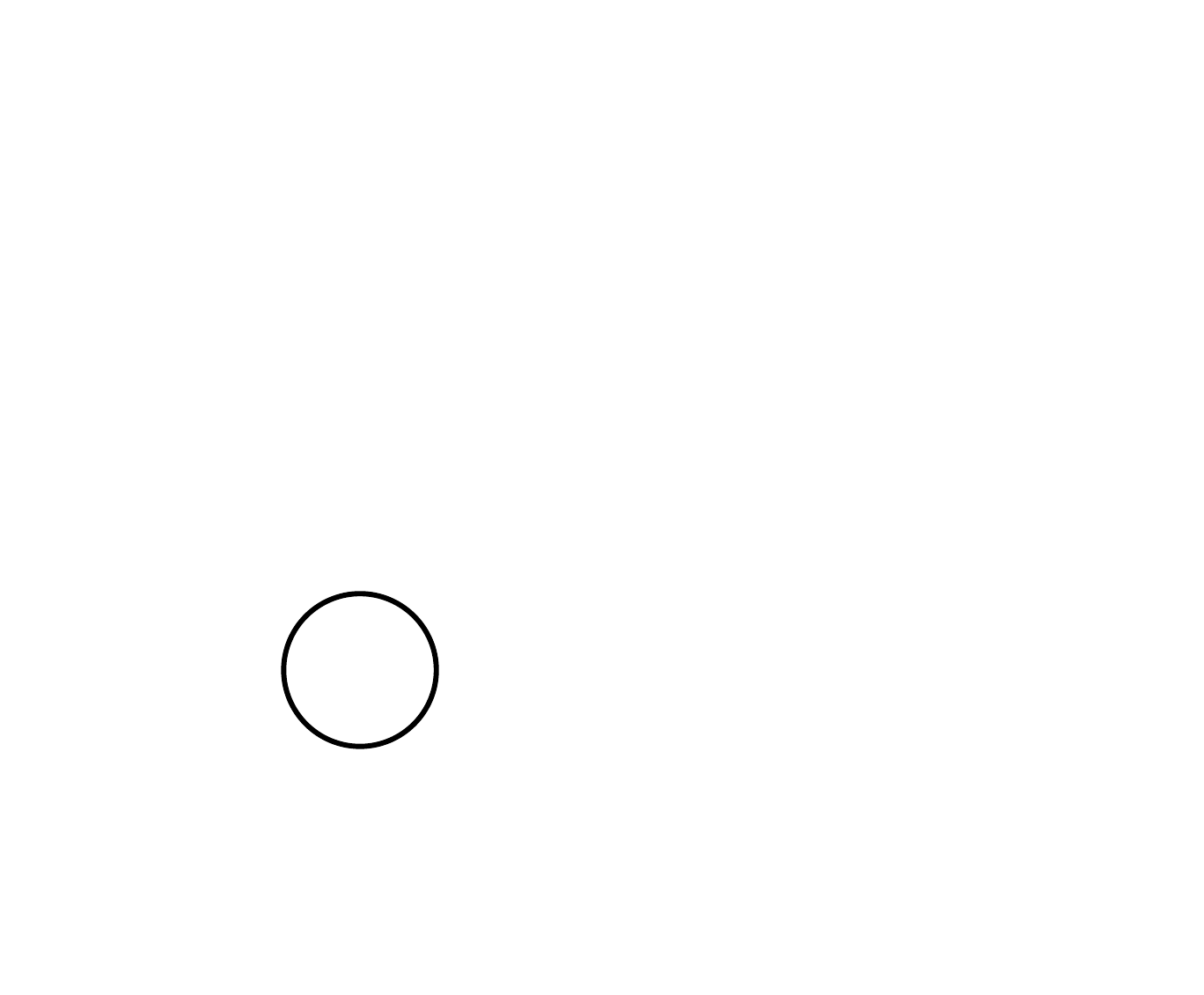
\end{center}
\caption{The supergraph of solutions of $\enum{\textsc{Union}}$, on the input $\{\{1,2\},\{2,3\},\{1,4\},\{1,3,4\}\}$. Arcs of the implicit spanning tree in red.}
\label{fig:supergraph}
\end{figure}

\section{Separation of Enumeration Classes}

In the previous sections, we have provided several separations, under various complexity hypotheses, 
which can be summarized by the following chain of inclusions: $$\DelayP = \IncP_1 \subsetneq \dots \subsetneq \IncP_{i} \subsetneq \dots \subset  \IncP  \subsetneq \OutputP \subsetneq \EnumP .$$

As explained in~\cite{capelli2019incremental}, these separations are unconditional if we remove the constraints that problems are in
$\EnumP$, i.e. that the solutions can be verified in polynomial time, by using the time hierarchy Theorem~\cite{hartmanis1965computational}.
  However, several properties do not hold in that case, such as the closure of $\DelayP$ under union presented in the next section or the equivalences between separation of enumeration complexity classes and $\P \neq \NP$ or  $\FP \neq \TFNP$.
  
Consider the $\textsc{SAT}$ problem, then $\ext{\textsc{SAT}}$ is equivalent to $\textsc{SAT}$, that is $\textsc{SAT}$ is autoreducible.
Hence, if $\P = \NP$, then $\ext{\textsc{SAT}} \in P$ and using the binary partition method, $\enum{\textsc{SAT}} \in \DelayP$, which yields the following proposition.

\begin{proposition}
$\P = \NP \Leftrightarrow \EnumP = \DelayP$.
\end{proposition}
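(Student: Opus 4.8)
The plan is to prove the two implications separately, reusing only facts already recorded in this chapter.

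For $\P = \NP \Rightarrow \EnumP = \DelayP$ I would argue uniformly through the flashlight search. Fix an arbitrary $\enum{A} \in \EnumP$; by definition $A$ is polynomially balanced, say $|y| \le p(|x|)$ for every $y \in A(x)$, and $\test{A} \in \P$. The key observation is that $\ext{A} \in \NP$: on input $(x,y)$ one nondeterministically guesses a suffix $y'$ with $|yy'| \le p(|x|)$ (the balance bound applies, since $\ext{A}$ only concerns prefixes of genuine solutions) and accepts iff $\test{A}$ accepts $(x, yy')$. Under the hypothesis $\P = \NP$ this gives $\ext{A} \in \P$, and then the flashlight-search proposition of Section~\ref{sec:delay} (``if $\ext{A}\in\P$ then $\enum{A}\in\DelayP$'') yields $\enum{A} \in \DelayP$. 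Since $\enum{A}$ was arbitrary, $\EnumP \subseteq \DelayP$; the inclusion $\DelayP \subseteq \EnumP$ holds by definition of the class, so $\EnumP = \DelayP$.

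For the converse $\EnumP = \DelayP \Rightarrow \P = \NP$, I would combine the hypothesis with the chain of inclusions $\DelayP \subseteq \IncP \subseteq \OutputP \subseteq \EnumP$ established earlier in this section: the equality $\EnumP = \DelayP$ collapses the whole chain, in particular forcing $\OutputP = \EnumP$, and Proposition~\ref{prop:output} then gives $\P = \NP$. A self-contained alternative avoiding Proposition~\ref{prop:output} is to note that $\enum{\textsc{SAT}} \in \EnumP = \DelayP$, so some machine enumerates its solutions with preprocessing bounded by $h(n)$ and delay bounded by $d(n)$ for polynomials $h,d$; to decide satisfiability of a formula $\phi$ one simulates this machine for $h(|\phi|) + d(|\phi|) + 1$ steps and accepts iff a solution has been output, which decides $\textsc{SAT}$ in polynomial time.

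I do not expect a genuine obstacle here; the argument is a short repackaging of earlier results. The only points requiring a little care are: (i) justifying the nondeterministic guess in the proof that $\ext{A}\in\NP$, which uses exactly polynomial balance to bound the length of the guessed suffix; and (ii) remembering that every enumeration class considered ($\DelayP$, $\IncP$, $\OutputP$) is by definition contained in $\EnumP$, so the ``reverse'' inclusions are free and only the stated collapses carry content. One may also observe, as the surrounding discussion does, that $\textsc{SAT}$ is autoreducible, so $\ext{\textsc{SAT}} \in \P$ follows directly under $\P=\NP$; combined with the $\EnumP$-completeness of $\enum{\textsc{SAT}}$ under parsimonious reductions and the closure of $\DelayP$ under such reductions, this gives an alternative route to $\EnumP \subseteq \DelayP$ in the forward direction.
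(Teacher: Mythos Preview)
Your proof is correct and aligns with the paper's argument. The paper's sketch emphasises the autoreducibility of $\textsc{SAT}$ (so $\ext{\textsc{SAT}}\in\P$ under $\P=\NP$) and then implicitly leans on $\EnumP$-completeness to propagate $\enum{\textsc{SAT}}\in\DelayP$ to all of $\EnumP$; you give the cleaner uniform argument that $\ext{A}\in\NP$ for every $\enum{A}\in\EnumP$, which avoids the completeness/closure detour, and you also record the paper's route as an alternative---so the two treatments coincide.
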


The last proposition is somehow surprising, since it implies that separating $\EnumP$ from $\DelayP$ is as hard as separating it from $\OutputP$.

\section{Stability under Operations and Reductions}

Since an enumeration problem asks for a set of solutions, we can build new problems by acting on this set: the predicate
$A \cup B$ is defined as $(A \cup B)(x) = A(x) \cup B(x)$ and it defines the new problem $\enum{(A \cup B)}$. This definition can easily be adapted to any set operation such as the complement or the intersection.  We can also define the deletion or subtraction of $B$ from $A$ as $(A \setminus B)(x) = A(x) \setminus B(x)$, with the additional constraint that $B(x) \subseteq A(x)$. This is inspired by the subtractive reduction, defined for counting problems~\cite{durand2005subtractive}. If $B(x)$ is polynomial in $|x|$, then we say that $(A \setminus B)$ is a polynomial size deletion.

We say that a complexity class $\mathcal{C}$ is \emph{stable} under some operation, say the union, when  $\enum{A}\in \mathcal{C}$ and $\enum{B} \in \mathcal{C}$ implies $\enum{A \cup B} \in \mathcal{C}$. It turns out that $\DelayP$ is stable for several operations.

\begin{theorem}[\cite{Bagan09},\cite{phdstrozecki}]\label{th:stability}
The classes  $\OutputP$, $\IncP$, $\DelayP$ are stable under union, Cartesian product and polynomial size deletion.
\end{theorem}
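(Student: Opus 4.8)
I would prove the three operations (union, Cartesian product, polynomial size deletion) separately, and for each operation handle the three classes ($\OutputP$, $\IncP$, $\DelayP$) in turn, observing that $\DelayP$ is the most delicate because it demands regularity. For the union, given machines $M_A$ and $M_B$ for $\enum{A}$ and $\enum{B}$, the naive idea is to run $M_A$, then run $M_B$ while suppressing any solution already produced by $M_A$. The obstacle is duplicate detection and delay: to suppress duplicates one needs to test membership in $A(x)$, which is available since $\enum{A}\in\EnumP$ means $\test{A}\in\P$. So while running $M_B$, for each candidate $y$ we check $\test{A}(x,y)$ in polynomial time and output $y$ only if it fails. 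This preserves output polynomial time and incremental polynomial time immediately (the extra work is a polynomial per solution). For $\DelayP$ the concatenation ``first all of $M_A$, then all of $M_B$'' already works: each phase has polynomial delay, and the splice point between phases costs at most one polynomial-delay step plus one $\test{A}$ check, which is still polynomial. One subtlety: $M_B$ might spend a long time between a suppressed duplicate and the next genuine new solution, but since the number of duplicates is at most $|A(x)\cap B(x)|$ and each is found within polynomial delay by $M_B$, the delay between two consecutive \emph{output} solutions is at most (number of consecutive duplicates)$\times$(poly) — which is \emph{not} bounded by a fixed polynomial. This is the main obstacle, and the fix is amortization: by Proposition~\ref{prop:UInca} the relevant $\IncP_1$-type bound can be converted to polynomial delay by buffering solutions in a queue and releasing them at a regular rate; equivalently, one interleaves the two machines and uses a queue so that new solutions are emitted at a steady pace.

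\textbf{Cartesian product.} Here the new predicate is $(A\times B)(x) = A(x)\times B(x)$ (identifying a pair with its encoding), so $|(A\times B)(x)| = |A(x)|\cdot|B(x)|$. I would enumerate the product in the obvious nested-loop fashion: enumerate $A(x)$ with $M_A$, and for each $a$ enumerate all of $B(x)$ with a fresh run of $M_B$, outputting $(a,b)$ for each $b$. No duplicate issue arises since pairs are distinct whenever either coordinate differs. For $\DelayP$ this gives delay at most $\mathrm{poly}(|x|)$ between consecutive pairs \emph{within} a fixed $a$, plus one full restart of $M_B$ and one step of $M_A$ when $a$ advances; the restart of $M_B$ only costs its preprocessing time, which is polynomial, so the delay stays polynomial. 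For $\OutputP$ and $\IncP$ the bounds follow because the total time is $O(|A(x)|\cdot(\text{time for one full run of }M_B))$, and a full run of $M_B$ is polynomial in $|x|$ and $|B(x)|\le|(A\times B)(x)|$; the incremental bound after $k$ pairs is similar since $k$ pairs touch at most $k$ values of $a$ and at most $k$ values of $b$ per $a$-run.

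\textbf{Polynomial size deletion.} For $(A\setminus B)$ with $B(x)\subseteq A(x)$ and $|B(x)|$ polynomial in $|x|$: I would first, in preprocessing, compute the set $B(x)$ entirely by running a machine for $\enum{B}$ — this takes time polynomial in $|x|$ and $|B(x)|$, hence polynomial in $|x|$ — and store it in a trie or balanced tree for $O(\mathrm{poly})$ membership tests. Then run $M_A$ and simply skip any output that lies in the stored set $B(x)$. Exactly as in the union case, skipping introduces runs of consecutively suppressed solutions, but now the number of suppressed solutions is at most $|B(x)| = \mathrm{poly}(|x|)$, so the delay between two consecutive genuine outputs is bounded by $|B(x)|$ times the delay of $M_A$ times the membership-test cost — a genuine polynomial in $|x|$. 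Hence $\DelayP$ is preserved \emph{without} needing amortization here; $\OutputP$ and $\IncP$ are immediate since we added only a polynomial preprocessing and a polynomial-per-step overhead. The only genuinely new idea across all cases is the combination of $\test{A}\in\P$ (or explicit precomputation of the small set $B(x)$) for duplicate/deletion suppression, together with the amortization of Proposition~\ref{prop:UInca} to restore regular delay in the union case; I expect the union/$\DelayP$ combination to be where the real care is needed, everything else being routine bookkeeping on the time bounds.
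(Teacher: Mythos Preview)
Your argument is correct, and for Cartesian product and polynomial size deletion it matches the paper's own ``straightforward'' treatment. The interesting comparison is the union case for $\DelayP$.

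You handle $A\cup B$ by running $M_A$ to completion, then running $M_B$ while suppressing anything in $A(x)$ via $\test{A}$; you correctly observe that consecutive suppressions can pile up, show the resulting algorithm is nonetheless in $\IncP_1$, and then invoke Proposition~\ref{prop:UInca} to regularise the delay with a queue. This works, but the paper takes a more direct route that avoids amortization entirely. Its \emph{priority scheme} runs $M_A$ and $M_B$ in lockstep: fetch the next solution $y$ of $M_A$; if $y\notin B(x)$ output $y$, otherwise output the \emph{next} solution of $M_B$ in its place. Every step of $M_A$ therefore produces exactly one output, so the delay is at most $\text{delay}(M_A)+\text{test}_B+\text{delay}(M_B)$, a straight factor-of-two blowup. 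After $M_A$ is exhausted one continues $M_B$; there are no duplicates because elements of $A\setminus B$ are emitted by $M_A$ and all of $B$ is emitted (once each) by $M_B$.

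What each approach buys: yours is conceptually simple and reuses machinery already in the paper, but the amortization queue may grow to the size of the solution set (so the argument does not obviously preserve polynomial space), and the delay bound you obtain depends on the polynomial hidden in Proposition~\ref{prop:UInca}. The paper's priority scheme is a small online trick that gives an explicit constant-factor delay bound with no auxiliary storage beyond the two machine states, and it immediately generalises to a union of polynomially many $\DelayP$ problems. Both are valid proofs of the stated theorem.
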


The proof for cartesian product or polynomial size deletion are straightforward. The method for solving $A\cup B$ relies on a simple priority scheme: compute the next solution of $A$, if it is not a solution of $B$ then output it. If it is a solution of $B$, then output the next solution of $B$. This allows to deal with repetition of solutions and only 
multiply the delay by two, hence polynomial delay is ensured for a union of a polynomial number of problems in $\DelayP$. If the solutions of $A$ and $B$ are generated in the same order, then it is not necessary to test whether a solution of $A$ is also a solution of $B$, by dynamically merging the two ordered streams of solutions.

As a consequence, these operations can be used to decompose enumerations problems into simpler problems and thus 
may simplify the design of good algorithms. 

Several other operations do not leave any enumeration classes stable but $\EnumP$ because they allow 
to build an $\EnumP$-hard problem from simpler ones. For instance, deciding whether there is a model to the conjunction of a Horn Formula and an affine formula is $\NP$-complete~\cite{creignou2001complexity}, while enumerating the models of a Horn or an affine formula can be done with polynomial delay~\cite{creignou1997generating}.

\begin{theorem}[\cite{phdstrozecki}]
If $\P \neq \NP$ then $\DelayP$, $\IncP$, $\OutputP$ are not stable under deletion, intersection or complement. 
\end{theorem}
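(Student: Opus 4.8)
The plan is to establish all three non-stability results by exhibiting, for each operation, a pair of easy problems whose combination is $\EnumP$-hard, so that stability of any class below $\EnumP$ would force that hard problem into a tractable class and hence collapse $\P$ and $\NP$. The unifying tool is the observation made just before the statement: if $\enum{A}$ is in any of $\DelayP$, $\IncP$, $\OutputP$, then in particular one can decide in polynomial time whether $A(x)\neq\emptyset$ (for $\OutputP$ this is immediate; for the smaller classes it follows a fortiori). So it suffices to encode an $\NP$-complete emptiness question as the set of solutions of a problem built by deletion, intersection, or complement from problems in $\DelayP$.

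For \emph{intersection}, I would use the example already cited in the excerpt: enumerating the models of a Horn formula and enumerating the models of an affine (XOR-CNF) formula are both in $\DelayP$ by~\cite{creignou1997generating}, whereas deciding whether a conjunction of a Horn formula and an affine formula has a model is $\NP$-complete~\cite{creignou2001complexity}. Formally, let $A$ be the predicate ``$y$ is a model of the Horn part of $x$'' and $B$ the predicate ``$y$ is a model of the affine part of $x$'', where $x$ encodes a pair (Horn formula, affine formula) over the same variables; then $\enum{A},\enum{B}\in\DelayP$, but $(A\cap B)(x)\neq\emptyset$ is exactly the $\NP$-complete satisfiability question, so $\enum{(A\cap B)}\in\OutputP$ would give $\P=\NP$. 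Since $\DelayP\subseteq\IncP\subseteq\OutputP$, the same instance rules out stability of all three classes under intersection.

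For \emph{complement} and \emph{deletion} I would reduce to the intersection case via De Morgan-type manipulations, being careful that the complement of an enumeration problem is taken relative to the ambient set of candidate strings of bounded length (so that one stays inside $\EnumP$-style predicates). Concretely, over a fixed variable set the non-models of a formula $\phi$ are the models of $\neg\phi$; taking $\phi$ affine, $\neg\phi$ can be enumerated with polynomial delay (it is again essentially an affine-type enumeration, or one simply enumerates all assignments and filters — here one must check that the complement of an affine model set still admits polynomial delay, which is the one routine point to verify), and intersecting the models of a Horn formula with the models of $\neg(\text{affine})$ reproduces an $\NP$-complete emptiness test; rewriting that intersection as $A\setminus B'$ where $B'$ enumerates the models of the affine formula gives the deletion statement, and expressing one operand through a complement gives the complement statement. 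The main obstacle — and really the only non-cosmetic step — is checking that the specific ``easy'' pieces one feeds into the deletion/complement constructions genuinely lie in $\DelayP$ (in particular that the subtracted set in the deletion example has the right form and, if one wants a \emph{polynomial size} deletion, is small), since Theorem~\ref{th:stability} already guarantees stability of polynomial size deletion, so the witnessing deletion here must be of super-polynomial size; once the right pair of $\DelayP$ problems is pinned down, the collapse argument is immediate from Proposition~\ref{prop:output}.
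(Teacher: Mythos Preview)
Your intersection argument is exactly the one the paper sketches (Horn and affine models are each in $\DelayP$, their intersection has $\NP$-complete emptiness), so that part is fine.

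The deletion and complement arguments, however, contain a real error. You assert that ``intersecting the models of a Horn formula with the models of $\neg(\text{affine})$ reproduces an $\NP$-complete emptiness test''. That is not the cited result: what is $\NP$-complete is emptiness of Horn~$\wedge$~affine, not Horn~$\wedge\neg$affine, and there is no obvious reason the latter should be hard (indeed $\neg A$ is a disjunction of single XOR constraints, so Horn~$\wedge\neg A$ reduces to polynomially many instances of Horn plus one XOR clause). Consequently your deletion construction ``$A\setminus B'$ with $A=$ Horn models, $B'=$ affine models'' fails on two counts: the resulting emptiness problem is not known to be hard, and it does not even satisfy the paper's requirement $B'(x)\subseteq A(x)$ for deletion. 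The fix is much simpler than the Horn/affine route and avoids these issues entirely: take $A(x)=\{0,1\}^n$ (all assignments) and $B(x)=$ models of a $3$-DNF (both trivially in $\DelayP$, with $B\subseteq A$); then $A\setminus B$ is exactly the models of a $3$-CNF, whose emptiness is $\NP$-complete. The same example handles complement directly: the complement of the $\DelayP$ problem ``models of a $3$-DNF'' (relative to $\{0,1\}^n$) is ``models of a $3$-CNF''. Alternatively, for complement you can argue abstractly via De~Morgan: since the classes are stable under union (Theorem~\ref{th:stability}), stability under complement would force stability under intersection, which you have already ruled out. Either way, your current deletion/complement paragraph needs to be replaced.
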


 Inspired by these operations, we would like to design the most general parsimonious reduction,
 so that $\DelayP$ is stable under its action. As explained in the previous chapter, to make $\enum{3SAT_{0}}$ complete, it is enough to generalize parsimonious reduction by allowing addition or deletion of a polynomial number of solutions themselves produced in polynomial time. $\DelayP$ is stable under such reductions because of Proposition~\ref{th:stability}. Several reductions for $\OutputP$, $\IncP$ and $\DelayP$ have been given by Mary in~\cite{mary2013enumeration}. We give here a polynomial delay reduction, which associates to each solution of some problem a subset of solutions of another problem such that all these subsets partition the desired set of solutions. 
 
 \begin{definition}[Polynomial delay reduction~\cite{mary2013enumeration}]\label{def:reduction}
  Let $\enum{A}, \enum{B} \in \EnumP$ and $f,g$ two polynomial time computable functions.
  The function $f$ maps instances of $\enum{A}$ to instances of $\enum{B}$, while $g$ maps a solution of $\enum{B}$ to a set of solutions of 
  $\enum{A}$.
  The pair $(f,g)$ is a polynomial delay reduction from $\enum{A}$ to $\enum{B}$ if there is a polynomial $p$ such that:
  \begin{itemize}
   \item $\bigcup_{s \in B(f(x))} g(s) = A(x)$ (all solutions of $A$ are obtained)
   \item $|\{ s \in B(f(x)) \mid g(s) = \emptyset \} | \leq p(|x|)$ (all but a polynomial number of solutions of $B$ give solutions of $A$)
   \item $\forall s_1,\,s_2 \in B(f(x)), \, g(s_1) \cap g(s_2) = \emptyset$ (no repetition of solutions)
  \end{itemize}
 \end{definition}
 
 Using this reduction, it has been proven that listing minimal transversals of a hypergraph
 is equivalent to listing the minimal dominating sets of a graph~\cite{kante2014enumeration}, a result which has motivated the study of enumeration of minimal dominating sets over various graph classes~\cite{kante2015polynomial,bonamy2019enumerating}.

 We can relax polynomial delay reductions by allowing a polynomial number of repetitions,
 such reductions are called e-reduction in~\cite{creignou2017complexity}.
 Instead of mapping a solution of $B$ to a polynomial number of solutions of $A$, we could even allow 
a polynomial delay algorithm to generate any number of solutions of $A$ from a solution of $B$.

To define a Turing reduction for enumeration, we give access to an oracle solving an enumeration problem using two additional instructions, one to begin an enumeration, the other to get the next solution. A reduction from 
$A$ to $B$ is a machine which solves $\enum{A}$ using an oracle to the problem $\enum{B}$. When the machine has a polynomial delay or an incremental polynomial time, the reductions are called D or I reductions~\cite{creignou2017complexity}.

No complete problems are known for the complexity classes we have introduced but $\EnumP$, whatever the reduction considered. In fact, artificial constructions using padding, as presented for 3SAT and parsimonious reduction, are often enough to build a hard but not complete problem. Moreover, when $\IncP_i$ is stable under a reduction, which is the case for all mentioned reductions but the Turing reductions,
there is no complete problem in $\IncP$ because it implies a collapse of the strict hierarchy given in Theorem~\ref{th:hierarchy}.

\begin{corollary}
Let $\prec$ be a reduction for enumeration problems such that, for all $a \geq 1$, 
$\IncP_a$ is stable under $\prec$. If $\ETH$ holds, there is no complete problem in $\IncP$ for $\prec$.
\end{corollary}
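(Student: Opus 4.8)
The plan is to argue by contradiction, exploiting the strict hierarchy of Theorem~\ref{th:hierarchy} together with the hypothesis that each level $\IncP_a$ is closed under $\prec$. Suppose that, contrary to the claim, there exists a problem $\enum{A} \in \IncP$ that is $\prec$-complete for $\IncP$, meaning every $\enum{C} \in \IncP$ satisfies $\enum{C} \prec \enum{A}$.

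First I would observe that $\IncP = \bigcup_{a \geq 1} \IncP_a$ is a union over an increasing chain, so membership $\enum{A} \in \IncP$ forces $\enum{A} \in \IncP_a$ for some fixed constant $a \geq 1$. Next, invoking Theorem~\ref{th:hierarchy} under the assumption that $\ETH$ holds, the inclusion $\IncP_a \subsetneq \IncP_{a+1}$ is strict, so I can fix a witness problem $\enum{B}$ with $\enum{B} \in \IncP_{a+1} \subseteq \IncP$ but $\enum{B} \notin \IncP_a$. By completeness of $\enum{A}$, we have $\enum{B} \prec \enum{A}$. Now apply the stability hypothesis at level $a$: since $\enum{A} \in \IncP_a$ and $\IncP_a$ is stable under $\prec$, it follows that $\enum{B} \in \IncP_a$, contradicting the choice of $\enum{B}$. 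Hence no such complete $\enum{A}$ can exist.

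There is essentially no hard step here; the only point requiring a little care is that the constant $a$ obtained from $\enum{A} \in \IncP$ is genuinely finite and fixed (it does not depend on the instance), which is immediate from the definition $\IncP = \bigcup_{a\geq 1}\IncP_a$, and that Theorem~\ref{th:hierarchy} indeed supplies, for each such $a$, a concrete problem separating $\IncP_a$ from $\IncP_{a+1}$ — both of which are already established in the excerpt. The same one-line argument applies verbatim with $\OutputP$ in place of $\IncP$, using the strict-hierarchy remark for $\OutputP$ stated after Theorem~\ref{th:hierarchy}.
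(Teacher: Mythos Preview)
Your proof is correct and follows exactly the approach the paper indicates: a complete problem in $\IncP$ would sit in some fixed level $\IncP_a$, and stability of $\IncP_a$ under $\prec$ would then force all of $\IncP$ into $\IncP_a$, collapsing the strict hierarchy of Theorem~\ref{th:hierarchy}. The paper states this corollary without a separate proof, relying on the sentence just before it (``there is no complete problem in $\IncP$ because it implies a collapse of the strict hierarchy''), which is precisely the argument you have written out.
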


This lack of complete problems underlines the need to find other methods to prove hardness
of enumeration problems.

\chapter{Space Complexity}
\label{chap:space}
\minitoc

In the previous Chapter, only time restrictions have been considered,
but space plays a major role in practical enumeration algorithms and is often a bottleneck,
for instance when generating maximal cliques~\cite{DBLP:conf/icalp/ConteGMV16}.
We present several complexity classes, designed to capture the idea of bounded space and study their relations with their counterparts with unbounded space.

\section{Polynomial Space}

We could define an equivalent of $\PSPACE$ by considering all enumeration problems which can be solved
by a machine which uses a space polynomial in the input size. Such a class contains $\EnumP$, 
and all the polynomial hierarchy defined in~\cite{creignou2017complexity}.

As a tractability measure, it is more relevant to ask for algorithms with polynomial space and polynomial delay. In fact, several algorithmic methods have been designed, 
to realize the traversal of a supergraph of solutions with polynomial delay and \emph{polynomial space}~\cite{LawlerLK80,avis1996reverse,conte2019listing}.
We denote by $\DelayP^{poly}$ the class of problems solvable in polynomial delay and polynomial space, 
and by $\IncP^{poly}$ the class of problems solvable in incremental polynomial time and polynomial space.
When trying to prove lower bounds, it could be helpful to add the constraint
of polynomial space. For instance, we could try to prove that $\enum{\textsc{Min-Transversals}} \notin \DelayP^{poly}$, which should be easier than proving it is not in $\DelayP$.  

Several properties of $\IncP$ do not hold for $\IncP^{poly}$. The amortization method which allows to trade space for regularity as in Proposition~\ref{prop:UInca} does not work when requiring polynomial space. Hence, with polynomial space $\IncP_1^{poly}$ may not be equal to $\DelayP^{poly}$. We deal with this problem at the end of the chapter.
Moreover, it does not seem possible to characterize the class $\IncP^{poly}$ by a variant of the problem $\mathsf{AnotherSol}$, which makes it quite different from $\IncP$. 

A natural question is to understand whether the use of exponential space is useful for $\DelayP$ or $\IncP$. If we relax the constraint that the problems must be in $\EnumP$, then we can define a problem with an exponential number of trivial polynomial size solutions and one which is the solution to an $\EXP$-complete problem.  Such a problem is in $\DelayP$, since outputting the trivial solutions gives enough time to solve the $\EXP$-complete problem. However, if this problem is in $\DelayP^{poly}$, then the enumeration algorithm is also a polynomial space algorithm solving an $\EXP$-complete problem, which implies the unlikely collapse $\PSPACE = \EXP$.

\begin{openproblem}
Prove that $\DelayP \neq \DelayP^{poly}$ and $\IncP \neq \IncP^{poly}$ assuming some complexity hypothesis.
\end{openproblem}

\section{Memoryless Polynomial Delay}

 A notion of memoryless polynomial delay algorithm is proposed in~\cite{phdstrozecki}, under the name of strong polynomial delay, which is used in this thesis for another concept.
A memoryless algorithm can use only the last solution and the input to produce the next solution.

We say that a problem $\enum{A}$ is in $\NextP$ if for every instance $x$ there is a total order $<_x$ such that the following problems are in $\FP$:
\begin{enumerate}
 \item given $x$, output the first element of $A(x)$ for $<_x$
\item given $x$ and $y \in A(x)$ output the next element of $A(x)$  for  $<_x$ or  a special value if there is none 
\end{enumerate}

The enumeration of the minimal spanning trees or of the maximal matchings of a weighted graph 
are in $\NextP$ (see \cite{avis1996reverse,uno1997algorithms}). Many problems of queries over databases
are solved by implementing a Next function, which given a solution produces the next one (often in constant time), and a Start function which gives the first solution,  see for instance~\cite{DBLP:conf/pods/SchweikardtSV18}.
In fact, any problem which can be solved by traversal of an implicit tree of solutions, without storing global information, is in $\NextP$.  As a consequence, reverse search algorithms or flashlight algorithms are often memoryless. 

Several results proved for $\DelayP$ also hold for $\NextP$. If $\P=\NP$, then $\EnumP = \DelayP^{poly} = \NextP$, since $\enum{SAT}$ can be solved using a flashlight search.
Moreover, $\NextP$ is stable under cartesian product, disjoint union and polynomial size deletion, using the same methods as for $\DelayP$, but the method for proving stability under union is not memoryless.

\begin{openproblem}
Can we separate $\NextP$ from $\DelayP$ modulo some complexity hypotesis ? If so, can we separate $\NextP$ from $\DelayP^{poly}$, by taking advantage that a
$\DelayP^{poly}$ algorithm can store an information computed during the whole enumeration process, while a $\NextP$ algorithm uses a "local" information to find the next solution. 
\end{openproblem}

\section{Sources of Exponential Space}

There are three common sources of exponential space in enumeration.

The most common is that many algorithms generate each solution several times, a typical example is the algorithm used to generate
maximal independent sets in lexicographic order given in~\cite{johnson1988generating}. Any algorithm can be turned into an enumeration algorithm without repetition by storing all output solutions in a trie to detect and avoid repetitions. If the number of repetitions is bounded by some function $f(n)$, with $n$ the size of the instance, then the elimination of duplicates multiplies the incremental time of the algorithm with repetition by $f(n)$ only. However, this requires to store all output solutions, which may exponentially increase the space used. 

When an algorithm produces some solution $s$ at time $t$, to decide whether $s$ has already been output, we can run the enumeration from start to time $t$.  
This method requires no \emph{additional space}. If the original algorithm is in incremental time $k^ip(n)$, and the number of repetitions of a solution is bounded by $f(n)$,
then the algorithm without repetition is in incremental time $O(k^{i+1}p(n)f(n))$. Hence, a problem which is in $\IncP_i$ because of an algorithm using polynomial space and an exponential datastructure to detect duplicates in polynomial number, is also in $\IncP_{i+1}^{poly}$ using this method.
There is a tradeoff between incremental time and space using a hybrid of these two methods to avoid repetitions: store the last $l$ produced solutions and rerun the algorithm to test whether a solution is output only up to the point the first of the $l$ stored solutions is produced.

The second source of exponential space is the use of saturation algorithms: Since they compute new solutions from old ones, they require to store all produced solutions. There is no known generic method to get rid of the space in that case. However, it may be possible to design a different algorithm to solve the problem, see Chapter~\ref{chap:limits} where we show that a whole class of saturation problems can be solved with polynomial space and polynomial delay using a flashlight search.

\begin{openproblem}
The enumeration of the circuits of a binary matroid can be done in incremental polynomial time, using a saturation algorithm~\cite{khachiyan2005complexity}. Can we achieve the same complexity with a polynomial space only?
\end{openproblem} 

The third common source of exponential space is the amortization procedure of Proposition~\ref{prop:UInca}. It builds a polynomial delay algorithm from one in linear incremental time, by storing the produced solutions in a queue, which can be as large as the whole set of solutions. We show in the next section, that this source of exponential space can be eliminated.

\section{Amortization in Polynomial Space}

Let $M$ be a machine in linear incremental time, it produces $k$ solutions in time $kp(n)$. We say that $p(n)$ is the \emph{incremental delay} of $M$. 
The aim of this section is to show how to build a machine $M'$, which produces the same solutions as $M$, with a polynomial delay almost equal to the incremental
delay of $M$ and with a small space overhead.

\subsection{Gaps}

In~\cite{capelli2019incremental}, we show how to amortize an incremental linear time enumeration,
if it is regular enough. To capture the notion of regularity,
we define the notion of gap for a machine $M$:  $i$ is a \emph{$p$-gap} of $M$ (for $x$) if $T(x,i+1) - T(x,i) > p(|x|)$. 
If there is a polynomial $p$, such that $M$ has no $p$-gap, then $M$ has delay $p$. In fact, if $M$ has
only few $p$-gaps, it can be turned into a polynomial delay algorithm, see~\cite{capelli2019incremental}.

We can prove something more general, by requiring the existence of a large interval of solutions without $p$-gaps rather than bounding the number of gaps, as stated in Proposition~\ref{prop:amortization_regular}. It captures more cases, for instance an algorithm which outputs an exponential number of solutions at the beginning without gaps and then has a superpolynomial number of gaps. The idea is to compensate for the gaps by using the dense parts of the enumeration. It is done by running several copies of the incremental linear algorithm, and detecting intervals of time in the eumeration with many output solutions, so that it can be used to amortize intervals with fewer solutions. The proof is quite technical,
 see~\cite{capelli2019incremental}.

\begin{proposition}\label{prop:amortization_regular}
 Let $M$ be an incremental linear time machine solving $\enum{A}$ using polynomial space. 
 Assume there are two polynomials $p$ and $q$ such that for all $x$ of size $n$, and for all $k \leq |A(x)|$
 there exists  $a < b \leq k$ such that $b - a > \frac{k}{q(n)}$ and there are no $p$-gaps between the $a$th and the $b$th solution. Then $\enum{A} \in \DelayP^{poly}$.
\end{proposition}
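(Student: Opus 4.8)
The plan is to run several copies of $M$ in parallel and feed a single output queue from whichever copy is currently "ahead", using the guaranteed dense interval of each copy to pay for the sparse ones. Concretely, I would fix the polynomials $p$ and $q$ from the hypothesis and maintain $q(n)+1$ (or a similar polynomial number of) simulated runs of $M$, all on the same input $x$; since $M$ uses polynomial space, the total space of all copies stays polynomial, and the output queue will be shown to stay polynomially bounded as well. The machine $M'$ advances these copies in a round-robin fashion, and every time one of them emits a solution, $M'$ pushes it onto a FIFO queue; $M'$ pops and actually $\Output$s one solution from the queue every $c\cdot p(n)$ steps for a suitable constant $c$. Duplicates across copies are killed with a trie on the already-output solutions — wait, that costs exponential space, so instead the copies must be \emph{staggered}: copy $j$ is simply $M$ itself but $M'$ only ever reads solutions number $j, j+(q(n)+1), j+2(q(n)+1),\dots$ from it, i.e. the copies partition the index set $\{1,\dots,|A(x)|\}$ into residue classes, so no solution is produced twice by the ensemble and no trie is needed.

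The key steps, in order, would be: (1) formalize the staggered-copies construction and check that together they enumerate $A(x)$ exactly once each; (2) define the scheduling of $M'$: interleave steps of the copies so that after $t$ steps of $M'$ each copy has run $\approx t/(q(n)+1)$ steps, and release one queued solution every $\Theta(p(n))$ steps of $M'$; (3) the heart of the argument — show the queue never empties prematurely and never grows too large. For the non-emptiness part I would use the hypothesis: for the prefix of the first $k$ solutions there is an interval $(a,b]$ with $b-a > k/q(n)$ and no $p$-gap, so in that window the relevant copy produces one solution every $\le p(n)$ of \emph{its own} steps, hence every $\le (q(n)+1)p(n)$ steps of $M'$; a counting/amortization argument over all $k$ then shows that at the moment $M'$ wants to release the $k$-th solution, at least $k$ solutions have already been produced by the ensemble, so the queue is nonempty. (4) Conclude the delay of $M'$ is $O((q(n)+1)p(n)) = \mathrm{poly}(n)$ and the space is polynomial (polynomially many polynomial-space copies, plus a queue whose size is bounded by the maximal lead, itself polynomial because copies are only advanced round-robin), hence $\enum{A}\in\DelayP^{poly}$.

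The main obstacle I expect is step (3), specifically controlling the queue from \emph{both} sides simultaneously with only polynomial space: the dense-interval hypothesis gives a lower bound on how fast solutions accumulate (preventing underflow), but I must also argue the queue does not blow up to exponential size, which requires that the "productive" copy is not arbitrarily far ahead of the release rate. The staggering by residue classes helps — each copy only contributes a $1/(q(n)+1)$ fraction of solutions — but making the bookkeeping precise (which copy owns which index, how to detect the good interval online without knowing $a,b,k$ in advance, and how to reallocate effort once a copy's good interval is exhausted) is the delicate, technical part; this is exactly why the paper defers it to~\cite{capelli2019incremental}. A secondary subtlety is the very start of the enumeration: the hypothesis with small $k$ only guarantees a short good interval, so the preprocessing/first-delay bound of $M'$ must absorb an initial transient, which I would handle by allowing $M'$ a polynomial preprocessing phase before the first $\Output$.
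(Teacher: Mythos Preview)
Your construction collapses to the ordinary queue-based amortization, which is exactly what the proposition is trying to avoid. If all $q(n)+1$ copies of $M$ start identically and are advanced round-robin, then at every moment they are in the \emph{same} state of $M$; assigning residue classes of solution indices to different copies is a no-op, since every copy reaches solution $i$ simultaneously. So your scheme is just one copy of $M$ feeding one FIFO queue at a fixed release rate, and that cannot satisfy the polynomial-space bound. Take $M$ that outputs solutions $1,\dots,m$ at times $1,\dots,m$, then idles for roughly $m\cdot d(n)$ steps (where $d(n)$ is its incremental delay), then outputs $m+1,\dots,2m$ in a second burst: this $M$ is incremental linear, satisfies the hypothesis with $q(n)=3$ and $p(n)=1$, and uses polynomial space. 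To survive the idle phase without underflow you must release at rate at most $\approx 1/d(n)$, but then at the end of the first burst the queue already holds $\approx m(1-1/d(n))$ solutions---exponential when $m=2^n$. Your claim that the queue is ``bounded by the maximal lead \dots\ because copies are only advanced round-robin'' conflates the lead of one copy over another (trivially $O(1)$ in lockstep) with the lead of production over release (unbounded here).

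The idea you are missing, and the one the paper's sketch points to, is that the copies must sit at \emph{different positions} in $M$'s timeline. Some copies run ahead to \emph{detect} where the dense intervals lie; when a dense interval is found, its starting configuration---a polynomial-size state snapshot of $M$---is recorded, and another copy is positioned there to emit solutions on demand. Solutions are thus regenerated from pointers rather than buffered explicitly, and that is what keeps the space polynomial. The technical work, and the genuine difficulty you correctly anticipate, is scheduling these asynchronous copies so that a fresh dense interval is always available before the current one is exhausted; this is where the bound $b-a>k/q(n)$ is actually used, and it is the part deferred to~\cite{capelli2019incremental}.
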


\subsection{Geometric Amortization}

The method of the last section is not sufficient to prove $\IncP_1^{poly} = \DelayP^{poly}$. 
To do so, we introduce a new amortization method, which is also based on simulating the incremental linear algorithm at several points in time. It is simpler, more general and more efficient than the one designed to prove Proposition~\ref{prop:amortization_regular}.
 We give here the algorithm and a sketch of its analysis; for more details and related results, see~\cite{capelli2021amortization}. 

To simplify the presentation, we consider that the computation of the incremental linear machine $M$ on an input $x$ can be represented by a list $L_{M,x}$. 
If $M$ outputs a solution at time step $i$, then the $i$th element of $L_{M,x}$ contains this solution, otherwise it contains a special value denoting the absence of solution.
It is possible to go from an element of $L_{M,x}$ to the next one by simulating one computation step of $M$. To store a pointer to the $i$th element of $L_{M,x}$, we store the state of $M$ at time step $i$. 

We describe the complexity of an algorithm working on such a list as the number of reads of elements of $L_{M,x}$ for the time and the number of stored pointers for the space. In particular, for an algorithm outputting solutions of $L_{M,x}$, the delay of the algorithm is the maximal number of reads between the output of two distinct solutions or between the output of a solution and the end of the algorithm. We ignore the cost of maintaining counters, of switching between different contexts of computation etc. It can be shown that these costs amount only to a constant time overhead on a RAM machine~\cite{capelli2021amortization}.

 The pseudo code of the \emph{geometric amortization} method we propose is given in Algorithm~\ref{alg:enumit}. 
Let $M$ be the machine we amortize, it is in incremental delay $p(n)$ and outputs $\ell$ solutions on input $x$. We assume that $p(n)$ and $\ell$, or a bound on their values, are known.
 The idea is to maintain a list $p$ of $N := 1+\lceil \log(\ell)  \rceil$ pointers to elements of $L_{M,x}$. At the start of the algorithm, we initialize each $p[i]$ for $i<N$ to be a pointer to the first entry of $L_{M,x}$. We also initialize a table $c$ of $N$ counters, whose entries are initialized to $1$. Each time $p[i]$ is moved to the next entry of $L_{M,x}$, we increment $c[i]$ so that $p[i]$ always points toward $L[c[i]]$ (lines~\ref{line:moveptr}-\ref{line:incrc} in Algorithm~\ref{alg:enumit}).

Each $p[i]$ explores $L_{M,x}$. If at some point in the algorithm, $p[i]$ points toward a solution of $L_{M,x}$, and if $2^{i-1}p(n) < c[i] \leq 2^i p(n)$, then we output the solution (for $i=0$, we take the bounds $1 \leq c[0] \leq p(n)$). Thus, solutions will be output at most once since the index of the solution determines which pointer  discovers it.

We now describe the mechanism we use to move the different pointers. We move one pointer $p[j]$ for at most $2p(n)$ steps (corresponding to the loop starting at Line~\ref{line:loop2} in Algorithm~\ref{alg:enumit}). If we find a solution in the zone corresponding to $p[j]$, that is $[1+2^{j-1}p(n); 2^j p(n)]$, we output it and move back to pointer $p[N-1]$. If we move by more than $2p(n)$ steps without finding a solution in the right zone, we decrease $j$ by $1$ and continue the exploration. If $j$ is $0$ and no solution has been found in the zone of $p[0]$, that is $[1;p(n)]$, we stop the algorithm.

\SetKw{Break}{break}
\begin{algorithm}
 \KwData{A list $L$ containing $\ell$ solutions and an incremental delay $p(n)$}
 \Begin{
   $N \gets 1+\lceil \log(\ell) \rceil$\;
   Initialize a table $p$ of size $N$ containing $N$ pointers to $L[1]$\;
   Initialize a table $c$ of size $N$ containing $N$ integers set to $1$\;
   $j \gets N-1$\;
   \While{$j > 0$}{
     \For{$b \gets 2p(n)$ \textbf{to} $0$}{\label{line:loop2}
       $s \gets $ the value of the entry of $L$ pointed by $p[j]$\;
       Move $p[j]$ to the right\; \label{line:moveptr}
       $c[j] \gets c[j]+1$\; \label{line:incrc}
       \If{$s$ is a solution \textbf{\emph{and}} $1+2^{j-1}p(n) \leq c[j] \leq 2^jp(n)$}{
         Output solution $s$\;
         $j \gets N-1$\;
         \Break;
       }
     }
     \If{$b = 0$}{$j \gets j-1$}
    }
   }
   \caption{Pseudocode for the geometric amortization method}
  \label{alg:enumit}
\end{algorithm}

\begin{theorem}[\cite{capelli2021amortization}]\label{th:amortization}
Let $M$ be a machine in incremental delay $p(n)$, space $s(n)$. Algorithm~\ref{alg:enumit} on input $(L_{M,x},\ell,p(|x|))$ outputs all solutions of $L_{M,x}$ with a delay $O(\log(\ell)p(n))$, space $O(\log(\ell)(s(n)+\log(\ell)))$. 
\end{theorem}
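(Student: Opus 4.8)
The plan is to analyze Algorithm~\ref{alg:enumit} by establishing three things: (1) every solution of $L_{M,x}$ is output exactly once; (2) the delay between two consecutive outputs (and between the last output and termination) is $O(\log(\ell)p(n))$; and (3) the space is $O(\log(\ell)(s(n)+\log(\ell)))$. The space bound is the easiest: we store $N = 1+\lceil\log(\ell)\rceil$ pointers, each of which is a snapshot of the state of $M$ and hence of size $O(s(n))$, plus $N$ counters $c[j]$, each bounded by $\ell$ and so of size $O(\log(\ell))$; summing gives $O(\log(\ell)(s(n)+\log(\ell)))$, and the bookkeeping (the index $j$, the loop counter $b$) is negligible.

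For correctness of the output set, the key invariant is the partition of solution indices into zones: pointer $p[j]$ is responsible for outputting the solution whose index $i$ satisfies $1+2^{j-1}p(n) \le i \le 2^j p(n)$ (with $p[0]$ covering $[1;p(n)]$). Since $M$ is in incremental delay $p(n)$, the $\ell$-th solution appears at position at most $\ell p(n) \le 2^{N-1}p(n)$, so these zones cover all solution positions, and they are pairwise disjoint by construction; hence each solution is output by exactly one pointer, and at most once because $c[j]$ is strictly increasing so the zone condition $1+2^{j-1}p(n)\le c[j]\le 2^jp(n)$ holds for $p[j]$ at exactly the steps where it traverses its zone. It remains to check that every pointer actually reaches the end of its zone before the algorithm halts — i.e. that the algorithm does not stop prematurely. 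This follows from the control flow: $j$ is only decremented when $p[j]$ has advanced $2p(n)$ steps without a hit, and the algorithm only halts when $j$ hits $0$ and $p[0]$ has also exhausted a $2p(n)$-block; one argues by induction that whenever $j$ is decremented, $p[j]$ has already been advanced past the entire interval $[1;2^jp(n)]$ that it needs to see (the factor $2$ in the loop bound $2p(n)$ is exactly what pays for the "wasted" steps where $p[j]$ re-reads entries below its zone).

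The heart of the argument — and the main obstacle — is the delay bound. The subtle point is that after outputting a solution we reset $j \gets N-1$, so between two outputs the algorithm may cascade down through many values of $j$, spending up to $2p(n)$ reads at each level, for a total of up to $2Np(n) = O(\log(\ell)p(n))$ reads. To make this rigorous I would set up an amortization/potential argument: assign to each pointer $p[j]$ a "budget" reflecting how far it has advanced relative to how many solutions it has already output, and show that whenever a full $2p(n)$-block passes at level $j$ without an output at that level, $p[j]$ has made genuine progress through its zone that can be charged against future solutions it owns. The geometric sizing of the zones ($|zone_j| \approx 2^{j-1}p(n)$) is what makes this balance: pointer $p[j]$ has roughly twice the work of $p[j-1]$ but is visited correspondingly less often, so the amortized cost per solution across all levels telescopes to $O(\log(\ell)p(n))$. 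One then has to confirm that the very first output and the final stretch after the last output also respect this delay, which follows from the same block-size bookkeeping. Since the per-step RAM overhead (context switches, counter maintenance) is $O(1)$ as noted in the excerpt, multiplying the read-count bounds by this constant yields the claimed RAM delay and space, completing the proof.
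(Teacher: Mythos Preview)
Your space argument and the ``at most once'' part of correctness are fine and match the paper. But you have the difficulty backwards.

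The delay bound is \emph{not} the main obstacle; the one-line argument you yourself give is already complete. Between two outputs, $j$ starts at $N-1$ and can only decrease (it is reset to $N-1$ solely upon an output), spending at most $2p(n)$ reads per level, hence at most $2Np(n)=O(\log(\ell)p(n))$ reads total. No potential function, no budget accounting, no telescoping is needed. The paper states this in two sentences.

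The real obstacle is the ``at least once'' part: showing the algorithm does not halt before every pointer has swept its entire zone. Here your claimed invariant --- ``whenever $j$ is decremented, $p[j]$ has already been advanced past the entire interval $[1;2^jp(n)]$'' --- is simply false. Consider the very first descent: $p[N-1]$ advances by $2p(n)$ and $j$ is decremented, but $p[N-1]$ is nowhere near the end of its zone, which lies beyond position $2^{N-2}p(n)$. A pointer is visited many times across the run; a single decrement says nothing about it having finished. What actually works (and what the paper uses) is the invariant that $c[j]$, the current position of pointer $p[j]$, is always at least $p(n)$ times the number of solutions already output by the lower-indexed pointers $p[0],\dots,p[j-1]$. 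This holds because every output at a level below $j$ forces a later pass through level $j$ that advances $p[j]$ by $2p(n)$ before control can return below $j$. Combined with the incremental-delay guarantee (the $k$-th solution sits at position at most $kp(n)$), this invariant forces each $p[j]$ to have crossed its whole zone by the time the lower zones are exhausted, so the algorithm cannot terminate early.
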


\begin{proof}
(Rough sketch)
Algorithm~\ref{alg:enumit} uses $O(\log(\ell))$ pointers in the table $p$. Each of them can be stored with space in $O(S)$ where 
$S$ is the space used by the machine $M$.
It is easy to see by construction that the delay between two solutions is $O(\log(\ell)p(n))$. Indeed, the complexity of executing the loop starting at Line~\ref{line:loop2} is $O(p(n))$. Now, either a solution is found before $j$ reaches $0$, in which case, we have a delay of at most $O(Np(n)) = O(\log(\ell)p(n))$ since $j$ is decreasing and starts at $N-1$. Or $j$ reaches $0$ and no solution is found and this is the end of the algorithm, thus there is a delay of at most $O(\log(\ell)p(n))$ between the last output solution and the end of the algorithm. 

As an invariant of the algorithm, we prove that $c[i+1]$, that is the number of steps done by the $i+1$st pointers is larger than the number of solutions produced by the first $i$ pointers times $p(n)$. It can then be used to prove that Algorithm~\ref{alg:enumit} does not stop before outputing all solutions.
\end{proof}

Let us consider a slight variation of Algorithm~\ref{alg:enumit}, in which a pointer which has finished to explore its zone is removed from the list of pointers. In that case, on average, each entry of $L_{M,x}$ is read at most twice, which implies that the total time of the geometric amortization is the same as the original enumeration, up to a constant factor.

An implementation of the geometric amortization method called Coussinet is available on Florent Capelli's \href{http://florent.capelli.me/coussinet/coussinet.html}{website},
see Figure~\ref{fig:coussinet} for an illustration.

\begin{figure}
\includegraphics[scale = 0.2]{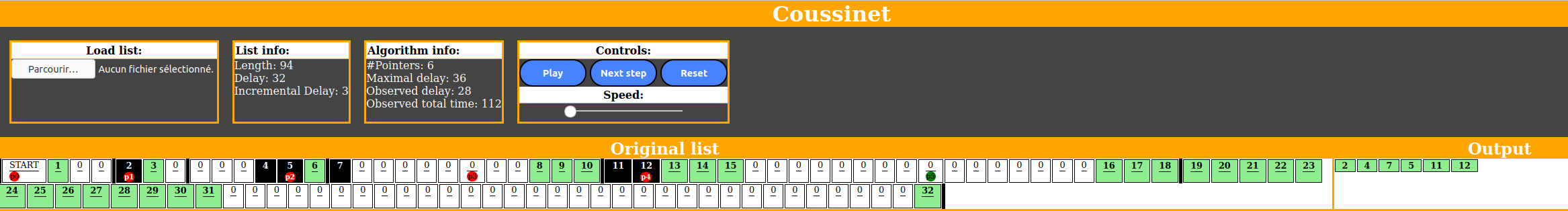}
\caption{The interface of Coussinet which takes an $\IncP_1$ list of outputs and does the amortization as in Theorem~\ref{th:amortization}.}
\label{fig:coussinet}
\end{figure}

As a corollary of Theorem~\ref{th:amortization}, we obtain a collapse of complexity classes, which was an open question in~\cite{capelli2019incremental}.

\begin{corollary}
$\IncP_1^{poly} = \DelayP^{poly}$.
\end{corollary}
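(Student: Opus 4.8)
The plan is to prove the two inclusions separately. The inclusion $\DelayP^{poly} \subseteq \IncP_1^{poly}$ is immediate and requires no amortization: if a machine solves $\enum{A}$ with polynomial preprocessing, polynomial delay $d(n) = O(n^b)$ and polynomial space, then $T(x,k) - T(x,1) \le (k-1)d(n) = O(k\,n^b)$, so the very same machine already runs in linear incremental time with polynomial preprocessing and polynomial space. The real content is the converse $\IncP_1^{poly} \subseteq \DelayP^{poly}$, and for this I would simply invoke the geometric amortization of Theorem~\ref{th:amortization}.

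Concretely, start from a machine $M$ witnessing $\enum{A} \in \IncP_1^{poly}$: polynomial preprocessing, incremental delay a polynomial $p(n)$, and space a polynomial $s(n)$. Because $\enum{A} \in \EnumP$, the predicate $A$ is polynomially balanced, so solutions have length at most some polynomial $q(n)$, whence $\ell := |A(x)| \le 2^{q(n)}$ and $\log \ell \le q(n)$. The amortized machine $M'$ first performs the preprocessing of $M$ (polynomial time) and then runs Algorithm~\ref{alg:enumit} on the list $L_{M,x}$ simulated on the fly --- moving a pointer to the next entry is one step of $M$, and a pointer is stored as the configuration of $M$ at that time step, of size $O(s(n))$. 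Since Algorithm~\ref{alg:enumit} is phrased with the exact value $\ell$ but any upper bound works just as well, $M'$ uses $N := 1 + q(n) \ge 1 + \lceil \log \ell \rceil$ pointers; the surplus pointers are harmless, as their zones $[1+2^{j-1}p(n),\,2^j p(n)]$ lie beyond the last time step at which $M$ emits a solution (at most $\ell\, p(n)$), so they never output anything, and the coverage and termination invariants in the proof of Theorem~\ref{th:amortization} are unaffected. By Theorem~\ref{th:amortization}, $M'$ outputs exactly the solutions of $\enum{A}$ without repetition, with delay $O(N p(n)) = O(q(n) p(n))$ and space $O(N(s(n) + N)) = O(q(n)(s(n) + q(n)))$, both polynomial in $n$; with the polynomial preprocessing this gives $\enum{A} \in \DelayP^{poly}$.

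The heavy lifting is entirely inside Theorem~\ref{th:amortization}, so there is essentially no obstacle left in the corollary itself; the only points that need care are exactly the ones the queue-based amortization of Proposition~\ref{prop:UInca} mishandles. First, one must use an amortization whose space overhead is multiplicative in $\log \ell$ rather than additive in the (possibly exponential) number of buffered solutions --- this is precisely what geometric amortization provides. Second, one must check that both $\log \ell$ and the incremental delay $p(n)$ are polynomially bounded, so that the $O(\log \ell)$ factors appearing in the delay and space bounds of Theorem~\ref{th:amortization} remain polynomial in $n$; here the former follows from polynomial balance of $\EnumP$ and the latter from the definition of $\IncP_1$.
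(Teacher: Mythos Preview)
Your proposal is correct and follows exactly the paper's approach: the corollary is deduced directly from Theorem~\ref{th:amortization} (geometric amortization), with the polynomial balance of $\EnumP$ supplying a polynomial bound on $\log \ell$ so that the $O(\log \ell)$ factors in delay and space remain polynomial. You have simply spelled out details the paper leaves implicit, including the harmless use of an upper bound on $\ell$ in place of its exact value.
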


As a consequence of this corollary, the right measure of tractability seems to be incremental linear time rather than polynomial delay. Indeed, it is not clear that polynomial delay is really necessary in applications and we have proved that if it is, we can obtain it from a linear incremental algorithm,
with a small space overhead and no change in the total time.
As an application of Theorem~\ref{th:amortization}, we amortize a flashlight search solving $\enum{\textsc{DNF}}$ to turn an average delay into a delay in Chapter~\ref{chap:low}.

\subsection{Amortizing without Assumptions}

To amortize a machine $M$ on an input $x$ thanks to Algorithm~\ref{alg:enumit}, we must know these values (or a bound on them):
\begin{itemize}
  \item the space used by the machine $M$ on $x$
  \item the number of solutions produced by $M$ on $x$
  \item the incremental delay of $M$ on $x$
\end{itemize}

We briefly explain how we can do without knowing the first two values, while 
the third one is more problematic (see~\cite{capelli2021amortization}).

The space used by $M$ is used implicitly in the algorithm to store the pointers to different 
parts of the list $L_{M,x}$. Indeed, a pointer represents the state of the machine at some point of its computation,
hence we need to store all used registers. If we have a bound on this space, we reserve enough space for each pointer at the beginning of the algorithm.
If we do not know how much space is used, we can use a dynamic array, with lazy initialization and expansion. The overhead is either logarithmic in time
or $n^{\epsilon}$ in space, for any $\epsilon$, depending on the expansion factor of the dynamic array.

The number of solutions in $L_{M,x}$ is an input of Algorithm~\ref{alg:enumit}, it is used to determine the number of pointers.
 Note that, in Algorithm~\ref{alg:enumit}, all pointers which are not yet in their zone move together, one after the other.
Hence, we could just move one of them to represent them all. In this variant, all pointers are in their zone, except the last one.
When the last one enters its zone, we can duplicate it to create a new pointer for the next zone. This algorithm mimicks Algorithm~\ref{alg:enumit},
but does not require the number of pointers beforehand. If the machine $M$ does not use more than $O(t)$
space after $t$ time steps, then the copy operation to obtain a new pointer is done lazily with a constant time overhead. 
If we have no guarantee on $M$, the time overhead is at most logarithmic.

Finally, we need the incremental delay of the machine $M$, to move each pointer in $L_{M,x}$ by twice this number of steps before going to the previous pointer.
In the classical amortization method using a queue, the incremental delay is also required, since it controls when an element is pulled out of the queue to be output. We prove that the classical amortization algorithm can be made to work, even when the incremental delay is unknown.
 The idea is to maintain a bound on the incremental delay of a machine $M$, while simulating it. We use a value larger than this bound, which depends on a fixed parameter 
$\epsilon$, to output elements out of the queue. This gives a polynomial delay algorithm, but the obtained delay is worse than the incremental delay of the original algorithm.

\begin{proposition}[\cite{capelli2021amortization}]
For any $\epsilon > 0$, and machine $M$ in incremental delay $O(p(n))$, there is a machine  $M_{\epsilon}$ which enumerates all outputs of $M$ in the same order, in delay $O(p(n)^{1 + \epsilon})$.
\end{proposition}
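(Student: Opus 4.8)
The plan is to show how the classical queue-based amortization can be adapted so that it no longer needs to know the incremental delay $p(n)$ exactly, at the price of a polynomial blow-up governed by $\epsilon$. First I would recall the classical scheme: simulate $M$ step by step, push each produced solution onto a FIFO queue, and after every fixed block of $B$ simulation steps pop (and output) one solution from the queue. If $B$ is chosen to be an upper bound on the true incremental delay $p(n)$, then the queue never empties before $M$ halts, because after $t$ simulation steps $M$ has produced at least $\lceil t/p(n)\rceil$ solutions while we have output at most $\lceil t/B\rceil \le \lceil t/p(n)\rceil$ of them; this yields a polynomial delay of $O(B)$. The difficulty is that we are only told $p(n) = O(p(n))$ in the sense of being a polynomial of unknown degree and constants, so we cannot simply set $B := p(n)$.

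The key idea is a doubling/guessing strategy on the value of $B$. I would run the simulation in phases: in phase $k$ we assume the incremental delay is at most $B_k$, where $B_k$ grows geometrically (e.g. $B_k = 2^k$, or more carefully $B_k = n^{k\epsilon}$ to control the final exponent), and we run the queue-amortization with block size $B_k$. We detect that the guess was too small exactly when the queue runs dry while $M$ has not yet halted; at that moment we know $p(n) > B_k$, we discard nothing that has been output (solutions already output are genuine and non-repeated since $M$ itself produces no repeats), switch to $B_{k+1}$, and — crucially — we do \emph{not} restart the simulation of $M$ from scratch but continue it, only changing the output rate. Because the true $p(n)$ is some fixed polynomial, the number of phases is $O(\log p(n)) = O(\log n)$, and the final block size used is at most $B_{\text{last}} \le C\cdot p(n)$ for the geometric ratio $C$; choosing the ratio close to $1$ (which is where $\epsilon$ enters) makes $B_{\text{last}} \le p(n)^{1+\epsilon}$ up to the additive overheads absorbed in the $O$.

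The steps, in order, would be: (1) state precisely the classical amortization lemma with a known bound $B$ and its delay guarantee $O(B)$, reproving the queue-nonemptiness invariant $(\text{\# output by time } t) \le (\text{\# produced by time } t)$; (2) define the phase structure with geometrically increasing guesses $B_k$ and the switch rule triggered by an empty queue; (3) verify that within each phase the invariant still holds \emph{relative to the produced solutions so far}, so no solution is lost and the delay inside phase $k$ is $O(B_k)$; (4) bound the number of phases by noting the process must stabilize once $B_k \ge p(n)$, hence $k = O(\log_{\rho} p(n))$ for ratio $\rho$; (5) compute the worst-case delay as $O(B_{\text{last}}) = O(p(n)^{1+\epsilon})$ after choosing $\rho = 1 + \epsilon'$ appropriately in terms of $\epsilon$, and check the total time and space remain polynomial (space: one queue plus the simulation of $M$, which uses polynomial space since $M$ is in incremental linear time on inputs in $\EnumP$).

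The main obstacle I expect is step (3): when we switch from block size $B_k$ to $B_{k+1}$ we must argue that the queue, having just been emptied, refills fast enough that the delay does not spike beyond $O(B_{k+1})$ during the transition, and that the non-repetition and completeness are preserved across the boundary. Handling the transition cleanly — in particular ruling out that repeated premature switches could, in the worst case, stack up their delays — is the delicate point; it is resolved by the observation that each guess value $B_k$ is used at most once (the guesses are monotone and we never decrease), so the transitions contribute a telescoping $\sum_k O(B_k) = O(B_{\text{last}})$ to any single inter-solution gap rather than a multiplicative factor, and by the fact that $M$'s own output sequence already guarantees distinctness so the amortizer never needs a duplicate-detection structure.
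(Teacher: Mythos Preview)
Your step (4) contains a genuine gap. The claim ``the process must stabilize once $B_k \ge p(n)$''---equivalently, ``queue empties $\Rightarrow p(n) > B_k$''---relies on the invariant $o \le t/B_k$, where $o$ is the total number of solutions output so far and $t$ the total simulation time. That invariant holds when a \emph{single} block size is used from the start, but you explicitly do not restart: $o$ is cumulative across phases, so after early phases with $B_j \ll p(n)$ we only have $o = \sum_j d_j/B_j$, which can vastly exceed $t/B_k$. Once $o > t/p(n)$, the queue may remain empty even with $B_k \ge p(n)$, until $M$'s guaranteed production $t/p(n)$ catches up with $o$---and that takes time $o\cdot p(n)$, which can be exponential.

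Concretely, take $p(n)=n$ and let $M$ output solutions $1,\dots,K$ at times $1,\dots,K$ with $K=2^n$, then solution $K{+}1$ at time $(K{+}1)n$ (and $T(j)=jn$ thereafter); this $M$ has incremental delay exactly $n$. With $B_0=2$ your scheme outputs all $K$ solutions by time $2K$. The queue is then empty and stays empty until simulation time reaches $(K{+}1)n\approx Kn$, so your doubling drives $B_{\text{last}}$ up to $\Theta(Kn)=\Theta(2^n n)$. Your telescoping bound on the transition delay is correct, but it bounds the delay by $O(B_{\text{last}})$, and $B_{\text{last}}$ itself is exponential here. Changing the geometric base (e.g.\ $B_k=n^{k\epsilon}$) does not help: the problem is not how fast $B_k$ grows but that the queue was already drained by the time $B_k$ reached $p(n)$.

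The paper's approach, as sketched just before the proposition, is different in spirit: it does not guess a block size and react when the guess fails. Instead it maintains \emph{continuously, from the first step}, a running bound on the observed incremental delay of $M$, and releases a solution from the queue only after a value strictly larger than this bound (by an $\epsilon$-dependent factor) has elapsed. The $\epsilon$-slack is applied \emph{proactively} to keep the output count from ever running ahead of $M$'s guaranteed production rate, so the queue never empties; in your scheme the slack is applied only \emph{reactively}, after the over-draining has already occurred.
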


\begin{openproblem}
Can we use the method of dynamically evaluating the incremental delay in Algorithm~\ref{alg:enumit}?
Prove that there is a machine, which takes as input another machine in incremental linear time and polynomial space and enumerates the same solutions in
polynomial delay and polynomial space.
\end{openproblem}

When amortizing without knowing the incremental delay, it seems impossible to obtain a delay equal to the incremental delay.
To prove a lower bound, we assume that the amortization method can only work on $L_{M,x}$ and not on the machine $M$ and input $x$ directly. To further simplify, in the complexity
we only take into account a read in $L_M$ as an elementary operation, and we allow an unbounded memory. Since the memory is unbounded, every solution found while exploring $L_{M,x}$ can be stored in a queue to be later output, hence an amortization method only needs to go through $L_{M,x}$ once.
In these settings, an efficient amortization method explores the list $L_{M,x}$, stores the solutions in a queue and its only freedom is to determine when solutions from the queue should be output. This depends only on the number of elements of $L_{M,x}$ seen, the number of solutions encountered and the time since the last output. When such an amortization algorithm outputs solutions from the queue too fast, it is possible to build an adversary list and prove the following theorem.

\begin{proposition}[\cite{capelli2021amortization}]
There is no RAM machine, which given a list $L_{M,x}$ representing the computation of a machine $M$ on $x$ in incremental delay $p(|x|)$, enumerates all outputs of $M$ in delay $(s-2)p(|x|)$ where $s$ is the number of solutions generated by $M$.
\end{proposition}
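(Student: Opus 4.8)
The plan is to argue by contradiction, building an adversarial list cell by cell. Suppose some RAM machine $\mathcal{A}$ reads $L_{M,x}$ from left to right, stores every solution it meets in a FIFO queue, and, on every incremental-delay-$p$ list with $s$ solutions, achieves delay at most $(s-2)p$ where $p = p(|x|)$. By the restriction of the model, the only freedom of $\mathcal{A}$ at each step is to decide — as a function of the number of cells read, the number of solutions seen, and the number of reads since its last output — whether to release the oldest queued solution; in particular $\mathcal{A}$ does not know $s$ in advance, and its behaviour on any prefix of the list is fixed before the remainder is revealed. This is exactly the lever: the adversary observes $\mathcal{A}$ on a prefix and only afterwards decides how to continue, so that the value of $s$ (hence the guarantee $(s-2)p$) is chosen against $\mathcal{A}$.

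First I feed $\mathcal{A}$ the all-solutions prefix, i.e. every cell $1,2,3,\dots$ carries a distinct solution, which is a legal incremental-delay-$p$ prefix. Let $o_m$ be the number of solutions $\mathcal{A}$ has released after reading $m$ cells, so $o_m \le m$ and $o_{m+1} \le o_m + 1$. If $\mathcal{A}$ ever empties its queue early, i.e. $o_m = m$ for some $m \le 2p-1$, the adversary stops the prefix at cell $m$, fills cells $m+1,\dots,(m+1)p-1$ with non-solutions, places a solution in cell $(m+1)p$, and ends the list there. The result is a valid incremental-delay-$p$ list with $s = m+1 \le 2p$ solutions, on which $\mathcal{A}$ has released its $m$-th solution by step $m$ but cannot release the $(m+1)$-st before reading cell $(m+1)p$; hence one of its delays is at least $(m+1)p - m = sp - s + 1$, and $sp - s + 1 > sp - 2p = (s-2)p$ since $s \le 2p$, a contradiction (the case $m=1$, giving $s=2$, is trivial since the bound $(s-2)p$ is then $0$).

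It remains to treat the case where $\mathcal{A}$ keeps at least one queued solution throughout the first $2p-1$ cells, and here the construction is more involved. The adversary continues the dense prefix and exploits the following feature of incremental delay: once $\mathcal{A}$ has seen $k$ solutions, a continuing list must place the $(k+1)$-st solution by cell $(k+1)p$, so a long run of non-solutions is always capped and can be used to force the next solution to sit at the extreme admissible position; the adversary inserts such a run and makes the forced solution the last one, closing the list there. One then argues that whatever release schedule $\mathcal{A}$ has committed to on the revealed dense prefix, the delay it incurs around this forced final solution — together with the number of solutions it has or has not released by the end of the dense part — again exceeds $(s-2)p$ for the resulting $s$. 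The delicate point, and the step I expect to be the real obstacle, is making this second analysis exhaustive: one has to carry exactly which commitments of $\mathcal{A}$ on each prefix are forced by the hypothesised delay bound, balance the ``released too few'' against the ``released too many'' subcases, and check at every stage that the list actually handed to $\mathcal{A}$ satisfies the incremental-delay-$p$ property on all of its prefixes, so that the assumed guarantee genuinely applies to it.
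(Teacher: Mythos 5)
Your first case is sound and is exactly the mechanism the paper alludes to (``when such an amortization algorithm outputs solutions from the queue too fast, it is possible to build an adversary list''): if the queue is empty at some time $m\leq 2p-1$, placing the $(m+1)$-st and last solution at cell $(m+1)p$ forces a gap of at least $(m+1)p-m>(m-1)p=(s-2)p$, and the constructed list indeed has incremental delay $p$. The problem is that the complementary case, which you yourself flag as ``the real obstacle,'' is not a proof at all, and the plan you sketch for it would not go through. An algorithm that keeps a reserve in its queue is not hurt ``around the forced final solution'': during the long run of non-solutions it can release queued solutions at whatever pace its rule dictates, and the moment the list ends it can flush everything that remains with zero reads between outputs. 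Concretely, the strategy that withholds all seen solutions until the end of the list, or the strategy that always keeps one solution in reserve and spends it adaptively, never falls into your case 1 and incurs no large gap in the scenario you describe; your argument gives no quantity that exceeds $(s-2)p$ against them. Ruling such strategies out requires a different lever than the one you set up: either the delay must be charged from the start of the computation (otherwise ``read everything, then flush'' has all gaps equal to zero), or the adversary must exploit its freedom to choose $p$ \emph{after} observing, for each number $k$ of solutions already shown, the finite time at which the algorithm first commits an output during an arbitrarily long barren stretch, and then place the next (and nearly last) solution beyond that time with $s$ kept small.

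This points to the second, structural weakness: you fix $p$ once and for all and split on whether the queue empties before time $2p-1$, so your dichotomy threshold is tied to a parameter the adversary should be choosing adaptively. The statement quantifies over all machines $M$ and inputs $x$, so the adversary may watch the algorithm's release schedule on a dense prefix (or on a prefix followed by a barren run) and only then decide the value of $p$, the position of the last solution, and hence $s$; this adaptivity in $p$, not just in $s$, is what defeats conservative strategies such as ``output the $k$-th solution at time $Kk$'' for a large constant $K$, which never trigger your case 1 for a fixed small $p$. As it stands, the proof is complete only for algorithms that drain their queue early, which is the easy half of the adversary argument; the half that actually carries the lower bound is missing.
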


\chapter{Below Polynomial Delay}
\label{chap:low}
\minitoc

\section{Strong Polynomial Delay}

When the input is huge with regard to a single solution, we would like a sublinear delay in the input size,
or even a delay which depends only on the size of a single generated solution. This situation naturally appears when the input is a hypergraph
and solutions are subsets of its vertices, or when solutions are of constant size, for instance when enumerating solutions of a first order query over a database.
In the following definition, a preprocessing \emph{polynomial in the size of the input} is allowed,
since the input should at least be read before outputting solutions.

\begin{definition}[Strong Polynomial delay]
A problem $\enum{A} \in \EnumP$ is in $\SDelayP$ if there exist constants $a\geq 0$ and $C>0$, and there is a machine $M$ which solves $\enum{A}$ for all inputs $x$ such that:
\begin{itemize}
  \item $T(x,1) \leq p(|x|)$, with $p$ a polynomial (polynomial preprocessing)
  \item for all $1 < k \leq |A(x)|$, $|T(x,k) - T(x,k-1)| < C|y_k|^a$, where $y_k$ is the $k$th solution output by $M$ (strong polynomial delay). 
\end{itemize} 
 \end{definition}

 Presently, very few problems have strong polynomial delay algorithms: generating $s-t$ paths in a DAG, generating assignments of existential $FO$ formulas with second order free variables~\cite{durand2011enumeration} or existential $MSO$ formulas over bounded tree width structures~\cite{DBLP:journals/dam/Courcelle09,AmarilliBJM17}.  
 When solving an enumeration problem with an infinite set of solutions~\cite{florencio2015naive}, any reference to the input size or the total time is irrelevant and strong polynomial delay is the proper notion of tractability. Moreover, a strong polynomial delay algorithm can be applied to large inputs given implicitly: this idea is used to give polynomial delay algorithms for generating dominating sets over several classes of graphs~\cite{blind2020locally}.

  Even an extremely simple problem like $\enum{DNF}$, the enumeration of models (or satisfying assignments) of a DNF formula, has no simple algorithm in strong polynomial delay. A formula in disjunctive normal form (DNF) is a disjunction of terms. Each term is a conjunction of litterals, that is variables or negation of variables. For instance $D = (X_1 \wedge \neg X_2) \vee (X_2 \wedge X_3)$ is a DNF formula with two terms over the variables $\{X_1,X_2,X_3\}$, and four models $\{(1, 0, 0),(1, 0, 1),(0, 1, 1), (1,1,1)\}$.
   The structure of the assignments of a DNF is extremely regular: it is the \emph{union} of the models of its terms. A term sets the values of some variables, and we can enumerate all possible values of the remaining variables in constant delay using Gray Code. The main difficulty to obtain an algorithm with strong polynomial delay for $\enum{DNF}$ is that \emph{the union is not disjoint} and the simple algorithm consisting in enumerating the solutions of each term  one after the other produces redundant solutions. Solution repetitions because of non disjoint union is a common problem in enumeration and this issue appears in its simplest form when solving $\enum{DNF}$. We hope that understanding finely the complexity of this problem and designing better algorithms to solve it will shed some light on other similar problems. We conjecture that $\enum{DNF}$ cannot be solved with strong polynomial delay.

\begin{conjecture}[DNF Enumeration Conjecture]
 $\text{\enumDNF} \notin \SDelayP$.
\end{conjecture}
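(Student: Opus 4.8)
Since the statement is a conjecture, I describe a plausible line of attack and where I expect it to stall rather than a complete proof. The set of models of a DNF $D$ with $m$ terms over $n$ variables is a union of $m$ subcubes of $\{0,1\}^n$, so $\enum{\textsc{DNF}} \in \SDelayP$ would mean that, after preprocessing polynomial in $|D|$ (hence in $nm$), one can stream the at most $2^n$ points of this union with delay $\mathrm{poly}(n)$, independent of $m$. The plan is to argue that such a stream is essentially an implicit \emph{orthogonalization} of $D$, i.e.\ a decomposition of the union into pairwise disjoint subcubes through which one can navigate in $\mathrm{poly}(n)$ time per block, and then to exhibit an explicit family of DNFs of size $\mathrm{poly}(n)$ for which every navigable orthogonalization has super-polynomial size. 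Candidate families come from reliability theory, where the sum-of-disjoint-products size of a function is known to be exponentially larger than its DNF size for some functions; what is needed is a lower bound robust enough to also forbid the savings a RAM enumeration algorithm can make by reusing the last solution it output, interleaving computations, and so on. I expect this robustness requirement to be the main obstacle: formalizing and proving ``$\SDelayP$-enumerable $\Rightarrow$ small navigable orthogonalization'' has the flavour of a circuit lower bound, and, as stressed earlier in this manuscript, the enumeration setting has no completeness results, so the hard instance cannot simply be imported from a known hard problem by a reduction.

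It is worth recording why the obvious reductions do not give the statement. Taking $D = \neg\varphi$ for a CNF $\varphi$ with $m$ clauses makes the models of $D$ exactly the assignments falsifying $\varphi$, so $D$ is a tautology iff $\varphi$ is unsatisfiable; but a strong polynomial delay algorithm only lets one list the up to $2^n$ models of $D$, and when $\varphi$ is satisfiable the missing assignments are hidden among up to $2^n$ outputs, so running the algorithm to completion costs $\Omega(2^n)$ and beats nothing --- in particular this yields no contradiction with $\ETH$ or $\SETH$. A workable variant would use a gap promise, $\varphi$ either unsatisfiable or satisfied by a $(1-2^{-\Omega(n)})$ fraction of assignments, equivalently $D$ either a tautology or with at most $2^{(1-\delta)n}$ models, and would have to show that this covering-code-flavoured promise problem stays hard under a fine-grained hypothesis while keeping $m$ polynomial; making the padding work is the delicate point here.

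A more reachable intermediate goal, which I would tackle first, is to prove the conjecture for restricted algorithms: those that output models in a fixed (say lexicographic) order, memoryless ones in the sense of $\NextP$, or algorithms confined to polynomial space. In the lexicographic case, computing the next model amounts to deciding whether some term of $D$ admits a satisfying extension inside a given lexicographic interval --- equivalently, whether one of $m$ fixed subcubes meets a query subcube --- so planting a conjecturally hard partial-match (or set-disjointness) instance into $D$ would separate $\SDelayP$ from $\DelayP$ for such algorithms under a data-structure lower-bound hypothesis. Establishing these restricted cases would at least rule out every technique currently known to produce strong polynomial delay. The remaining step --- removing the order and memory restrictions --- is the one I expect to be hardest, precisely because an unrestricted RAM algorithm is free to enumerate in whatever order keeps its local computation cheap, and none of the present lower-bound tools constrain that freedom.
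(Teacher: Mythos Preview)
The statement is a conjecture and the paper does not prove it; it explicitly lists it as open and even suggests a first step (show there is no constant or logarithmic delay algorithm, ideally under $\ETH$). So your choice to sketch attack lines rather than a proof is the right stance.

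That said, your main line of attack has a concrete obstruction you do not address. The paper shows, just after stating the conjecture, that the DNF Enumeration Conjecture is \emph{false} on monotone DNFs (via the average-delay bounds in Table~\ref{tab:results} combined with Theorem~\ref{th:self-reduction}) and that $k$-DNFs have delay $2^{O(k)}$, hence constant for fixed $k$. Thus any lower-bound argument must exploit both non-monotonicity and unbounded term width in an essential way. Your orthogonalization / sum-of-disjoint-products approach does not: the classical families witnessing a gap between DNF size and disjoint-DNF size are typically monotone, and if your ``$\SDelayP$-enumerable $\Rightarrow$ small navigable orthogonalization'' implication were provable it would apply to monotone DNFs as well, contradicting the positive result. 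So either the implication fails already for monotone inputs (the algorithm exploits structure that no orthogonalization captures), or the hard families you have in mind are all handled by the paper's algorithms. Either way, you need a candidate hard family that is genuinely non-monotone with large terms and an argument that separates it from the monotone case; nothing in your proposal supplies this.

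Your discussion of why the $\neg\varphi$ reduction fails is correct and matches the spirit of the paper's remarks on why standard reductions do not yield hardness here. Your intermediate goal of restricting to lexicographic or memoryless algorithms is reasonable, but note again that the paper's flashlight search already enumerates in lexicographic order and achieves the sub-$m$ delay that refutes the Strong DNF Enumeration Conjecture, so even in this restricted model the known upper bounds are nontrivial and a partial-match style lower bound would have to beat them.
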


We did even state a Strong DNF Enumeration Conjecture in~\cite{capelli2020enumerating}, similar in precision to $\ETH$,
stating that no algorithm solves the problem {\enumDNF} in delay $o(m)$,  where $m$ is the number of terms of
   the DNF. We show later in this section that this conjecture does not hold. 

To solve $\enum{DNF}$, a flashlight search can be used. Indeed, $\ext{DNF}$ can be solved in time $|D|$, where $D$
is the input DNF. In fact, the complexity of solving the extension problem can be amortized over a branch of the tree of partial solutions, so that the delay of the enumeration algorithm is $|D|$~\cite{capelli2020enumerating}.
It turns out that the flashlight method can be improved in several directions. First, the input DNF is stored as a trie~\cite{fredkin1960trie}, and when going down the tree of partial solutions, variables are fixed and the size of the trie is decreased. Second, we proved that a DNF with $m$ terms has at least $m^{\log_3(2)}$ models, then we use that fact to prove a good bound on the average delay as in Uno's push-out method~\cite{uno2015constant}.

Let $n$ be the number of variables of $D$. If the trie is shrunk very carefully when fixing variables, a factor $n$ can be shaved from the delay. We can further improve the algorithm and its analysis for the case of monotone DNFs, by further compacting the represented DNF. For $k$-DNF formulas, i.e. DNF formulas with terms of size at most $k$, we use a particular traversal of the tree of partial solutions. It allows to find a $k$-term, whose solutions are disjoint from the ones in the other branches of the tree. These solutions are output with constant delay (or implicitly) and can be used to amortize the time of going down the tree of partial solutions up to the point there are only $2k$ variables left. This yields a \emph{constant delay} algorithm for fixed $k$. The results of~\cite{capelli2020enumerating} are given in Table~\ref{tab:results}.

\begin{table}
  \centering
  \begin{tabular}{|l|l|l|}
    \hline
    {\bf Class} & {\bf Delay} & {\bf Space} \\ \hline
    DNF & $O(\|D\|)$  & Polynomial \\ \hline
    $(\star)$ DNF & $O(nm^{1-\log_3(2)})$ average delay & Polynomial \\ \hline
    $(\star)$ $k$-DNF  & $2^{O(k)}$ & Polynomial \\ \hline
    $(\star)$ Monotone DNF & $O(n^2)$  & Exponential \\ \hline
    $(\star)$ Monotone DNF & $O(\log(nm))$ average delay & Polynomial \\ \hline
  \end{tabular}
  \caption{In this table, $D$ is a DNF, $n$ its number of variables and $m$ its number of terms.}
  \label{tab:results}
\end{table}

We have obtained algorithms with good \emph{average delay}. To falsify the Strong DNF Enumeration Conjecture and 
the DNF Enumeration Conjecture on the class of monotone formulas, we need to replace the average delay by a similar delay.
To do that, we use a generic method~\cite{capelli2021amortization} to transform an average delay into a real delay
in flashlight algorithms, relying on the amortization method of Theorem~\ref{th:amortization}. 

Let the \textbf{path time} of a flashlight search algorithm be the time needed by a flashlight search algorithm to go from the root of the tree of partial solutions to a leaf. For most flashlight algorithms, path time and delay are the same. The main idea is that, at any point of a flashlight algorithm, we are exploring some vertex of the tree of partial solutions. The vertices which have already been explored
are whole subtrees, corresponding to subproblems of the enumeration problem being solved, plus a path from the root, as shown in Figure~\ref{fig:flashlight_amortize}.
This allows to precisely bound the incremental time of flashlight search. In particular, it is linear (up to the path time), if the average delay is polynomial, because the time spent in subproblems
is linear in the number of output solutions. If the path time is bounded by $p(n)$, the enumeration of solutions can be delayed by $p(n)$, using a queue of size at most $p(n)$, which gives an algorithm in incremental linear time. Then we can apply Theorem~\ref{th:amortization} to make the delay polynomial and we obtain the following theorem. 

\begin{theorem}[\cite{capelli2021amortization}]\label{th:self-reduction}
Let $\enum{A}$ be an enumeration problem, solved by a flashlight search algorithm, in polynomial space, polynomial path time and average delay $a(n,k)$, where $n$ is the size of the instance and $k$ a bound on the size of a solution. Let $N$ be the number of produced solutions, then there is an algorithm solving $\enum{A}$, with delay $a(n,k)\log(N)$, average delay $a(n,k)$ and polynomial space.
\end{theorem}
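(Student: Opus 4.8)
The plan is to prove the theorem in three stages: first, analyse the flashlight search as it already runs and show it has incremental linear time with incremental delay essentially $a(n,k)$; second, normalise its running-time profile with a small polynomial-size output queue; and third, invoke the geometric amortization of Theorem~\ref{th:amortization} to turn this incremental-linear algorithm into a polynomial-delay one that still uses only polynomial space.

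For the first stage, fix an input $x$ of size $n$ and let $k$ bound the solution size. The key is the structural picture of a depth-first flashlight search sketched before the statement and in Figure~\ref{fig:flashlight_amortize}: at any moment the part of the tree of partial solutions that has been fully explored is a union of complete subtrees hanging off the current root-to-node path. Since the descent is guided by the extension oracle $\ext{A}$, every node has at least one solution below it, so there are no dead ends, and each fully explored subtree $S$ is itself the tree of partial solutions of a sub-instance of $\enum{A}$ of size at most $n$ with solution-size bound at most $k$; the flashlight search restricted to $S$ therefore outputs its $k_S \ge 1$ solutions with average delay at most $a(n,k)$, so the time spent inside $S$ is at most $a(n,k)\,k_S$ up to an $O(p(n))$ additive overhead for entering and leaving, where $p$ is the path-time polynomial. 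Summing over the completed subtrees hanging off the current path, whose solution counts add up to at most $j$ once $j$ solutions have been output, and adding the $O(p(n))$ cost of the active path, gives $T(x,j) \le a(n,k)\,j + O(p(n))$. As the initial descent costs exactly path time, this is an algorithm with polynomial preprocessing, incremental delay $O(a(n,k))$, and polynomial space (a flashlight search stores only the current branch and the restricted instance).

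For the second stage, feed the produced solutions through a FIFO queue of size $O(p(n))$ and release them on a regular schedule: this does not improve the delay, which stays $\Theta(p(n))$, but it turns the bound $T(x,j)\le a(n,k)\,j+O(p(n))$ into the clean form required as a hypothesis by Theorem~\ref{th:amortization}, namely that $\ell$ solutions are produced in time $\ell\cdot a(n,k)$ after polynomial preprocessing, i.e. with incremental delay $a(n,k)$. Applying Theorem~\ref{th:amortization} to the resulting list $L_{M,x}$, with incremental delay $a(n,k)$, polynomial space $s(n)$, and $\ell=N$ solutions, yields delay $O(a(n,k)\log N)$ and space $O(\log N\,(s(n)+\log N))$, which is polynomial. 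Finally, the variant of the geometric amortization in which an exhausted pointer is discarded reads each entry of $L_{M,x}$ at most twice on average, so the total running time, and hence the average delay, is preserved up to a constant factor, giving average delay $O(a(n,k))$.

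The main obstacle — and where the informal sketch in the surrounding text hides the real work — is the first stage: one must make precise the sense in which the average-delay guarantee $a(n,k)$ is inherited by every sub-instance reached during the search (this is where the hypothesis ``solved by a flashlight search algorithm'' has to be read as a self-reducibility/hereditary assumption on the family of instances), and one must account for the additive path-time overheads carefully so that they are genuinely absorbed into preprocessing rather than accumulating once per solution. A secondary point is checking that inserting the normalising queue keeps the space polynomial; since the queue never holds more than $O(p(n))$ solutions, each of size $O(k)\le O(n)$, this is immediate, but it relies on the polynomial path-time bound.
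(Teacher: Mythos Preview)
Your proposal is correct and follows essentially the same three-step route sketched in the paper just before the statement: decompose the explored region of the flashlight tree into completed subtrees plus the current root-to-leaf path to get $T(x,j)\le a(n,k)j+O(p(n))$, absorb the additive $p(n)$ term with a small queue to obtain a clean incremental-linear algorithm, then apply the geometric amortization of Theorem~\ref{th:amortization} (including its ``discard exhausted pointers'' variant to preserve the average delay). Your explicit discussion of the hereditary nature of the average-delay bound on sub-instances is a point the paper leaves implicit, and it is indeed the crux of the argument.
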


\begin{figure}
\begin{center}
\begingroup%
  \makeatletter%
  \providecommand\color[2][]{%
    \errmessage{(Inkscape) Color is used for the text in Inkscape, but the package 'color.sty' is not loaded}%
    \renewcommand\color[2][]{}%
  }%
  \providecommand\transparent[1]{%
    \errmessage{(Inkscape) Transparency is used (non-zero) for the text in Inkscape, but the package 'transparent.sty' is not loaded}%
    \renewcommand\transparent[1]{}%
  }%
  \providecommand\rotatebox[2]{#2}%
  \newcommand*\fsize{\dimexpr\f@size pt\relax}%
  \newcommand*\lineheight[1]{\fontsize{\fsize}{#1\fsize}\selectfont}%
  \ifx\svgwidth\undefined%
    \setlength{\unitlength}{400bp}%
    \ifx\svgscale\undefined%
      \relax%
    \else%
      \setlength{\unitlength}{\unitlength * \real{\svgscale}}%
    \fi%
  \else%
    \setlength{\unitlength}{\svgwidth}%
  \fi%
  \global\let\svgwidth\undefined%
  \global\let\svgscale\undefined%
  \makeatother%
  \begin{picture}(1,0.78194411)%
    \lineheight{1}%
    \setlength\tabcolsep{0pt}%
    \put(0,0){\includegraphics[width=\unitlength,page=1]{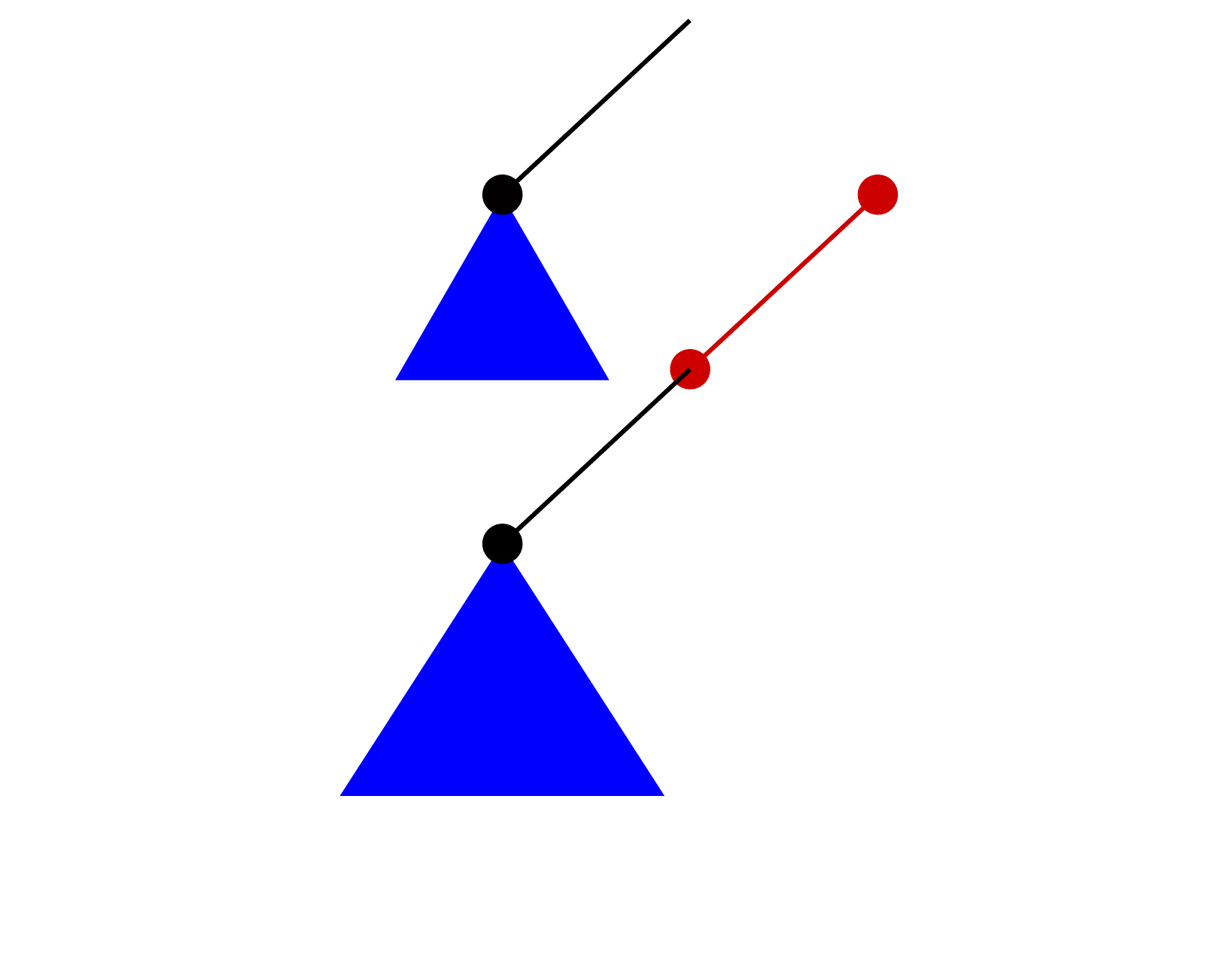}}%
    \put(0.59924179,0.17935316){\color[rgb]{0,0,0}\makebox(0,0)[lt]{\lineheight{1.25}\smash{\begin{tabular}[t]{l}Current subproblem\end{tabular}}}}%
    \put(-0.00735453,0.38929938){\color[rgb]{0,0,0}\makebox(0,0)[lt]{\lineheight{1.25}\smash{\begin{tabular}[t]{l}Enumerated subproblems\end{tabular}}}}%
    \put(0,0){\includegraphics[width=\unitlength,page=2]{flashlight_amortization.pdf}}%
    \put(0.62997965,0.30489999){\color[rgb]{0,0,0}\makebox(0,0)[lt]{\lineheight{1.25}\smash{\begin{tabular}[t]{l}$\dots$\end{tabular}}}}%
    \put(0,0){\includegraphics[width=\unitlength,page=3]{flashlight_amortization.pdf}}%
  \end{picture}%
\endgroup%

\caption{A traversal of the tree of partial solutions by the flashlight search. The subproblems completely solved recursively in blue, the path to the current partial solution in red and the unexplored vertices in the white triangles.\label{fig:flashlight_amortize}}
\end{center}
\end{figure}

\begin{openproblem}
We are left with a single unsettled conjecture, the DNF Enumeration Conjecture. 
Prove it assuming some complexity hypothesis like $\ETH$. As a first step, it should be proven that $\enum{DNF}$ cannot be solved in constant delay or logarithmic delay.  
\end{openproblem}

The problem $\enum{\textsc{Union}}$ of computing the closure by union of a set system has a linear delay algorithm~\cite{DBLP:journals/dmtcs/MaryS19}. Because a solution can be obtained by many different unions, $\enum{\textsc{Union}}$ does not seem to have a strong polynomial delay algorithm. In~\cite{capelli2020enumerating}, we \textbf{wrongly} claim that a method similar to the one presented for $\enum{DNF}$ on monotone formulas solves $\enum{\textsc{Union}}$ with average delay polynomial in the solution size. Indeed, the problem solved when traversing the tree of partial solutions is \emph{more complex} than the original problem,
because some elements are forced to be in the solution. This more complex problem can be solved in linear time, but it is not possible to prove a lower bound on its number of solutions with regards to the input size.

\begin{openproblem}
Give an algorithm for $\enum{\textsc{Union}}$ with average delay in $o(m)$, where $m$ is the number of sets in the input. 
\end{openproblem}

Instead of focusing on specific problems to understand the difference between $\SDelayP$ and $\DelayP$,
we could just prove a separation result using an artificial problem. However, the method to prove a hierarchy inside 
$\IncP$ does not seem useful to prove the same kind of hierarchy inside $\DelayP$. 

\begin{openproblem}
 Prove that $\SDelayP \subsetneq \DelayP$ modulo a complexity hypothesis like $\ETH$.
\end{openproblem}

If we drop the condition that problems must be in $\EnumP$, the previous question becomes easy,
as for the other class separations presented in Chapter~\ref{chap:time}.

\begin{proposition}\label{prop:sep_sdelayp}
There is an enumeration problem with an algorithm in polynomial delay and polynomial preprocessing but no algorithm in strong polynomial delay and polynomial preprocessing.
\end{proposition}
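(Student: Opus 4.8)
The plan is to follow the template used for the other non-$\EnumP$ separations: take an enumeration problem whose instance $x$ of length $n$ has a super-polynomial number of trivial tiny solutions, together with one extra solution that is present iff $x$ belongs to a language $Z$ that is hard in a precise time-hierarchy sense. A polynomial delay algorithm has enough total budget, spread over the trivial outputs, to decide $Z$ ``in the background''; whereas a would-be strong polynomial delay algorithm, because all solutions are of size $\mathrm{polylog}(n)$, is forced to have delay $\mathrm{polylog}(n)$, hence (there is no postprocessing) to terminate within $|A(x)|\cdot\mathrm{polylog}(n)$ total time, which would decide $Z$ too fast.

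Concretely I would set $\ell=\ell(n):=\lceil\log^2 n\rceil$, so that $2^\ell=n^{\Theta(\log n)}$ is super-polynomial while every length-$\ell$ string has sub-polynomial size, and by the time hierarchy theorem fix a language $Z$ with $Z\in\mathrm{DTIME}(n^{\lceil\log n\rceil+1})$ and $Z\notin\mathrm{DTIME}\bigl(n^{\log n}(\log n)^k\bigr)$ for every $k$; this is a legitimate choice since $n^{\log n}(\log n)^k\cdot\log\bigl(n^{\log n}(\log n)^k\bigr)=n^{\log n}\,\mathrm{polylog}(n)=o(n^{\log n+1})$. Define $A$ on an instance $x$ of length $n$ by $A(x):=\{1w : w\in\{0,1\}^{\ell-1}\}$, plus the extra solution $0^\ell$ exactly when $x\in Z$. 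Then $A$ is polynomially balanced but $\test{A}$ needs $n^{\omega(1)}$ time, so $\enum{A}\notin\EnumP$ — as it must be, in view of the discussion above.

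For the upper bound I would, after a polynomial preprocessing that reads $x$ and computes $\ell$, output the $2^{\ell-1}$ trivial solutions $1w$ in Gray-code order, and between two consecutive outputs run about $n^2$ further steps of the $\mathrm{DTIME}(n^{\lceil\log n\rceil+1})$-decider for $Z$ on $x$; since $2^{\ell-1}\ge n^{\log n}/2$ and $n^{\lceil\log n\rceil+1}\le n^{\log n+2}$, these $2^{\ell-1}$ chunks of $O(n^2)$ steps suffice to complete the decision, after which I output $0^\ell$ iff $x\in Z$. The delay is $O(n^2)$ and the preprocessing polynomial, so $\enum{A}\in\DelayP$. For the lower bound, suppose $M$ solved $\enum{A}$ with polynomial preprocessing $p$ and strong polynomial delay with constants $a,C$: every solution has size $O(\log^2 n)$, so every delay of $M$ is $<C\,(O(\log^2 n))^a=\mathrm{polylog}(n)$ and $|A(x)|\le 2^\ell$, whence — no postprocessing — $M$ halts after total time $p(n)+2^\ell\cdot\mathrm{polylog}(n)\le n^{\log n}\,\mathrm{polylog}(n)$ for all large $n$. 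But $x\in Z$ iff $M$ run on $x$ ever outputs $0^\ell$, so watching the outputs of $M$ would decide $Z$ in time $n^{\log n}\,\mathrm{polylog}(n)$, contradicting the choice of $Z$; hence $\enum{A}\notin\SDelayP$. The separation is unconditional, needing no hypothesis such as $\ETH$.

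The delicate point, and the step I would be most careful about, is the simultaneous calibration of the solution size $\ell$ and of $Z$'s difficulty. The number $2^{\ell-1}$ of trivial solutions must be large enough that the $n^{\Theta(\log n)}$-time decision of $Z$ amortises to polynomial delay over it, which pushes $Z$'s running time up to about $n^{\log n}$; yet $|A(x)|\cdot\mathrm{polylog}(n)=2^\ell\,\mathrm{polylog}(n)$ must still lie strictly below the time-hierarchy lower bound on $Z$, which pins that running time down to essentially $n^{\log n}$ as well. These two requirements are compatible only because the per-solution budgets of $\DelayP$ and $\SDelayP$ differ by a genuine gap — an arbitrary $n^{\varepsilon}$ versus a mere $\mathrm{polylog}(n)$ — and exploiting that gap is precisely what dictates that solutions be of sub-polynomial but super-logarithmic size, so that $2^\ell$ is super-polynomial (dwarfing the unknown preprocessing polynomial of $M$) while the forced delay of a strong-polynomial-delay machine stays polylogarithmic.
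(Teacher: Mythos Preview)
Your proof is correct and follows essentially the same approach as the paper: both use the time hierarchy theorem to obtain a language decidable in roughly $n^{\log n+1}$ time but not in $n^{\log n}\,\mathrm{polylog}(n)$ time, pad an instance $x$ with about $n^{\log n}$ trivial solutions of size $\Theta(\log^2 n)$, and add one extra solution encoding membership in that language. The paper's construction is virtually identical (it takes $A(x)=[2^{\log^2 n}]$ together with a special symbol $\sharp$ when $x\in L$, and argues linear delay for the upper bound), and the upper and lower bound arguments match yours.
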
 
\begin{proof}
Using the time hierarchy theorem~\cite{hartmanis1965computational}, we obtain a language $L$
which can be solved in time $O(n2^{\log(n)^2})$ but not in time $o(n2^{\log(n)^2}/\log(n)^2)$.
We let $n=|x|$ and we let $A$ be the binary predicate defined as  $A(x) = [2^{\log(n)^2}]$ if $x \notin L$ and  $A(x) = [2^{\log(n)^2}] \cup \{\sharp\}$ otherwise. The set $A(x)$ consists in $2^{\log(n)^2}$ trivial solutions and $\sharp$ when $x$ is in $L$.

 Assume now that we can enumerate the solutions of $\enum{A}$ in strong polynomial delay.
 Since the solutions are of size $\log(n)^2$, the total time to generate all solutions is less than
 $O(2^{\log(n)^2}p(\log(n)))$, with $p$ a polynomial. The polynomial time preprocessing does not appears in the total time since it is dominated by the generation time. Solving $\enum{A}$ allows to decide $L$, which cannot be done in time $O(2^{\log(n)^2}p(\log(n)))$, hence $\enum{A}$ cannot be solved in strong polynomial delay.

 However, $\enum{A}$ can be solved in linear delay.  Enumerate all trivial solutions in $A(x)$ with delay $n$. Since there are $O(2^{\log(n)^2})$ solutions, the total time to generate these solutions is in $O(n2^{\log(n)^2})$, hence we can use this time to decide whether $x \in L$ and to produce the last solution.
\end{proof}

\section{Constant Delay}

The class of problems solvable with a constant delay is very sensitive to the way the computational model and the complexity measures are defined. There are at least four slightly different notions of constant delay used in different contexts. In all cases, we recall that operations on integers less than $n$ are in constant time, with $n$ the size of the input.

 \paragraph{Gray code} 

 In several RAM models, the output registers are fixed and an output solution is always contained in the first output registers. Hence, two consecutive solutions must only differ by a fixed number of registers, such an ordering of solutions is often called a Gray code.  If the solution is represented by $0$ and $1$ in the output registers, it means that the Hamming distance between
 consecutive solutions is constant. In our computation model, the registers may contain integers less than $n$ that can be manipulated in constant time, hence the Hamming distance between two consecutive solutions is bounded by $O(\log(n))$.  Since the algorithm goes from a solution to the next in constant time, it is sometimes called a loopless algorithm in the context of Gray codes. There are many such algorithms for the generation of combinatorial objects~\cite{walsh2003generating,ruskey2003combinatorial,knuth2011art,hartung2019combinatorial}: the integers less than $2^n$, combinations, Dyck words, \dots 
 
\begin{example}
Let $RBC_n$ be the reflected binary code order on $n$ bits words, represented as a list. It is defined recursively 
as the concatenation of the words of $RBC_{n-1}$ prefixed by $0$ followed by the words of $RBC_{n-1}$ in reverse order prefixed by one.
To go from one word to the next in $RBC_n$, the $1$ in the current word should be maintained as a list as well as the parity of the number of ones. If the number of ones is even, the last bit is flipped,
otherwise the bit to the left of the rightmost one is flipped. Since the word is represented as an array of size $n$, all these operations can be done in constant time. 
\end{example}

 \paragraph{Constant delay with dynamic amortization} 

 To generate very different consecutive solutions in constant delay, we must relax the previous definition.
 For instance, it seems natural when several solutions are already computed in memory, to be able to output them in constant delay.
 This is why we have defined RAM machines for enumeration with a generalized OUTPUT instruction, which can output any region of the memory. 
 It allows to output already computed solutions in constant delay or to amortize a slow enumeration process with a fast one. It can be used to generate models of a $k$-DNF in constant delay~\cite{capelli2020enumerating}. 
 This form of constant delay can be obtained even when the solutions do not follow a Gray code order. Hence, it is strictly more general than the previous definition.

 \paragraph{Constant amortized time} 

 An algorithm is in constant amortized time (CAT) if its total time divided by the number of solutions is constant, i.e. its average delay is constant. 

 \begin{example}
 Consider the generation of the integers between $0$ and $2^{n}-1$ in the usual order, represented as an array of bits. Adding one to an integer may change up to $n$ bits, but on average only two bits are changed and the time needed is proportional to the number of changed bits. Hence, on average over the whole enumeration, at most two operations per solution is required: it is a CAT algorithm. If we generate the integers in another order like the Gray code order, it possible to obtain a constant delay algorithm~\cite{knuth2011art}.
 \end{example}
  
  See Ruskey's book~\cite{ruskey2003combinatorial} for many other examples of CAT algorithms. 
  Uno's push-out method helps turn backtracking algorithms into CAT algorithms, see~\cite{uno2015constant} for the method and several of its applications.
 
 Several problems admitting a CAT algorithm, such as the enumeration of unrooted trees~\cite{wright1986constant},
 have later been proved to admit a proper constant delay algorithm~\cite{nakano2003simple}.
 Amortization may be used to turn a CAT algorithm into a constant delay algorithm if the generation of solutions is
 regular enough. Sometimes, the solutions to be generated are organized as nodes of a tree and given 
 by some traversal of the tree. Then, by generating solutions of even depth when going down the tree and solutions of odd depth when going up, one may turn a CAT algorithm into a constant delay one without relying on amortization. This method is sometimes called Feder's trick, as he introduced it to generate the models of a 2CNF~\cite{feder1994network}.

 \begin{openproblem}
 Give a sufficient condition to turn CAT algorithms into constant delay algorithms, possibly using exponential space. The method of Theorem~\ref{th:self-reduction} transforms a flashlight search in constant average delay into a (strong) linear delay algorithm: under which condition can it be made constant?
 \end{openproblem}

 Constant amortized time is stricly more general than constant delay,
  as shown in the next proposition.

 \begin{proposition}
 There is $\enum{A}$ which can be solved in constant amortized time and polynomial preprocessing but not in constant delay and polynomial preprocessing. 
 \end{proposition}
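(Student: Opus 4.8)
The plan is to follow the template of the proof of Proposition~\ref{prop:sep_sdelayp}: isolate a single super-polynomially hard bit and arrange the solution set so that a constant-delay algorithm is forced to compute that bit far too early, while an amortized algorithm still has room to compute it in one burst. First I would invoke the time hierarchy theorem to fix a super-polynomial time-constructible function $f$ and a language $L$ decidable in time $f(|x|)$ but not in $\P$ (the relevant models being polynomially related and $\P$ being model-independent, this is unproblematic). Writing $n := |x|$ and setting $b_x = 1$ if $x \in L$ and $b_x = 0$ otherwise, I define the predicate $A$ by $A(x) = \{\bot\} \cup \{\, \langle i, b_x\rangle : 1 \le i \le f(n) \,\}$, where $\langle\cdot,\cdot\rangle$ is a pairing and $\bot$ is a fixed symbol distinct from every pair. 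Note that $\test{A}$ requires deciding $L$, so $\enum{A}$ is not in $\EnumP$; this is expected, since this subsection already works outside $\EnumP$.

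For the positive direction I would exhibit a CAT algorithm with polynomial preprocessing. It first outputs $\bot$ — the only thing it does before the first solution, so the preprocessing is polynomial; then it spends $f(n)$ steps deciding whether $x \in L$, obtaining $b_x$; then it enumerates $\langle 1, b_x\rangle, \dots, \langle f(n), b_x\rangle$ by driving an incremental multi-register binary counter and emitting each solution with one constant-time $\Output$ instruction. The total running time is $\mathrm{poly}(n) + f(n) + O(f(n)) = O(f(n))$ for $1 + f(n)$ solutions, hence the amortized time is constant; the single enormous gap spent computing $b_x$ is invisible to the average.

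For the negative direction I would argue by contradiction. Suppose $M$ solves $\enum{A}$ with polynomial preprocessing $q(n)$ and constant delay $c$. For $n$ large enough $|A(x)| \ge 3$, so $M$ produces at least two outputs $z_1, z_2$ with $T_M(x,2) \le q(n) + c$. Since $z_1 \neq z_2$ and at most one of them equals $\bot$, at least one of $z_1, z_2$ is a pair $\langle i, b\rangle$ whose second coordinate is exactly $b_x$. Therefore simulating $M$ for $q(n) + c$ steps and reading off that coordinate decides $L$ in polynomial time, contradicting $L \notin \P$; the finitely many small inputs are handled by a lookup table. Thus $\enum{A}$ has no constant-delay algorithm with polynomial preprocessing, while the algorithm above shows it has a CAT one, which proves the proposition.

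The step I expect to be the genuine obstacle is purely a machine-model issue: ensuring that the CAT upper bound yields \emph{constant} amortized time and not $\Theta(\log n)$, even though each solution $\langle i, b_x\rangle$ has $\Theta((\log n)^2)$ bits. This is precisely where the generalized $\Output$ instruction (constant time on an arbitrary window of registers) and the amortized analysis of a multi-register binary counter must be invoked carefully; by contrast the time-hierarchy bookkeeping — time-constructibility of $f$, the model-dependence of the exponents — is routine.
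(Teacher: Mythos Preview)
Your proof is correct and follows the same core strategy as the paper: encode a single hard-to-compute bit into every nontrivial solution, so that a constant-delay algorithm is forced to recover it after polynomially many steps, while a CAT algorithm can spend one long burst computing it and amortize that cost over a large solution count.

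The paper's construction takes $L$ decidable in $O(2^n)$ but not $o(2^n/n)$ and lets $A(x)$ consist of the strings $b_x i$ with $b_x = [x\in L]$; its CAT algorithm first decides $L$ and then enumerates via Gray code, and the lower bound reads $b_x$ off the very first output. Your addition of the dummy solution $\bot$, output before any real work is done, is not in the paper's proof, and it is a genuine improvement: as the paper's algorithm is written, no solution can be output until $b_x$ is known, so its preprocessing is $\Theta(2^n)$, not polynomial as the proposition demands. Your $\bot$ trick fixes this at the modest price of needing the \emph{second} output (rather than the first) in the impossibility argument. The machine-model worry you flag is handled in the paper by Gray code (genuine $O(1)$ per step rather than amortized); your multi-register binary counter with amortized $O(1)$ increments is an equally valid route for a CAT upper bound.
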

 \begin{proof}
  Let $L$ be a problem, which can be solved in $O(2^{n})$ but not in $o(2^{n}/n)$. We define $A(x)$ as $\{0i \mid i\leq |x|\}$ if $x \in L$ and $\{1i \mid i\leq |x|\}$ otherwise. 
  The problem can be solved in constant amortized time, by computing whether $x \in L$ in time $O(2^{|x|})$ and then generating all solutions using Gray code in total time $O(2^{|x|})$. If $\enum{A}$ can be solved in constant delay and polynomial time preprocessing, then the first solution is generated in polynomial time, which implies that $L$ is solved in polynomial time, a contradiction.
 \end{proof}
 
  \paragraph{Fixed parameter constant delay}

   When doing query evaluation, the query is often considered to be of fixed size, we say that we consider the \emph{data complexity}. Hence, the number of free variables of the query is fixed, let us denote it by $k$. A solution is a $k$-tuple of elements from the structure on which the query is evaluated. Let $n$ be the size of the structure, which is the instance of the problem, then $k$ integers of size $\log(n)$ bits are necessary to encode a solution. Since we have assumed a uniform cost measure, these elements can be dealt with in constant time. 
   Note that there are at most $n^k$ solutions, hence to make the class interesting, it is not enough to ask the preprocessing to be in polynomial time. The choice is often to consider a linear (or almost linear) preprocessing. 
   Since the solutions are of a fixed constant size, we could use for these problems the notion of \emph{strong polynomial delay}, which is more demanding than constant delay: the delay should be 
   polynomial in $k$, while it can be any function in $k$ for (fixed parameter) constant delay.

   Durand and Grandjean have shown that $FO$ queries over bounded degree structures can be enumerated with constant delay and linear preprocessing~\cite{DBLP:journals/tocl/DurandG07}. Following that result, many query languages have been proven to have a constant delay algorithm, see the survey of Segoufin~\cite{DBLP:journals/sigmod/Segoufin15} and Durand~\cite{durand2020fine}. 
 
  The class of enumeration problems solvable with constant delay is stable under cartesian product.
  However, to be stable under union, either the union should be disjoint, or the $\test{A}$ problem must be in constant time.
  These observations can be generalized by considering circuits of disjoint unions and cartesian products, representing a set of solutions,
  which can be enumerated with linear preprocessing and constant delay when the solutions are sparse. This approach has been investigated
  in~\cite{AmarilliBJM17}, and can be applied to show that assignments of $MSO$ formulas, with free variables of the first order only, can be enumerated with constant delay over structures of constant tree-width.

\section{Random Sampling}

As a way to escape the time and space required to generate all solutions, we could only generate one.
We may then ask for a solution at some position in a given order, an approach 
considered in Chapter~\ref{chap:order} on ordered enumeration problems. 
An approach, which does not require to have an a priori on the generated solutions, is to generate a representative solution, which can be achieved by
uniformly sampling the set of solutions. A \textbf{polynomial time uniform generator} for the problem $\enum{A}$ generates an element of $A(x)$, with uniform probability, in time polynomial in $|x|$. The existence of such generator is deeply connected to the possibility of approximately counting the set of solutions, see~\cite{jerrum2003counting}.

The space of solutions of an enumeration problem can often be represented by some connected supergraph.
As explained in Chapter~\ref{chap:time}, a traversal of the graph produces all solutions but often requires 
exponential space. If there is a random walk on the graph of solutions, whose stationary distribution (probability of presence on each vertex in the limit) is almost uniform,
then it can be used as a random generator. When the random walk mixes in polynomial time, i.e. approximates its stationary distribution up to a constant in polynomial time,
then we obtain a polynomial time random generator, which avoids the exponential space use of the supergraph method. 

When the space of solutions to enumerate is infinite and even continuous, sampling is often the only approach.
In~\cite{peyronnet2012approximate}, we try to decide when two non deterministic automata or two Markov decision processes are far enough for some distance.
When they are different, we want to explain to the user why these automata or MDPs are different, hence we want to enumerate words or strategies which prove their separation.
For complexity reason, we use an approximate representation by the polytope of the set of accepted words or the polytope of the stationary distributions of the strategies. There are an \emph{infinite number}
of points in the symmetric difference of two polytopes, hence instead of enumerating them, we sample them uniformly, using a polynomial time random walk on polytopes~\cite{lovasz1993random}.

 A polynomial time uniform generator can be turned into an exhaustive enumeration algorithm
 solving the enumeration problem with \emph{high probability}. The algorithm is simple:
draw repeatedly solutions using the uniform generator. The only difficulty is 
to know when to stop drawing solutions because there is no new solution anymore. The time between the last new solution generated
and the end of the algorithm can be bounded using the Coupon Collector Theorem~\cite{erdos1961classical}, so that the algorithm works with high probability. This methods has been first presented in~\cite{Goldberg91} and improved in~\cite{capelli2019incremental}.

\begin{theorem}
\label{thm:gentopolydelay}
 If $\enum{A} \in \EnumP$ has a polytime uniform generator, then $\enum{A}$ can be solved with high probability by an algorithm in polynomial delay.
\end{theorem}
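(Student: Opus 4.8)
Let $G$ be the polynomial-time uniform generator: on input $x$ of size $n$ it runs in time $g(n)\le\mathrm{poly}(n)$ and returns an element of $A(x)$ drawn uniformly at random. Since $\enum{A}\in\EnumP$, the predicate $A$ is polynomially balanced, so every solution has length at most $q(n)$ for a fixed polynomial $q$, whence $N:=|A(x)|\le 2^{q(n)+1}$ and in particular $\ln N\le q(n)+1$ is polynomially bounded --- a fact we use repeatedly. The plan is to show that the naive ``sample, deduplicate into a trie, buffer'' algorithm runs, with high probability, in \emph{incremental linear time}, and then to invoke the amortization of Theorem~\ref{th:amortization} (equivalently $\DelayP=\UIncP_0=\IncP_1$, Proposition~\ref{prop:UInca}) to upgrade incremental linear time to polynomial delay. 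Since $\DelayP$ imposes no bound on space, we may freely store all discovered solutions.

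\textbf{The algorithm and its incremental time.} Maintain a trie $S$ of the distinct solutions found so far and a FIFO queue $Q$ of found-but-not-yet-output solutions. Repeatedly call $G(x)$; look the result up in $S$ in time $O(q(n))$, and if it is new insert it into both $S$ and $Q$. Conditioned on $i$ distinct solutions having been found, the next draw is new with probability $(N-i)/N$, so the number $G_{i+1}$ of draws needed to reach the $(i{+}1)$-st distinct solution is geometric with mean $N/(N-i)$, and the $G_{i+1}$ are independent. Hence the expected number of draws to discover the first $k$ solutions is $\sum_{i=0}^{k-1}\tfrac{N}{N-i}$; since the summands increase with $i$, the average of the first $k$ of them is at most the average of all $N$, namely $\tfrac1N\sum_{i=0}^{N-1}\tfrac{N}{N-i}=H_N\le 1+\ln N\le\mathrm{poly}(n)$. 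So the first $k$ solutions are found within $k\,H_N$ draws in expectation, i.e.\ in time $O(k\cdot\mathrm{poly}(n))$. To upgrade ``in expectation'' to ``with probability $\ge1-\delta$'', uniformly in $k$, combine the coupon-collector tail bound $\Pr[\sum_{i=0}^{N-1}G_{i+1}>N(\ln N+c)]\le e^{-c}$ (which controls all $k\ge N/2$ at once, since a prefix sum is dominated by the full sum) with a negative-binomial tail bound for the prefixes $k<N/2$ (where each $G_{i+1}$ has success probability $\ge1/2$); a union bound over the at most $2^{\mathrm{poly}(n)}$ values of $k$ costs only an additive $\mathrm{poly}(n)$ in the exponent. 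Thus with probability $\ge1-\delta$ this produces a stream ``new/not-new solution at draw step $t$'' that is incremental linear with incremental delay $\mathrm{poly}(n,\log(1/\delta))$.

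\textbf{When to stop (the obstacle).} We must decide when to stop drawing, and this is the only place where knowing merely $N\le 2^{q(n)}$ rather than $N$ itself is awkward. We test a stopping condition only at checkpoint draws $t\in\{2,4,8,\dots\}$ --- there are at most $\mathrm{poly}(n)$ of them before step $NH_N$, again since $\ln N$ is polynomial --- and we stop at checkpoint $t$ if no new solution has been discovered during the last $\lceil t/2\rceil$ draws (equivalently, once a run of consecutive duplicate draws exceeds all earlier draws). By the coupon-collector bound, with probability $\ge1-\delta$ all of $A(x)$ is discovered within $N(\ln N+\ln(1/\delta))\le N\cdot\mathrm{poly}(n)$ draws; thereafter every draw is a duplicate, so the next checkpoint fires and the total number of draws before stopping is $O(N\cdot\mathrm{poly}(n))$, short enough that the incremental-linear guarantee survives the trailing dead draws. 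Conversely, halting at checkpoint $t$ while $A(x)\setminus S\neq\emptyset$ requires that some undiscovered solution was missed by the last $\lceil t/2\rceil$ draws, an event of probability $\le(1-1/N)^{t/2}\le e^{-t/(2N)}$ for a fixed such solution; a union bound over the (at most $N-|S|$) undiscovered solutions and the $\mathrm{poly}(n)$ checkpoints keeps the total failure probability below $\delta$, the point being that the checkpoints $t$ for which this factor is not negligible are exactly those at which $|S|$ is still far below $N$, so the factor is in fact doubly-exponentially small there. Arranging that all three bad events --- $S\neq A(x)$, stopping too late, the stream failing to be incremental linear --- have probability $\le\delta/3$ gives the conclusion with probability $\ge1-\delta$.

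\textbf{Finishing.} On the high-probability event that the stream is incremental linear and ends having produced exactly $A(x)$, feed it to the amortization procedure of Theorem~\ref{th:amortization} (using the variant that does not require the number of solutions in advance): it empties $Q$ at a regulated rate, enumerates $A(x)$ with delay $O(\log(N)\cdot\mathrm{poly}(n))=\mathrm{poly}(n)$, and halts within a polynomially bounded number of steps after the stream ends, so the ``no postprocessing'' convention is met; the amortization is deterministic, so it does not affect the probability of success. Hence $\enum{A}$ is solved in polynomial delay with high probability. The space used is exponential (dominated by the trie $S$), which is permitted since $\DelayP$ is unrestricted in space. The main obstacle in a full write-up is precisely the stopping analysis: balancing, through coupon-collector tail bounds and a careful choice of checkpoints, the twin risks of halting before all of $A(x)$ has been output and of halting so late that the incremental-linear bound is destroyed; this is the part carried out in~\cite{Goldberg91,capelli2019incremental}.
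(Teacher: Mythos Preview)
Your overall strategy --- repeatedly sample, deduplicate into a trie, argue incremental linear time via coupon-collector bounds, then amortize to polynomial delay --- is exactly the paper's. Two concrete issues remain.

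\textbf{The stopping rule does not work as stated.} You halt at checkpoint $t=2^j$ whenever the last $\lceil t/2\rceil$ draws produced no new solution. Take $N=|A(x)|=2$: the first draw is always new, the second is a duplicate with probability $1/2$, and your rule then halts at $t=2$ having output only one of the two solutions. The failure probability is a constant, and nothing in your rule depends on $\delta$, so there is no knob to turn. Your informal rescue (``for small $t$, $|S|$ is far below $N$, so the factor is doubly-exponentially small'') is precisely what breaks when $N$ is small. More generally, if $k$ solutions are found in the first $m\ge k$ draws and the next $m$ are duplicates, the probability is $(k/N)^m\le(k/N)^k$, which is maximized at $k=N-1$ and is then $\approx 1/e$, again a constant. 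The fix is to insert a polynomial factor: halt only after $h\cdot|S|$ consecutive duplicates, with $h=\Theta(q(n)+\log(1/\delta))$; then the wrong-halt probability at $|S|=k<N$ is $(k/N)^{hk}\le e^{-h/2}$, and a union bound over $k\le N$ gives total failure $\le N e^{-h/2}\le\delta$. This is what \cite{Goldberg91,capelli2019incremental} do, and you correctly identify it as the crux, but the rule you actually wrote down is the unscaled one.

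\textbf{Which amortization.} You invoke Theorem~\ref{th:amortization}, but geometric amortization maintains several pointers into the computation of the base machine and advances them independently. If the base machine is your \emph{randomized} sampler, distinct pointers consume distinct random bits and diverge, so there is no single list $L_{M,x}$ to point into. The paper makes exactly this observation just after Theorem~\ref{th:gentoinclin}. Since $\DelayP$ allows exponential space here, simply use the classical queue-based amortization of Proposition~\ref{prop:UInca}, which you already cite in parentheses and which never re-runs the sampler; your phrase ``empties $Q$ at a regulated rate'' suggests this is what you intended anyway. (Alternatively, record and replay the random tape so that all pointers see the same $L_{M,x}$; this is fine given exponential space but should be said explicitly.)
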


The algorithm used in Theorem~\ref{thm:gentopolydelay} needs an exponential space for storing the solutions and 
for the amortization of the incremental linear time. In general, we cannot get rid of the space, since Goldberg has given a tight lower bound on the product of space and delay~\cite{Goldberg91}. However, if \emph{repetitions of solutions} are allowed, we can build an exhaustive algorithm using polynomial space and incremental linear time~\cite{capelli2019incremental}. To do that, we must decide when to stop drawing solutions, without being able to maintain explicitly the set of found solutions. There exist data structures which approximate the cardinality of a dynamic set using only a logarithmic number of bits in the size of the set~\cite{flajolet1985probabilistic,kane2010optimal}. The idea is to apply a hash function to each element seen and to remember an aggregated information on the bits of the hashed elements. Such data structures can be used to design a randomized incremental linear algorithm from a uniform generator. 

\begin{theorem}~\label{th:gentoinclin}
 If $\enum{A} \in \EnumP$ has a polytime uniform generator, then $\enum{A}$ can be solved with high probability by an algorithm in incremental linear time \textbf{with repetitions} and \textbf{polynomial space}.
\end{theorem}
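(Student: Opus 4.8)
The plan is to reuse the generator-based enumeration behind Theorem~\ref{thm:gentopolydelay}, but to replace its only source of exponential space — the dictionary of already-seen solutions — by a tiny randomized counter. Concretely: call the polynomial-time uniform generator over and over, \emph{output every sample immediately} (this is exactly why repetitions are needed), and halt when a termination test fires. Each call costs $n^{O(1)}$ time and space and produces one output, so after $k$ draws we have output $k$ solutions in time $k\cdot n^{O(1)}$: the algorithm is then automatically in incremental linear time \emph{with repetitions} and polynomial space, \emph{provided} it halts after at most $N\cdot n^{O(1)}$ draws, where $N:=|A(x)|$, and the output sequence covers all of $A(x)$ with high probability. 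So everything reduces to designing a polynomial-space termination test with these two properties; the case $A(x)=\emptyset$, reported by the generator on the first call, is trivial and set aside.

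Since $\enum{A}\in\EnumP$, solutions have length at most $p(n)$ for a fixed polynomial $p$, so $N\le 2^{p(n)}$ and $\log N\le p(n)$. I would feed the stream of samples into a distinct-elements ($F_0$) sketch~\cite{flajolet1985probabilistic,kane2010optimal}: it occupies only $n^{O(1)}$ bits — polynomially many even after fixing multiplicative accuracy $\tfrac12$ and a small per-query failure probability $\delta':=1/n^{O(1)}$ — and updating it on one sample costs $n^{O(1)}$ time. Let $\ell(n):=4\bigl(p(n)+\log(1/\delta)\bigr)$, a polynomial upper bound on $\ln N+\ln(1/\delta)$. The test runs in doubling phases: for $j=1,2,\dots$ draw solutions until the total number of draws is $2^j$, query the sketch once to get an estimate $e_j$ of the number of distinct solutions seen so far, and halt as soon as $2^j\ge 3\,e_j\,\ell(n)$. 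Only $O(p(n))$ phases ever occur (because $e_j\le\tfrac32 N\le\tfrac32\cdot 2^{p(n)}$), so the sketch is queried $O(p(n))$ times and a union bound keeps the total chance of an inaccurate answer at $O(p(n)\,\delta')=O(\delta)$.

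Correctness rests on two estimates. \emph{Fast termination:} by the Coupon Collector bound, $\Pr[\,\text{some solution unseen after }N\ell(n)\text{ draws}\,]\le N(1-1/N)^{N\ell(n)}\le N e^{-\ell(n)}\le\delta$; once all $N$ solutions have appeared the sketch reports at least $\tfrac12 N$, so the test fires by the first phase with $2^j\ge\tfrac92 N\ell(n)$, hence after at most $9N\ell(n)=N\cdot n^{O(1)}$ draws with probability $\ge 1-\delta$. \emph{No premature stop:} suppose the test fires having seen only $d<N$ distinct solutions; on the sketch's good event the firing condition gives $2^j\ge 3e_j\ell(n)\ge\tfrac32 d\,\ell(n)$, so $T:=\tfrac32 d\ell(n)$ draws have produced only $d$ distinct solutions while at least $N-d\ge 1$ remain. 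If $d\le N/2$, this is a large deviation below the expected number of distinct samples — which exceeds $d\ell(n)$ once $\ell(n)$ is a large enough constant — and is killed by a Chernoff bound; if $d>N/2$, Markov applied to the number of still-unseen solutions, whose expectation is $N(1-1/N)^{T}\le N e^{-T/N}\le N e^{-\tfrac34\ell(n)}$ (using $T\ge\tfrac34 N\ell(n)$), gives probability at most $N e^{-\tfrac34\ell(n)}/(N-d)$. Summing over $d\in\{1,\dots,N-1\}$ and over the $O(p(n))$ phases, the chance of halting with a missing solution is $O(\delta)$ as soon as $\ell(n)\ge 4(\ln N+\ln(1/\delta))$. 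Combining, with probability $1-O(\delta)$ the algorithm halts within $N\cdot n^{O(1)}$ draws and has output every element of $A(x)$; taking $\delta:=1/n$ keeps all parameters polynomial and yields the high-probability guarantee, and a hard cutoff after $9\cdot 2^{p(n)}\ell(n)$ draws makes termination unconditional.

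The bookkeeping — the $n^{O(1)}$ cost per draw, the incremental-linear-time and polynomial-space accounting, and the Coupon Collector bound on the number of draws — is routine. The step I expect to be the main obstacle is the second estimate, the ``no premature stop'' bound: the test must be shown to almost never fire while a solution is still missing, which is precisely why the phase threshold has to be scaled by the \emph{estimated} current distinct count, so that the union bound over the (up to $N$) possible values of $d$ closes; that union bound is affordable only because $\log N\le p(n)$ is polynomial, i.e. because $\enum{A}$ is polynomially balanced. A minor complication is the approximate, possibly non-monotone nature of the $F_0$ sketch, dealt with here by querying it only at the $O(p(n))$ phase boundaries (alternatively, by passing to its running-maximum estimate and union-bounding over the $\le N$ times it changes).
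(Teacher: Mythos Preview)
Your proposal is correct and follows essentially the same approach as the paper: repeatedly sample from the uniform generator, output every draw (hence the repetitions), and replace the exponential-space dictionary by a probabilistic distinct-elements sketch \cite{flajolet1985probabilistic,kane2010optimal} to decide when to stop. The paper gives only this high-level description and defers details to \cite{capelli2019incremental}; your doubling-phase stopping rule and Coupon Collector analysis flesh this out in the expected way.
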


Note that Theorem~\ref{th:amortization} does not allow to amortize incremental linear time into a polynomial delay in Theorem~\ref{th:gentoinclin}. Indeed, the algorithm is randomized and the different copies of the enumeration algorithms used in the amortization would produce different solutions! 

\begin{openproblem}
There are several methods to obtain a uniform generator: random walk on a reconfiguration graph~\cite{nishimura2018introduction} or Boltzman samplers~\cite{duchon2004boltzmann}. Can we get an efficient enumeration algorithm directly from these methods without building the random generator?
\end{openproblem}

\chapter{Ordered Enumeration}
\label{chap:order}
\minitoc

In this chapter we briefly explore ordered enumeration, that is producing a set of solutions in some order.

\section{Specified Order}

 \paragraph{K-best solutions} One way to deal with a large solution space is to reduce it arbitrarily.
 We can fix a parameter $k$ and ask to enumerate only $k$ solutions. To make this approach more interesting, we can fix some relevant order
 on the solutions and ask for the $k$ best ones. When $k$ is not fixed but given as input the problem
 is very similar to the (ordered) enumeration of solutions with respect to incremental time. The $k$-shortest path and the $k$-minimum spanning tree problems are typical $k$-best problems with many applications as explained in a survey by Eppstein~\cite{eppstein2015k}. In this survey, methods for generating $k$-best solutions are presented, which are variations on methods also used for enumeration: binary partition or the $I\cup\{j\}$ method~\cite{LawlerLK80}.

\paragraph{Fixed Order} 
 Enumeration problems can be further constrained by requiring the solutions to be produced in a given order. The order can either be given in the input as a black box or with 
 a succinct definition, e.g. when the solutions are sequences of elements, giving an order on the elements yields a lexicographic order on solutions. Alternatively,
 the order (or family of orders) can be fixed by the problem, which is the point of view adopted here. We also require that the order can be decided in polynomial time. Since partial orders are allowed, it is a strict generalization of enumeration problems as defined in this thesis. The complexity measures and complexity classes are defined for ordered enumeration problems as they have been defined 
 for regular enumeration problems. From a practical point of view, ordered enumeration is a way to deal with a large number of solutions by requesting the enumeration of the best ones first and it relates to the problem of finding the $k$-best solutions, presented in the previous paragraph.

 From a structural complexity perspective, adding an order constraint has no effect on the class $\OutputP$ since the solutions can be sorted after having been output in time polynomial in the solution set, as remarked in~\cite{johnson1988generating}. However, this method stores all solutions before sorting them, which may require exponentially more space than the algorithm which does not respect order.
 This space can be traded for time, by doing a selection sort. The algorithm solving the unordered enumeration problem is run to produce all solutions and we maintain $s$ the smallest
 produced solutions during its run. At the end of the run, $s$ is output, and the algorithm is run again to find the smallest solution larger than $s$ and so on. 
 This method multiplies the total time by the number of solutions, hence if the original algorithm is in $\OutputP$, the algorithm for the ordered problem is also in $\OutputP$ without using more space.

 It is easy to enumerate solutions in lexicographic order, when there is a flashlight search for the problem, by doing the traversal of the tree of partial solutions
 following the order.  However, when there is no flashlight search, as for the generation of maximal independent set, a lexicographic order may dramatically change the complexity.
  Maximal independent sets can be enumerated in lexicographic order with polynomial delay but not in reverse lexicographic order~\cite{johnson1988generating} if $\P \neq \NP$! 
  Likewise, maximal acyclic subhypergraphs of a hypergraph can be enumerated in polynomial delay~\cite{daigo2009generating} but finding the first lexicographic maximal acyclic subhypergraph is $\coNP$-hard,  therefore this enumeration problem in lexicographic order is not in $\IncP$. 
  It turns out that many ordered enumeration problems are not in $\IncP$ because finding the first solution is $\NP$-hard. In particular when the order is the solution size, the enumeration of many kinds of subgraphs is hard: 
  independent sets, dominating sets, cycles \dots The Hamming weight of a solution also induces an interesting order in the context of assignements of formulas, see~\cite{DBLP:conf/sat/CreignouOS11,DBLP:conf/lata/Schmidt17}

  Hardness results with a fixed order are often directly derived from the hardness of finding some largest solution and are not specific to enumeration. Indeed, even trivial enumeration problems become hard when paired with the appropriate order (see Section $2.4$ of~\cite{phdstrozecki}). We design a total order $<$, such that $a<b$ can be decided in polynomial time, but the enumeration of $2^n$ trivial solutions of an instance
  of size $n$ is not in incremental polynomial time in this order, while the unordered problem can be solved in constant delay. 
  Hence, separations of classes of ordered enumeration problems are simpler and cleaner, as stated in the following proposition.

\begin{proposition}[\cite{phdstrozecki}]
For ordered enumeration problems, $\P \neq \NP$ is equivalent to any of these separations
 $$\DelayP \subsetneq \IncP \subsetneq \OutputP \subsetneq \EnumP.$$ 
\end{proposition}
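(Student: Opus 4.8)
\emph{Proof plan.} I would establish the two directions separately, reading the displayed chain as the assertion that, for ordered enumeration, \emph{some} inclusion in $\DelayP\subseteq\IncP\subseteq\OutputP\subseteq\EnumP$ is strict if and only if $\P\neq\NP$; under $\P\neq\NP$ one can in fact make all three strict, the first being by far the most delicate.

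\textbf{If $\P=\NP$ then $\DelayP=\IncP=\OutputP=\EnumP$ for ordered problems.} Let $\enum{A}\in\EnumP$ be ordered: $A$ is polynomially balanced, $\test{A}\in\P$, and each $<_x$ is polynomial-time decidable. Define $\textsc{Next}(x,y)$ as the $<_x$-least element of $\{z\in A(x): z>_x y\}$ (and $\textsc{First}(x)$ as the $<_x$-least element of $A(x)$). Whether a candidate $z$ equals $\textsc{Next}(x,y)$ is a $\coNP$ predicate: it is ``$z\in A(x)$ and $z>_x y$'' (polynomial) conjoined with ``$\forall z'\,(z'\in A(x)\wedge z'>_x y)\Rightarrow z\le_x z'$''. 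Hence, searching a bit at a time, $\textsc{Next}$ and $\textsc{First}$ are computable in polynomial time with an oracle for a $\Sigma_2^p$ problem, and if $\P=\NP$ the polynomial hierarchy collapses, so both functions are in $\FP$. Thus $\enum{A}\in\NextP\subseteq\DelayP$. Since every ordered $\EnumP$ problem then lies in $\DelayP$, and $\DelayP\subseteq\IncP\subseteq\OutputP\subseteq\EnumP$ always, the four classes coincide and no inclusion is strict. Contrapositively, if any inclusion is strict then $\P\neq\NP$.

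\textbf{If $\P\neq\NP$ then at least $\OutputP\subsetneq\EnumP$.} This is exactly Proposition~\ref{prop:output}, which already holds for unordered problems, and an unordered problem equipped with any polynomial-time order is a special ordered problem; so one separation always holds. To get $\IncP\subsetneq\OutputP$ as well, fix an $\NP$-complete $L$ with verifier $V$ and witness length $q$, and take the ordered problem with $A(x)=\{0,1\}^{q(|x|)}$ and $w<_x w'$ iff $V(x,w)=1\wedge V(x,w')=0$, or $V(x,w)=V(x,w')$ and $w<_{\mathrm{lex}}w'$. The order is polynomial time, $A$ is polynomially balanced and polynomial-time testable, and the unordered version is trivial; the problem is in $\OutputP$ since one can list all $2^{q(|x|)}$ strings and sort them in time polynomial in $|x|$ and $|A(x)|$. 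But the $<_x$-first solution is the lexicographically least satisfying witness of $x$ (or a certificate there is none), so producing it decides $L$; as an $\IncP$ algorithm outputs its first solution within polynomial preprocessing-plus-delay time, the problem is not in $\IncP$ unless $\P=\NP$.

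\textbf{The delicate point: $\DelayP\subsetneq\IncP$.} Here the plan is to promote the idea above to a \emph{layered} instance whose solution set is partitioned into exponentially growing layers $B_1,B_2,\dots$, where entering $B_i$ requires solving an $\NP$ instance extracted from the input of size $\Theta\big(\log|B_1\cup\dots\cup B_i|\big)$, with $B_i$ internally ordered so that its first element reveals that answer. Calibrated correctly, the honest algorithm spends work polynomial in the number of solutions already produced, so the problem lies in $\IncP$ (indeed in some $\IncP_a$ with $a\ge2$), whereas any algorithm with polynomial delay would be forced, before reaching the deepest layer, either to have solved a large $\NP$ instance with too little ``background'' computation available or to solve it within a polynomial delay budget; since $\IncP_1=\DelayP$ here too (Proposition~\ref{prop:UInca}), the target separation is precisely $\IncP_1\subsetneq\IncP$. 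Threading the needle between ``deep enough that amortization cannot smooth the burst'' and ``shallow enough to stay in $\IncP$'' using only $\P\neq\NP$ is the crux, and the step I expect to fight with; should it resist, the statement still holds in the form ``$\P\neq\NP$ iff the chain is not entirely collapsed'', via the guaranteed $\OutputP\subsetneq\EnumP$.
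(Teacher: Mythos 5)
Your collapse direction and the middle separations are essentially right, and your witness for $\IncP \subsetneq \OutputP$ (all strings of witness length, ordered so that accepting witnesses precede rejecting ones, lexicographically within blocks) is in substance the same construction the source relies on: a polynomial-time total order on $2^{q(n)}$ trivial solutions whose first element decides an $\NP$-complete problem, while sorting the trivial solution set keeps the problem in $\OutputP$. (Minor caveat: your $\Sigma_2^p$-oracle argument for the collapse under $\P=\NP$ uses that $<_x$ is total; the chapter allows partial orders, where producing a canonical linear extension with polynomial delay is less immediate, but this is a side issue.)

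The genuine gap is the third separation, $\DelayP \subsetneq \IncP$ under $\P\neq\NP$, which you leave as a sketch and explicitly hedge on — and the fallback you offer (``$\P\neq\NP$ iff the chain is not entirely collapsed'') is strictly weaker than the stated proposition, which asserts that \emph{each} inclusion is strict iff $\P\neq\NP$. Moreover, the layered construction you sketch is structurally the amplification argument of Theorem~\ref{th:hierarchy}, and that argument cannot be carried out from $\P\neq\NP$ alone; this is precisely why the paper assumes $\ETH$ there. Concretely: if entering layer $B_i$ requires solving an embedded $\NP$ instance of size $m=\Theta(\log k)$, where $k$ is the number of solutions already produced, then membership in $\IncP$ forces $k^{a}\,\mathrm{poly}(n)\geq 2^{\Omega(m)}$, so at that point even a polynomial-delay machine has been granted cumulative time $k\cdot\mathrm{poly}(n)\geq 2^{\Omega(m)/a}\,\mathrm{poly}(n)$. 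The hypothesis $\P\neq\NP$ rules out no algorithm for $\SAT$ running in such (sub)exponential time, so no contradiction can be extracted against the delay-bounded machine; distinguishing cumulative budgets $k\cdot\mathrm{poly}(n)$ from $k^{a}\cdot\mathrm{poly}(n)$ is a fine-grained question, not one that $\P\neq\NP$ decides. (Note also that for instances with polynomially many solutions the two ordered classes coincide outright, by computing and sorting all solutions, so any witness must have superpolynomially many solutions, which is exactly the regime where both budgets are already superpolynomial.) To prove the proposition as stated you therefore need an argument of a different nature for this inclusion — one in which the obstruction comes from the order itself, as in the construction the cited thesis designs, rather than from the cost of brute-forcing embedded instances relative to the number of solutions already output; as written, your proposal does not establish it.
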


\section{Ranked Solution}

 To avoid to deal with the time and space needed to generate the whole solution set, the best is to generate solutions on demand. 
 Let $A$ be a binary predicate, let $<$ be some total order on $A(x)$, then $\jsol{(A,<)}$ is the problem of computing 
 $y$ from the instance $(x,j)$, such that $y$ is the $j$th element of $A(x)$ according to $<$ or a special symbol if $|A(x)| \leq j$.
 Problem $\jsol{(A,<)}$ is also called the $j$th solution problem~\cite{Bagan09}. From that problem, we define the complexity class $\QueryP$ (see~\cite{phdstrozecki}), which is the set of enumeration problems $\enum{A}$, for which there is an order $<$ such that $\jsol{(A,<)} \in \P$. Note that $\QueryP$ is a restriction of the class $\NextP$: instead of accessing the solutions as in a list by a next pointer, we access them directly as if they were stored in an array.

\begin{example}
We define $\textsc{LinearSystem}((A,b),x)$ the predicate which is true if and only if $Ax = b$, with $A$ a $n*m$ matrix over some finite 
field $\mathbb{F}$ and $x$ and $b$ vectors. We show that $\enum{\textsc{LinearSystem}} \in \QueryP$.
 To solve $\enum{\textsc{LinearSystem}}$, one should output all solutions of a linear system, that is a vector space. 
A basis $(y_{1},\dots,y_{k})$ of the vector space can be computed in polynomial time, for instance using Gaussian elimination.
The set of solutions is the set of vectors equal to $\lambda_{1}y_{1} + \dots + \lambda_{k} y_{k}$
for all $(\lambda_{1},\dots,\lambda_{k}) \in \mathbb{F} ^{k}$. We fix an arbitrary order on $\mathbb{F}$, and consider the associated lexicographic order $<$
on the solutions seen as sequences $(\lambda_{1},\dots,\lambda_{k})$. The problem $\jsol{(\textsc{LinearSystem},<)}$ can be solved in polynomial time
since the $j$th element of $<$ is easy to compute in polynomial time and from that, the solution is obtained by a linear number of scalar multiplications and vector additions.
\end{example}

  If $\jsol{(A,<)} \in \P$, the problem of counting the solutions is also in polynomial time. Reciprocally, if an enumeration problem is solved by a flashlight search, and if counting its solutions is in polynomial time, then the problem is in $\QueryP$, using the lexicographic order yielded by the traversal of the tree of partial solutions. 
  Only very simple problems are in $\QueryP$, for instance listing the solutions of a first-order query over a bounded degree structure~\cite{bagan2008computing}, or monadic second-order query over trees or graphs of bounded branch-width~\cite{DBLP:journals/dam/Courcelle09}. The order used for the case of second-order queries is not natural and is a byproduct of the algorithm.
  For first-order queries and lexicographic ordering, the class of problems for which $\jsol{(A,<)}$ can be solved in logarithmic time is characterized using standard complexity assumptions~\cite{carmeli2021tractable}.

  Counting the perfect matchings of a graph is a classical $\sharp\P$-complete problem~\cite{valiant1979complexity}, but their enumeration can be done in polynomial delay~\cite{uno1997algorithms}, hence if $\FP \neq \sharp\P$ then $\QueryP \neq \DelayP$. The same reasonning holds for $\enum{DNF}$, which is in $\NextP$ because of the flashlight search solving it in lexicographic order,
  but counting the solutions of a $DNF$ is $\sharp \P$-complete.
 
  In~\cite{carmeli2020answering} the problem of generating solutions of conjunctive queries is studied in details under three different constraints:  without order, in a random order and with random access, that is $\QueryP$ (unspecified order). By turning a uniform generator into an exhaustive enumeration algorithm by repeated sampling as in Theorem~\ref{thm:gentopolydelay}, it is easy to generate solutions in random order. However, designing a specific algorithm to generate solutions in random order seems more efficient in practice than randomly sampling solutions~\cite{carmeli2020answering}.
  Since any problem in $\QueryP$ has a random uniform generator in polynomial time, generating in random order is easier than random access.
  The sets of free-connex acyclic conjunctive queries, which can be enumerated/sampled in polylogarithmic delay/time after
  linear preprocessing form a strict hierarchy, showing that these notions are different even for a very restricted problem.

\chapter{Restricted Formalisms and Classification}
\label{chap:limits}
\minitoc

We want to precisely characterize the complexity of enumeration problems. We report the very few known lower bounds and hardness results. We then study restricted formalisms of several kinds to be able to prove stronger results. We either restrict the family of enumeration problems we consider, or we weaken the computation model, or we even fix the algorithmic method used to solve an enumeration problem. This allows to prove lower bounds, to classify problems according to their complexity or to give simpler characterizations of their complexity. We also present several interesting algorithms, often derived from flashlight search, to capture some tractable cases in the several presented formalisms.

\section{Hardness and Lower Bound Results}

All problems derived from $\NP$-complete problems are in $\EnumP$ but not in $\OutputP$ because deciding whether there is a solution is hard. In this section, we are interested in problems hard for lower complexity classes like $\IncP$, for which deciding whether there is a solution is easy.

\paragraph{Hard problems}

Recall that, by Proposition~\ref{prop:anothersolincp}, a problem $\enum{A} \in \IncP$ if and only if 
$\mathsf{AnotherSol_A} \in \FP$. Hence, if we assume $\P \neq \NP$ and that $\mathsf{AnotherSol_A}$ is $\NP$-hard, then $\enum{A} \notin \IncP$. This has been used to prove that several problems are $\IncP$-hard: listing maximal models of a Horn formula~\cite{kavvadias2000generating}, listing vertices of a polyhedron~\cite{khachiyan2006generating} and dualization in lattices given by implicational bases~\cite{defrain2019dualization,defrain2021dualization}.

\paragraph{Class from a problem}

To prove hardness results, we may assume complexity hypotheses on decision problems as in the previous 
paragraph. An alternative is to base hardness of an enumeration problem on an hypothesis on the hardness of another \emph{enumeration problem}. The best example of this approach are the many problems which have been proven to be as hard or equivalent to enumerating the minimal transversals of a hypergraph, a problem not known to be in $\OutputP$ after $40$ years of research. Among these problems the dualization of a monotone boolean formula, the generation of vertices of an integral polytope~\cite{boros2009generating}, the enumeration of minimal keys, the enumeration of minimal dominating sets~\cite{kante2014enumeration}, see~\cite{eiter1995identifying,hagen2008algorithmic} for many examples.  In fact, a recent article~\cite{defrain2019dualization} classifies dualization problems as either in $\IncP$ or as hard as the enumeration of minimal transversals.  

We can also use the conjectures made on the complexity of {\enumDNF} in Chapter~\ref{chap:low} as a source of hardness. The class $\SDelayP$ is stable under the reduction given in Definition~\ref{def:reduction}.
Hence, if {\enumDNF} can be reduced to some problem $\enum{A}$, it is a clue that $\enum{A}$ might not be in $\SDelayP$. We have done such kind of reduction from {\enumDNF} over monotone formulas in~\cite{DBLP:journals/dmtcs/MaryS19}, but it does not give much information, because the problem has now been proved to be in $\SDelayP^{poly}$~\cite{capelli2020enumerating,capelli2021amortization}.

\begin{openproblem}
Find a good problem on which to base hardness for $\DelayP$. 
It should be in $\IncP_k$, with $k>1$. Plausible candidates are enumeration of minimal hitting sets in $3$-uniform hypergraphs or enumeration of the circuits of a binary matroid.
\end{openproblem}

\begin{openproblem}
 Can we transfer hardness between classes?  Prove that some problem, for instance {\enumDNF}, is not in $\SDelayP$ if and only if some other problem, for instance generating minimal transversals,
 is not in $\IncP$.
\end{openproblem}

 \paragraph{Strong complexity hypothesis} 

 Instead of assuming $\P \neq \NP$, we may need to make stronger assumptions.
 If we assume that $\textsc{3SUM}$, the problem of deciding whether three numbers in a list sum to zero, cannot be solved faster than $O(n^2)$, there are several lower bounds on the complexity of generating all triangles of a graph~\cite{kopelowitz2016higher}. Some lower bounds involving the number of triangles can be rephrased as stating that the delay is at least $O(m^{1/3})$, with $m$ the number of edges.
 Another example is the characterization of the complexity of an acyclic conjunctive query: either it is free-connex and its assignments can be enumerated in constant delay and linear preprocessing or it is not and its assignments cannot be enumerated with constant delay and linear preprocessing, unless the product of boolean matrices can be computed in $O(n^2)$~\cite{bagan2007acyclic}.
The Strong Exponential Time Hypothesis ($\SETH$) is a typical hypothesis in fine graind complexity, wich states that there is no algorithm solving $\textsf{SAT}$ in time $2^{\alpha n}$ with $\alpha<1$.
An algorithm that generates maximal solutions of any strongly accessible set system has a worst case time complexity of $\Omega(t2^{q/2})$, unless $\SETH$ is false~\cite{conte2019listing}, where $t$ denotes the number of solutions and $q$ denotes their maximal size. 


\begin{openproblem}
Can we prove the same lower bound for maximal solutions of strongly accessible set systems, but with $\ETH$ as an hypothesis, using the amplification method of Theorem~\ref{th:hierarchy}?
\end{openproblem}

\begin{openproblem}
Assuming $\SETH$, prove a weak lower bound: there is no \emph{constant delay} or 
\emph{strong polynomial delay} algorithm for generating the minimal transversals of a hypergraph, the circuits of a binary matroid or the maximal cliques of a graph.
\end{openproblem}

\section{Hardness for Specific Algorithms or Computation model}

\paragraph{Extension problem}

The most common method to obtain a polynomial delay algorithm is to prove that 
$\ext{A}$ is polynomial, which implies that flashlight search has polynomial delay, as explained in Chapter~\ref{chap:time}. Hence, showing that $\ext{A}$ is $\NP$-complete rules out the use of backtrack search. Extension problems have been studied for their own sake and there are many $\NP$-hardness results, see~\cite{DBLP:conf/ciac/CaselFGMS19,DBLP:conf/fct/CaselFGMS19} and the references therein. In particular, the extension problem of minimal transversals is $\NP$-hard~\cite{boros1998dual}. 

The only weakness of this approach, is that for some problems, the input of the extension problem solved 
when going through the tree of partial solutions may not be arbitrary, since we can decide in which order the extension is done. Typically, if a solution is a set of elements, the order in which elements are added (or forbidden) depends on the designer of the flashlight search and can even be different in every branch of the tree. Most of the time, this degree of liberty is not exploited but there are a few examples 
where it is relevant. For instance, to enumerate models of a matched CNF formula (see~\cite{savicky2016generating} for a definition), a specific variable must be fixed at each step of the flashlight search, so that the subformula is still a matched formula~\cite{savicky2016generating}.

\paragraph{Maximal cliques}

Even when we have an algorithm with good theoretical guarantees, it can be interesting to lower bound the complexity of the algorithms used in practice to solve it. For instance, linear programs can be solved in polynomial time~\cite{karmarkar1984new} but the simplex mehod which is often used in practice has been proven to be at least in subexponential time for many pivot choices~\cite{friedmann2011subexponential}.

The problem of generating maximal cliques is an important problem, used to compute subgraph isomorphism.
While there are many polynomial delay algorithms (see the references in~\cite{DBLP:conf/icalp/ConteGMV16}), 
algorithms with no guarantee on their delay such as Bron and Kerbosch algorithm~\cite{bron1973algorithm} and CLIQUES~\cite{tomita2006worst} are often used to find maximal cliques and are considered efficient in practice.
In a recent paper~\cite{conte2021overall}, Conte and Tomita revisit these two algorithms to prove that they do not have polynomial delay. The Bron and Kerbosch algorithm (and one of its variants) have a delay at least $O(3^{n/3})$,
using the same reasonning as for the total time. For the CLIQUE algorithm, they are able to prove that if 
it has polynomial delay then $\P = \NP$, by using it to solve the extension problem for maximal clique, which is $\NP$-hard. It is similar to the notion of $\NP$-mighty algorithm~\cite{disser2018simplex}, i.e. some algorithms such as the simplex, can solve an $\NP$-hard problem inside their computation even if they are used to solve a polynomial time problem.

\paragraph{Equivalence to a simpler enumeration problem}

A method to generate maximal elements of independent set systems (closed by inclusion), in particular the maximal cliques of a graph, has been proposed by Lawler et al.~\cite{LawlerLK80}. The method relies on solving the $I \cup \{j\}$ problem: given $I$ an independent set maximal among the independent sets included in $\{1,\dots,j-1\}$, compute all independent sets maximal
within $I \cup \{j\}$.  If this problem can be solved in output polynomial time, then an output polynomial 
algorithm exists for the general problem. Hence, the complexity of this method can be characterized through
the complexity of a simpler (but similar) problem, which could help to prove that it cannot be applied.
Note that this method yields a polynomial delay algorithm to generate maximal cliques, while the extension problem is $\NP$-hard~\cite{LawlerLK80}, meaning that flashlight search cannot solve it in polynomial delay.

 The  enumeration of maximal induced subgraphs satisfying some hereditary property $\mathcal{P}$ is a generalization of enumeration of maximal sets in independent set systems. In~\cite{cohen2008generating}, the \emph{input restricted problem} is introduced: list maximal induced subgraphs, with the additional constraint that there is a vertex in the input graph $G$ such that $G \setminus v \in \mathcal{P}$. The general problem is in $\DelayP$ if and only if the input restricted version is in polynomial delay (similar criteria can be given for $\IncP$ and $\OutputP$). This has been further generalized to $t$-restricted problems, see~\cite{cao2020enumerating}. While these characterizations could help prove lower bounds, they seem more suited to help design efficient enumeration algorithms (many of which are given in ~\cite{cao2020enumerating}). 

Strongly accessible set systems further generalize the previous frameworks, and their maximal elements can be enumerated using polynomial space, see~\cite{conte2019listing} and its introduction for a nice description of the relationship between most algorithmic methods mentioned in this section.

\paragraph{Uniform generator}

To turn a uniform generator into an enumeration algorithm using Theorem~\ref{thm:gentopolydelay}, we need to store all enumerated solutions, hence space may be exponential. Indeed, it seems necessary to encode the set of already generated solutions and these sets are in doubly exponential number and thus cannot be encoded in polynomial space. Therefore the enumeration algorithm needs time to rule out a large number of possible sets of generated solutions. 

This idea has been made precise by Goldberg (Theorem $3$, p.$33$~\cite{Goldberg91}): the product of the delay and the space of an enumeration algorithm obtained from a uniform generator is lower bounded by the number of solutions to output up to a polynomial factor. On the other hand it is easy to build an enumeration algorithm with such space and delay, by generating solutions 
by blocks in lexicographic order (Theorem $5$, p.$42$~\cite{Goldberg91}). The proof of the lower bound relies on the fact that the enumeration algorithm can only output solutions which are given by calls to the generator: we are in a restricted black box model. A set of possible initial sequences of output elements in the enumeration is built so that its cardinality is bounded by an exponential in the space used and that the enumeration produces one of these sequences with high probability. Then, if the delay is too small, with high probability the calls to the generator have not produced any of those special sequences which proves the theorem.

However, if we allow \emph{unbounded repetitions} of solutions, thus relaxing the constraint on the enumeration algorithm, we can devise an incremental linear algorithm with polynomial space,
see Theorem~\ref{th:gentoinclin}.

\paragraph{Jumping automaton on graphs}

Several enumeration algorithms for solving FO queries over sparse models (see ~\cite{vigny2018query}) 
can be executed by a jumping automaton on graphs or JAG~\cite{cook1980space}. In this model, the input, usually a graph $G=(V,E)$, 
is represented by its incidence graph $I$, whose vertices are $V\cup E$ and with an edge $(v,e)$ if the vertex $v$ is incident to the edge $e$ in $G$.
The graph $I$ is equipped with a unary function, which maps a vertex in $V$ to the next vertex in $V$ to represent the list of all vertices. 
For each vertex, we have access to the list of its incident edges by a function giving the first edge and one giving the next one.
Finally, a constant number of unary relations and functions can be computed from $G$ and added to the structure $I$. A unary relation should be interpreted as a color of elements in $V\cup E$ and a unary function as a colored edge on the vertex set $V\cup E$.

The JAG is an automaton, it has a fixed number of states of pebbles and a list of transitions. A transition of a JAG depends on the elements on which its pebbles are placed and its state.
In particular, it can test whether two pebbles are on the same element or if they are of some given color. The transition moves the pebbles along edges of the structure $I$.  A JAG solving an enumeration problem has output states and a given subset of its pebbles represents the output solution when an output state is reached in the computation. A JAG with a fixed number of pebbles models an almost constant memory. To define constant delay enumeration, we require that the delay between two outputs of a JAG is constant.
We hope to prove lower bounds for constant delay enumeration using JAGs, in fact they have been introduced to prove space lower bounds for problems like connectivity~\cite{wigderson1992complexity}.

\begin{proposition}
There is no JAG, which receives as input a structure $I$ representing a graph $G$ and $u$, $v$ two vertices of $G$,
and decides whether $(u,v)$ is an edge in $G$ in constant time.
\end{proposition}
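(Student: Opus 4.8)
The plan is the standard \emph{locality} argument for jumping automata on graphs. A JAG has a fixed number $k$ of pebbles, and if it is claimed to decide adjacency in constant time then it halts after at most $c$ steps for some fixed constant $c$; during such a run each pebble traces a walk of length at most $c$ through the incidence structure $I$, so the set of $I$-elements the automaton ever inspects is contained in the ball of navigational radius $c$ around its initial pebble positions, which for this problem are just the $I$-vertices representing $u$ and $v$. The strategy is therefore to build two graphs whose incidence structures agree on that bounded ball around $\{u,v\}$ but disagree about the adjacency of $u$ and $v$: the JAG must then run identically on the two inputs, halt with the same answer, and hence be wrong on one of them.

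Here is the construction. Fix an arbitrary JAG $\mathcal{A}$ purportedly deciding adjacency within $c$ steps, and restrict attention to inputs whose auxiliary unary relations are empty and whose auxiliary unary functions are the identity — these are trivially computable from $G$, hence admissible inputs. Put $N := c+1$. Let $G_0$ be the disjoint union of the star with centre $u$ and leaves $a_1,\dots,a_{N+1}$ and the star with centre $v$ and leaves $b_1,\dots,b_{N+1}$; let $G_1$ be the graph with edges $\{u,a_1\},\dots,\{u,a_N\}$, $\{v,b_1\},\dots,\{v,b_N\}$ and $\{u,v\}$, together with two further, isolated vertices $a_{N+1},b_{N+1}$. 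In both incidence structures fix the global vertex order $u,a_1,\dots,a_N,v,b_1,\dots,b_N,a_{N+1},b_{N+1}$, let the incidence list of $u$ begin $\{u,a_1\},\dots,\{u,a_N\}$ and that of $v$ begin $\{v,b_1\},\dots,\{v,b_N\}$, and in $G_1$ place the edge $\{u,v\}$ in position $N+1$ of both lists. Then $I(G_0)$ and $I(G_1)$ are literally equal except on the vertices $a_{N+1},b_{N+1}$, on the edge occupying position $N+1$ of $u$'s list, and on the edge occupying position $N+1$ of $v$'s list; and a direct check shows each of these elements lies at navigational distance at least $N+1 > c$ from both $u$ and $v$ (an $I$-edge is reached only through an incident endpoint, and such an edge is the $(N+1)$st entry of that endpoint's list; and $a_{N+1},b_{N+1}$ likewise sit $\ge N+1$ steps away along the vertex list and along the incidence lists).

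The rest is a routine synchronisation induction on the step number $t \le c$: assuming the configurations of $\mathcal{A}$ — state together with all pebble positions — on $(I(G_0),u,v)$ and on $(I(G_1),u,v)$ coincide after $t$ steps, the next transition is selected using only the state, the coincidence pattern of the pebbles, and the (empty) colours of the occupied cells, all of which are identical; and the new pebble positions are obtained by applying fixed navigation functions to cells at distance $\le t \le c$ from $\{u,v\}$, which, together with their images, lie in the region where $I(G_0)$ and $I(G_1)$ agree (all discrepancies being at distance $\ge N+1 > c$). Hence the configurations coincide after $t+1$ steps as well. Taking $t=c$ shows $\mathcal{A}$ halts with identical output on the two inputs, contradicting $u\sim v$ in $G_1$ and $u\not\sim v$ in $G_0$; since $c$ was an arbitrary constant, no JAG decides adjacency in constant time. (If $\mathcal{A}$ is nondeterministic, the same set of length-$\le c$ computations is available on both inputs, so ``there is an accepting run'' has the same truth value on $G_0$ and $G_1$ and the contradiction is the same.)

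The only point that really needs care is to leave the automaton no navigational shortcut to the ``buried'' edge $\{u,v\}$: it must sit deep in the incidence list of \emph{each} of its two endpoints — this is why one needs the symmetric double star rather than, say, attaching $v$ with small degree, which would let a pebble at $v$ step straight onto $\{u,v\}$ — and the two graphs must be made to agree \emph{exactly} on the global vertex order and on the short prefixes of all incidence lists, so that the distance bound holds along every possible route. Granting that, the locality lemma for JAGs and the induction above are the familiar boilerplate of JAG lower bounds in the style of Cook--Rackoff.
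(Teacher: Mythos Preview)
There is a genuine gap in how you handle the preprocessing. In the paper's model the structure $I$ is produced from $G$ by a preprocessing that may add a constant number of unary relations \emph{and unary functions}, and this preprocessing is part of the algorithm, chosen together with the JAG (the paper's own proof opens with ``Assume that there is a JAG $J$, such that from any graph $G$ a structure $I$ can be built\dots''). The statement to refute is therefore: there exists a pair (preprocessing, JAG) deciding adjacency in constant time. Your sentence ``restrict attention to inputs whose auxiliary unary relations are empty and whose auxiliary unary functions are the identity --- these are trivially computable from $G$, hence admissible inputs'' treats the preprocessing as something the adversary may fix; that only shows the JAG fails under the \emph{trivial} preprocessing. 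Against an arbitrary preprocessing your $G_0/G_1$ construction collapses immediately: a single auxiliary unary function such as ``map each vertex to the last entry of its incidence list'' sends a pebble from $u$ onto the edge $\{u,v\}$ of your $G_1$ in one step, so the buried edge is no longer at navigational distance greater than $c$ and your synchronisation induction breaks.

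The paper's proof takes a different route precisely to absorb arbitrary auxiliary functions: it is a counting (adjacency-labelling) argument rather than an explicit indistinguishable pair. Whatever $I$ the preprocessing builds, in $c$ steps the JAG touches only $O(1)$ elements of $I$, whose identities carry $O(\log n)$ bits; hence the accept/reject decision on $(u,v)$ is a function of $O(\log n)$ bits, and the full edge relation of $G$ is recoverable from the $O(\log n)$-bit neighbourhood in $I$ of each vertex --- $O(n\log n)$ bits in total. Since there are at least $2^{n^2}/n!$ unlabelled graphs, this is impossible. Your indistinguishable-pair idea can be completed, but only \emph{after} that information bound is in place: pigeonhole then yields two graphs whose enriched local views around $u$ and $v$ coincide while the edge $(u,v)$ differs --- and at that point you have essentially reproduced the paper's argument.
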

\begin{proof}
Assume that there is a JAG $J$, such that from any graph $G$ a structure $I$ can be built, on which $J$ decides in constant time
whether two elements are an edge in $G$. It means that the computation begins with two pebbles on $u$ and $v$ and $J$ accepts after a
constant number of transitions if and only if $(u,v) \in E$. Assume that all elements of $I$ are labeled by a distinct label of size $O(\log(n))$, where $n = |V \cup E|$. In constant time, a JAG explores only a constant number of elements around $u$ and $v$, hence its decision to accept 
$(u,v)$ as an edge depends on an information which can be encoded into $C\log(n)$ bits, for some constant $C$. The labeling of the vertices is necessary to differentiate vertices, since a JAG
can test whether two pebbles are on the same vertex.
We have seen that the set of edges of $G$ can be represented by the neighborhood in $I$ of each vertex $v \in V$, hence it can be encoded by $Cn\log(n)$ bits.
As a consequence, there are at most $2^{Cn\log(n)}$ different graphs represented by $J$ and any incidence structure $I$. Since there are at least $2^{n^2}/n!$ unlabeled graphs,
which is asymptotically larger than $2^{Cn\log(n)}$, then there is no $J$, which decides the edge relation in constant time.
\end{proof}

We want to prove a lower bound for an enumeration problem, using an approach similar to the one used for the edge relation. 
However, the enumeration of edges by a JAG is trivial, by enriching $I$ with a function representing the list of edges.
Hence, we need to consider a more complex FO query, such as one defining paths of length $2$. 

 \begin{openproblem}[Suggested by Segoufin, personal communication]\label{open:jag}
  Unconditionally prove that there is some simple FO query, say $\exists z(E(x,z) \wedge E(z,y))$, which cannot be solved using a JAG with a constant number of pebbles and constant delay. 
  The preprocessing to compute the structure $I$ representing the graph may be further restricted.
 \end{openproblem}

 Unconditional lower bounds are extremely hard to obtain: after fifty years of research, it is not even known whether $\mathsf{SAT}$ requires superlinear time to be solved. To obtain unconditional lower bounds in enumeration, the easiest approach is to show that a problem is not in some small complexity class, like constant delay with linear preprocessing, and even to restrict the computational model.  That is why we have proposed Open Problem~\ref{open:jag} on the JAG computational model.
Indeed, unconditional separations from classical complexity involve classes of problems solvable by a family of boolean circuits: strict hierarchy inside $\AC^0$~\cite{hastad1986almost}, polynomial size monotone circuits cannot decide the existence of a $k$-clique~\cite{razborov1985lower} and $ \ACC \neq \NEXP $~\cite{williams2014nonuniform}. 

\begin{openproblem}
 Can we define meaningful circuit classes for enumeration and prove unconditional separations of $\DelayP$ from one of these classes?
 Can we relate circuit classes for enumeration to parallel computation or descriptive complexity?
\end{openproblem}

\section{Restricted Classes of Problems}

We now present several families of enumeration problems, which are easier to classify
but nonetheless contain  many interesting problems. Saturation under set operators, polynomial interpolation and queries with second order variables are presented in more details because I have worked on them.

\subsection{Constraint Satisfaction Problems}

Constraint satisfaction problems are a broad generalization of the $3$-\textsc{SAT} problem: an instance of a constraint satisfaction problem (CSP) is given as a conjunction of constraints over variables.  
The constraints are relations over a domain, usually finite and come from a set of constraints called the constraint language, e.g. the disjunctions over three literals for $3$-\textsc{SAT}. 
A solution of a CSP is an assignment of the variables which satisfies all constraints. A CSP can also be interpreted as finding a homomorphism between two structures representing respectively the variables and the domain.

The fundamental question is to classify the constraint languages for which the problem of finding 
a solution is easy or hard. It turns out that in the boolean case, all constraint languages either yield a CSP in $\P$ or in $\NP$~\cite{schaefer1978complexity} and this result has been recently extended to any finite domain~\cite{bulatov2017dichotomy,zhuk2020proof}, which is surprising since in general, there are problems of intermediate complexity between $\P$ and $\NP$ by Ladner's theorem~\cite{ladner1975structure}.

A CSP problem can easily be turned into an \emph{enumeration CSP} by asking to generate all of its satisfying assignments rather than deciding whether there is one.
Over a boolean domain, there is a dichotomy: for Horn relations, anti-Horn relations, affine relations and bijunctive relations, the enumeration is in $\DelayP$, while all other problems are $\IncP$-hard (and even $\EnumP$-complete)~\cite{creignou1997generating}. This result has been generalized to domains of larger size~\cite{schnoor2007enumerating}, with a new efficient algorithm and some hardness results, but no complete classification. In particular, it is shown that some enumeration CSPs in $\DelayP$ cannot be solved by a flashlight search, contrarily to the boolean case.

\begin{openproblem}
Is it possible to prove a dichotomy for enumeration CSPs and finite domain, using results of ~\cite{bulatov2017dichotomy,zhuk2020proof}?
\end{openproblem}

The enumeration CSPs have also been considered with an order constraint on the output of solutions, as in Chapter~\ref{chap:order}.
When the solutions must be given in increasing or decreasing Hamming weight order, there is also a dichotomy proved in~\cite{DBLP:conf/sat/CreignouOS11,DBLP:conf/lata/Schmidt17}.

Variations on enumeration CSP, where the constraint language is unrestricted but the instances are restricted, are studied in~\cite{bulatov2012enumerating}.
Typically, the problems are in $\DelayP$ when the input has a small tree-width and it seems to be different from the same results for decision problems, see~\cite{greco2013structural} for an overview of results on variants of enumeration CSPs.

\subsection{Saturation by Set Operators}

As explained in Chapter~\ref{chap:time}, all algorithms in incremental polynomial time can be interpreted as 
saturation algorithms. In particular, saturation problems, which are defined as computing the closure of a set of elements by a $k$-ary operation computable in polynomial time, are in $\IncP_{k+1}$.
A natural question is to understand for which operation, the defined saturation problem is in fact in $\IncP_1 = \DelayP$.
We tried to settle this question with Arnaud Mary in~\cite{mary2016efficient,wepa2016,DBLP:journals/dmtcs/MaryS19}, by restricting the set of saturation operators.

 In the proposed framework, a solution is a vector over some finite set, the saturation operations are of fixed arity and they act \emph{component-wise} and in the same way on each component of the vector. We fix a finite domain $D$. Given a $t$-ary operation $f$ (a function from $D^t$ to $D$), $f$ can be naturally extended to a $t$-ary operation over vectors of the same size. Let $(v^{1},\dots v^{t})$ be a $t$-uple of vectors of size $n$, $f$ acts component-wise on it, that is for all $i\leq n$, $f(v^{1},\dots, v^{t})_i = f(v^{1}_i,\dots, v^{t}_i)$. For instance, if $f$ is the conjunction on the boolean domain ($D=\{0,1\}$), then $f$ acts on two vectors, which can be interpreted as subsets of $[n]$, and computes their intersection.

\begin{definition}
Let $\cF$ be a finite set of operations over $D$.
Let $\ccS$ be a set of vectors over $D$.
Let $\cF^i(\ccS) = \{f(v_1,\dots, v_t) \mid v_1,\dots, v_t \in \cF^{i-1}(S) \text{ and } f\in \cF \} \cup \cF^{i-1}(\ccS)$ and $\cF^0(\ccS) = \ccS$.
The closure of $\ccS$ by $\cF$ is $\Cl_\cF(\ccS) = \displaystyle{\cup_i \cF^i(\ccS)}$.
\end{definition}

Notice that $\Cl_{\cF}(\ccS)$ is the smallest set which contains $\ccS$ and which is closed by the operations of $\cF$.
As an illustration, assume that $D = \{0,1\}$ and that  $\cF = \{ \vee \}$.
Then the elements of $\ccS$ can be seen as subsets of $[n]$ (each vector of size $n$ is the characteristic vector of a subset of $[n]$) and $\Cl_{\{\vee\}}(\ccS)$ is the closure by union of all
sets in $S$. Let $\ccS = \{ \{1,2,4\},  \{2,3\},  \{1,3\} \}$ then $$\Cl_{\{\vee\}}(\ccS)= \{ \{1,2,4\},  \{1,2,3,4\}, \{2,3\},  \{1,3\},  \{1,2,3\}\}.$$ 

For each set of operations $\cF$ over $D$, we define the problem $\EnumClo_\cF$: given a set of vectors $\ccS$, list the elements of $\Cl_\cF(\ccS)$.
Note that $\EnumClo_{\{\vee\}}$ is the problem $\enum{\textsc{Union}}$ introduced in Chapter~\ref{chap:time}.
A naive saturation algorithm solves $\EnumClo_\cF$: the elements of $\cF$ are applied to
all tuples of $\ccS$ and any new element is added to $\ccS$, the algorithm stops when no new element can be produced.
If the largest arity of an operation in $\cF$ is $k$ then this algorithm adds a new element to a set of $t$ elements (or stops)
in time $O(t^kP(n))$ where $n$ is the size of a solution and $P(n)$ the time to evaluate an operation of $\cF$. Hence, $\EnumClo_\cF \in \IncP_{k+1}$.

One way of solving $\EnumClo_\cF$ in polynomial delay is to solve the associated extension problem in polynomial time.
The extension problem is equivalent to the membership problem, denoted by $\Mem_\cF$:
given $\ccS$ a set of vectors of size $n$, and a vector $v$ of size $n$ does $v$ belong to $\Cl_\cF(\ccS)$?

\paragraph{The Post lattice}

Let $\cF$ be a finite set of operations over $D$. The \emph{functional clone} of $\cF$,  denoted by $<\cF>$, is the set of operations obtained by any composition of the operations of $\cF$ and projections.
The sets $\Cl_\cF(\ccS)$ and $\Cl_{<\cF>}(\ccS)$ are the same, hence it is enough to study functional clones to understand the complexity of $\EnumClo_{\cF}$.
When the domain is boolean, i.e. $D = \{0,1\}$, the functional clones form a lattice called Post's lattice (see~\cite{post1941two,reith2003optimal,bohler2005bases}). We prove by reduction that, for  many functional clones of this lattice, the problems $\EnumClo_\cF$ are equivalent. We are left with $4$ families of functional clones:

\begin{itemize}
	\item The clone $E_2 = <\wedge>$ or its dual $<\vee>$. The problem $\Mem_{E_2}$ can be solved in linear time, and we can design 
	a flashlight search in linear delay in the input, see Chapter~\ref{chap:time}.
	\item The clone $L_0 = <+>$ where $+$ is the boolean addition. Again $\Mem_{<+>}$ is in polynomial time, and the vector space it generates can 
	be enumerated in linear delay in the size of a solution. The problem is similar to $\enum{\textsc{LinearSystem}}$ presented in Chapter~\ref{chap:order} and is in $\SDelayP \cap \QueryP$.
	The results are similar for the clone $L_2$ generated by the sum modulo two of three elements.
	\item The clone $BF = <\vee,\neg>$, which defines a boolean algebra and $M_2 = <\wedge,\vee>$. In both cases, we can compute atoms whose combinations by $\vee$ produce all solutions,
	and it can be done with linear delay in a solution. Again the problem is in $\SDelayP \cap \QueryP$.
	\item The clones $S_{10}^k = <Th_k^{k+1}> $, where $Th_k^{k+1}$ is the threshold function over $k+1$ variables, which is equal to one if and only if at least $k$ of its arguments are ones.
	Because of the Baker-Pixley Theorem~\cite{baker1975polynomial}, $\Cl_{S_{10}^k}(\ccS)$ can be characterized by all the projections of $\ccS$ over $k$ components. Hence, $\Mem_{S_{10}} \in \P$
	and the flashlight search is in strong polynomial delay. In the particular case of the majority operator $<Th_2^{3}>$, a reduction to the problem of enumerating models 
	of a $2CNF$ formula gives a linear delay algoritm, using the algorithm of~\cite{feder1994network}. 
\end{itemize}

As a conclusion, we obtain the following theorem.

\begin{theorem}[\cite{mary2016efficient}]
For any finite set of boolean operators $\cF$, $\EnumClo_\cF \in \DelayP$.
\end{theorem}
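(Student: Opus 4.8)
The plan is to run the Post-lattice argument: first reduce the (infinitely many) finite operation sets $\cF$ to finitely many functional clones, then collapse most of these clones together by polynomial-delay reductions, and finally dispatch the few remaining clones by showing that the associated membership problem is in $\P$ and invoking the flashlight-search proposition (if $\ext{A}\in\P$ then $\enum{A}\in\DelayP$). Since applying a projection never produces a vector outside $\Cl_\cF(\ccS)$, one has $\Cl_\cF(\ccS) = \Cl_{<\cF>}(\ccS)$, so $\EnumClo_\cF$ and $\EnumClo_{<\cF>}$ are literally the same problem; it therefore suffices to treat $\cF$ a functional clone, and over the Boolean domain these are exactly the elements of Post's lattice. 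First I would check that whenever two clones $\cF$, $\cF'$ generate each other's operations (allowing the available constants $\zero,\one$ to be used on the closure), the problems $\EnumClo_\cF$ and $\EnumClo_{\cF'}$ reduce to one another by polynomial-delay reductions in the sense of Definition~\ref{def:reduction}, under which $\DelayP$ is closed. Walking through Post's lattice this way one is left with four families to handle: $E_2 = <\wedge>$ and its dual $<\vee>$; $L_0 = <+>$ and $L_2 = <x+y+z>$; $BF = <\vee,\neg>$ and $M_2 = <\wedge,\vee>$; and, for every $k\ge 2$, $S_{10}^k = <Th_k^{k+1}>$ and its dual, where $Th_k^{k+1}$ is the threshold-$k$ operation on $k+1$ bits.

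Next I would reduce each remaining problem to its membership problem. For any $\cF$, a partial solution of $\EnumClo_\cF$ fixes the first $j$ coordinates, and since every $f\in\cF$ acts component-wise, the projection of $\Cl_\cF(\ccS)$ onto coordinates $1,\dots,j$ equals $\Cl_\cF(\pi_{[j]}(\ccS))$; hence a length-$j$ partial solution $y$ is extendible to an element of $\Cl_\cF(\ccS)$ iff $y\in\Cl_\cF(\pi_{[j]}(\ccS))$, an instance of $\Mem_\cF$ obtained in linear time. So $\Mem_\cF\in\P$ gives $\ext{\EnumClo_\cF}\in\P$, and the flashlight-search proposition yields $\EnumClo_\cF\in\DelayP$. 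The theorem thus reduces to showing $\Mem_\cF\in\P$ for a representative of each of the four families.

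For $E_2$, $\Cl_{<\wedge>}(\ccS)$ is the meet-closure of $\ccS$, and $v$ lies in it iff $v$ equals the component-wise meet of $\{s\in\ccS : v\le s\}$ (in particular this set must be nonempty), a linear-time check; the dual $<\vee>$ is symmetric. For $L_0$ and $L_2$ the closure is the affine subspace of $\{0,1\}^n$ spanned by $\ccS$ over $\mathbb{F}_2$, and membership is decided by Gaussian elimination. For $BF$ and $M_2$ the closure is a Boolean algebra, respectively a distributive sublattice, of $\{0,1\}^n$; grouping coordinates by their column in the matrix whose rows are the elements of $\ccS$, $v$ lies in the $BF$-closure iff $v$ is constant on each group, and in the $M_2$-closure iff $v$ respects the inclusion order between columns — equivalently, after computing the atoms (join-irreducibles) one checks that $v$ is the join of the atoms below it. Finally, for the threshold clones $S_{10}^k$ the key ingredient is the Baker--Pixley theorem: $Th_k^{k+1}$ is a near-unanimity operation of arity $k+1$, so $\Cl_{S_{10}^k}(\ccS)$ is determined by its projections onto $k$-element coordinate sets, i.e. $v\in\Cl_{S_{10}^k}(\ccS)$ iff for every set $I$ of $k$ coordinates the pattern $\pi_I(v)$ occurs in the (brute-force computable, constant-size) set $\Cl_{S_{10}^k}(\pi_I(\ccS))\subseteq\{0,1\}^I$; there are only $\binom{n}{k}$ such $I$, so this is polynomial time. (For the majority operator $Th_2^3$ one gets the sharper statement that the closure is the model set of a $2$-CNF, which even gives linear delay via Feder's algorithm, and in the cases $L_0$, $L_2$, $BF$, $M_2$, $S_{10}^k$ one in fact lands in $\SDelayP\cap\QueryP$.)

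The genuinely delicate part is twofold. On the structural side, one must verify carefully that all the collapses inside Post's lattice really are polynomial-delay reductions as in Definition~\ref{def:reduction} — in particular that no reduction is forced to introduce exponentially many ``empty'' source solutions, and that the presence or absence of the constants $\zero,\one$ in a clone (which only adds these as forced elements of the closure) is handled correctly on both sides. On the algorithmic side, the threshold case is the hard one: converting the Baker--Pixley characterization into an actual polynomial-time membership test, and in the $Th_2^3$ case exhibiting the $2$-CNF whose models are exactly the closure. Everything else is routine linear algebra and lattice theory.
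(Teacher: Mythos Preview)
Your proposal is correct and follows essentially the same route as the paper: reduce to functional clones via $\Cl_\cF(\ccS)=\Cl_{<\cF>}(\ccS)$, collapse Post's lattice down to the four families $E_2$, $L_0/L_2$, $BF/M_2$, $S_{10}^k$ by polynomial-delay reductions, and for each family show $\Mem_\cF\in\P$ (respectively by the greatest-lower-bound test, Gaussian elimination, atom computation, and the Baker--Pixley characterization via $k$-ary projections) so that flashlight search gives polynomial delay. Your observation that $\ext{\EnumClo_\cF}$ reduces to $\Mem_\cF$ via $\pi_{[j]}(\Cl_\cF(\ccS))=\Cl_\cF(\pi_{[j]}(\ccS))$ is exactly the link the paper uses, and your identification of the two delicate points (making the lattice collapses into genuine polynomial-delay reductions, and the Baker--Pixley membership test for the threshold clones) matches where the real work lies.
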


We would like to prove that the complexity of the problem is in $\SDelayP$ rather than $\DelayP$.
The only open case is the clone $E_2 = <\wedge>$ for which no strong polynomial delay algorithm is known. 
It is equivalent to the question $\enum{\textsc{Union}} \in \SDelayP$, which is mentioned in Chapter~\ref{chap:low}.

We can also consider the complexity of the \emph{uniform problem}, where the clone is given as input instead of being fixed. 
Let \textsc{ClosureTreshold} be the following problem: given a set $\ccS$ of vectors and an integer $k$ decide whether the vector $\mathbf{1}\in \Cl_{S_{10}^k}(\ccS)$,
we have proven that \textsc{ClosureTreshold} is $\coNP$-complete. Hence, we cannot use flashlight search to solve the uniform problem unless $\P = \NP$.

\begin{openproblem}
Prove that the uniform problem is hard for $\IncP$. The proof must rely on infinite families of clones such as $\Cl_{S_{10}^k}$.
\end{openproblem}

\paragraph{Larger domain}

The first generalization of the previous result is to consider domains larger than $2$. 
However, the lattice of functional clones gets immensly more complicated. For $|D| \geq 3$,
we can generalize the polynomial cases of the boolean domain: group operations or sets of operations containing a 
near unanimity term (similar to the threshold function) have a polynomial time membership problem. 
The problem $\Mem_{\cF}$ is known in the universal algebra community under the name \emph{subpower membership problem} or SMP. The method for groups also works for extensions of groups by multilinear operations such as ring or modules. For semigroups, some cases are known to be $\PSPACE$-complete and other are in $\P$~\cite{bulatov2016subpower,steindl2017subpower}.

The last case we would like to extend is the clone generated by the conjunction.
A natural generalization is to fix an order on $D$ and to study the complexity
of $\Mem_{f}$ with $f$ a binary monotone operation. Let $f$ be the function over $D=\{0,1,2\}$ defined by  $f(x,y) = \min(x+y,2)$. This function is monotone in each of its arguments. The problem \textsc{Exact3Cover} of finding an exact cover of a set by subsets of size $3$ reduces to $\Mem_{f}$~\cite{mary2016efficient}, which proves that $\Mem_{f}$ is $\NP$-complete.
However, a direct application of the supergraph method yields the following proposition.

\begin{proposition}[\cite{mary2016efficient}]
 If $f$ is an associative function of arity $2$, then $\EnumClo_{f} \in \DelayP$.
\end{proposition}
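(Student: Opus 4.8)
The plan is to apply the supergraph method of Chapter~\ref{chap:time}. Fix the common length $n$ of the vectors in an input $\ccS$; since $D$ is finite and fixed, $\Cl_f(\ccS)\subseteq D^n$ is finite and each solution has size $O(n)$. I would build a directed supergraph $G$ whose vertex set is $\Cl_f(\ccS)$, putting an arc from $v$ to $f(v,s)$ for every $s\in\ccS$, where $f$ also denotes the component-wise extension of the binary operation to $D^n$. Since $\Cl_f(\ccS)$ is closed under $f$ and contains $\ccS$, we have $f(v,s)\in\Cl_f(\ccS)$ whenever $v\in\Cl_f(\ccS)$, so $G$ is well defined and following arcs never leaves the vertex set.

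Next I would verify the two hypotheses of the supergraph method. The out-neighbourhood of a vertex is computable in polynomial time: given $v$, for each of the at most $|\ccS|$ vectors $s\in\ccS$ one computes $f(v,s)$ with $n$ evaluations of $f$, so the whole neighbourhood is produced in time $O(n|\ccS|)$; already-discovered vertices are filtered out with a dictionary (a trie over $D^n$) storing the solutions output so far, which also guarantees no repetition and, together with $|\Cl_f(\ccS)|\le |D|^{n}$, guarantees termination.

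The key step — and the place where associativity is essential — is to show that every element of $\Cl_f(\ccS)$ is reachable in $G$ from an element of $\ccS$, so that $G$ with the source set $\ccS$ meets the reachability requirement. By definition $\Cl_f(\ccS)=\bigcup_i \cF^i(\ccS)$, so every solution is obtained from vectors of $\ccS$ by some finite, arbitrarily parenthesised iteration of the binary operation $f$. Because $f$ is associative on $D$, its component-wise extension is associative on $D^n$, hence any such expression evaluates to the same value as the left-associated product $f(\dots f(f(s_1,s_2),s_3)\dots,s_k)$ of its leaves $s_1,\dots,s_k\in\ccS$ in order; and that value is exactly the vertex reached from $s_1$ by following the arcs labelled $s_2,\dots,s_k$. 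Thus $\Cl_f(\ccS)$ equals the set of vertices reachable in $G$ from $\ccS$. (This is precisely what fails without associativity: a general binary operation only gives a tree-shaped derivation, which is why the naive saturation bound is $\IncP_3$ for arity $2$ and why associativity is needed to reach $\DelayP$.)

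Finally, I would invoke the supergraph traversal of Chapter~\ref{chap:time}: a depth-first traversal of $G$ with source set $\ccS$, outputting each vertex the first time it is reached, runs with polynomial delay and polynomial preprocessing (using the standard amortisation of the traversal to turn the amortised bound into a worst-case delay). I expect the reachability argument above to be the only real content; everything else is the generic machinery. Note that this algorithm stores all discovered solutions and so uses exponential space, consistent with claiming membership in $\DelayP$ rather than $\DelayP^{poly}$, and that the traversal itself certifies that its output set is $\Cl_f(\ccS)$, so correctness does not require deciding membership in the closure.
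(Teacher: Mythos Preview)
Your proposal is correct and matches the paper's approach: the paper states only that ``a direct application of the supergraph method yields the following proposition,'' and you have fleshed out precisely that argument. The crucial observation you make --- that associativity of the component-wise extension lets every closure element be rewritten as a left-associated product $f(\dots f(s_1,s_2)\dots,s_k)$ and hence is reachable from $s_1$ via arcs $v\to f(v,s)$ --- is exactly the content the paper leaves implicit, and your remark that the algorithm uses exponential space (hence $\DelayP$, not $\DelayP^{poly}$) is consistent with the open problem the paper poses immediately after the proposition about eliminating this space via reverse search.
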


\begin{openproblem}
Better characterize the clones for which $\EnumClo_{f}$ can be solved by a supergraph method, in particular for arity larger than $2$.
Is it possible to do a reverse search to avoid exponential space?
\end{openproblem}

\paragraph{Non uniform operations}

The second natural generalization is to allow operations to act differently on each component of the vectors.
First, we only allow unary operations acting on a single component by turning it to zero or to one.
It roughly corresponds to listing downward closure (or upward closure) of $\Cl_{\cF}(\ccS)$ and for 
the boolean domain, we are still able to prove that $\Mem_{\cF} \in \P$. 

If we consider unary operators acting on three componenents, then there is one operator
for which the membership problem is equivalent to \textsc{Exact3Cover}, which leads to the following question.

\begin{openproblem}
 Is the membership problem $\NP$-hard for some unary operator acting on $2$ components?
\end{openproblem}

\paragraph{Maximal solution}

The last generalization is to enumerate minimal or maximal elements of $\Cl_{\cF}(\ccS)$, to reduce the number of solutions output while still 
producing the most interesting ones. There is no natural saturation algorithm to compute the maximal elements of $\Cl_{\cF}(\ccS)$, contrarily to all problems studied in this section.
Hence, there is no guarantee that the problem can be solved by an incremental polynomial time algorithm.

We consider the boolean domain, and the order is the inclusion of solutions. We have proved that the problems of 
generating maximal or minimal elements of $\Cl_{\cF}(\ccS)$ is either trivial, or equivalent to one of the following well-known enumeration problems.

\begin{itemize}
\item The enumeration of the maximal models of a $2-CNF$ formula, which can be solved in linear delay~\cite{kavvadias2000generating}.
\item The enumeration of the circuits of a binary matroid (eventually with a fixed element in the circuits), which is in $\IncP_2$~\cite{khachiyan2005complexity}.
\item The enumeration of maximal independent sets of a hypergraph of dimension $k$ (the size of the largest hyperedge), which is in $\IncP_k$~\cite{eiter1995identifying}.
\end{itemize}

As a conclusion, generating maximal or minimal elements of $\Cl_{\cF}(\ccS)$ over the boolean domain is in $\IncP$ for all $\cF$.

\subsection{Query Problem}

Let $\mathcal{L}$ be some logic, then a query $\phi \in \mathcal{L}$ is a formula of $\mathcal{L}$ with 
free variables. Given a model, solving this query means listing all assignments of elements of the model to the free variables which satisfy the query. 
The decision version of this problem is the model checking problem: decide whether a formula withtout free variables is satisfied over some model.

\paragraph{Small solutions}

Typically, the query is of small size, while the model is huge, hence we consider the data complexity
of the problem (query size is fixed), see Section~\ref{chap:low}. Since the size of a solution to a query is bounded by the number of free variables,
the solutions considered are small. 

The most commonly studied logic is the first order logic (FO).
There are constant delay algorithms when the model on which the query is evaluated is restricted to be of bounded or low degree~\cite{DBLP:journals/tocl/DurandG07,DBLP:journals/corr/abs-2010-08382}, of bounded expansion~\cite{DBLP:journals/lmcs/KazanaS19} or nowhere dense~\cite{DBLP:conf/pods/SchweikardtSV18}. The last result gives in fact a dichotomy on the complexity of the enumeration of FO queries modulo a complexity hypothesis.
\begin{theorem}[\cite{vigny2018query}]
Let $\mathcal{C}$ be a class closed under subgraph and assume $\FPT \neq \AW[*]$, then the problem of enumerating the models of a FO query
is not in $\DelayP$ if and only if $\mathcal{C}$ is somewhere dense.
\end{theorem}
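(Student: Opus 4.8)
The statement is a dichotomy, so the plan splits into a hardness direction and an algorithmic direction.

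For the hardness direction (if $\mathcal{C}$ is somewhere dense and closed under subgraphs, then enumerating the models of an $\mathrm{FO}$ query is not in $\DelayP$), the first step is to recall the structural characterization of subgraph-closed somewhere dense classes: there is a fixed radius $r$ such that $\mathcal{C}$ contains the $r$-subdivision of every graph, in particular of arbitrarily large cliques. The second step is to transport this into a lower bound on parameterized model checking: an $\mathrm{FO}$ interpretation that contracts each length-$r$ subdivision path back to an edge turns any $\mathrm{FO}$ sentence $\varphi$ over a graph $G$ into an $\mathrm{FO}$ sentence $\varphi'$, of quantifier rank bounded in terms of $r$ and $\mathrm{qr}(\varphi)$, over the subdivision $G^{(r)}\in\mathcal{C}$; since the parameterized $\mathrm{FO}$ model-checking problem over general graphs is $\AW[*]$-complete, its restriction to $\mathcal{C}$ is $\AW[*]$-hard. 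The third step is to observe that a $\DelayP$ algorithm for enumerating the satisfying assignments of a fixed query runs in polynomial time up to its first output (preprocessing plus one delay), hence decides, for each fixed Boolean query, model checking over $\mathcal{C}$ in polynomial time; chasing the dependence of this polynomial on the query — or, to sidestep uniformity worries, encoding the whole $\mathrm{FO}$ hierarchy into a single fixed query over an enriched signature still interpretable inside $\mathcal{C}$ — would put parameterized $\mathrm{FO}$ model checking into $\FPT$, contradicting $\FPT\neq\AW[*]$. I expect this uniformity bookkeeping to be the main subtlety of the hardness side.

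For the algorithmic direction (nowhere dense implies $\DelayP$, in fact constant delay after almost-linear preprocessing), the plan is to combine logical locality with sparsity. First, apply Gaifman's locality theorem to put $\varphi(\bar x)$ into a Boolean combination of local formulas $\psi(x)$ around single elements and basic local sentences asserting the existence of many pairwise far-apart witnesses of a local property. Second, exploit that nowhere dense classes are uniformly quasi-wide, admit low-treedepth colourings with few colours, and have bounded-radius neighbourhoods of almost-linear total size realizing only polynomially many types; from these one builds, in time $n^{1+o(1)}$, data structures (neighbourhood covers, type tables, rank-preserving orderings) that allow one to test a local formula at a given element in constant time and to enumerate with constant delay the elements satisfying a given local formula. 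Third, assemble by induction on quantifier rank: the scattered-witness sentences are evaluated by a greedy selection on an $r$-independent set extracted via quasi-wideness, the local parts by the data structures above, and the Boolean combination is handled by writing the answer set as a disjoint union through inclusion–exclusion, using the stability of $\DelayP$ (and of constant-delay enumeration) under union and polynomial-size deletion established in Theorem~\ref{th:stability}. The main obstacle on this side is handling the negated local conditions while keeping the delay constant — the naive treatment yields only polynomial delay — and pushing the preprocessing down to $n^{1+\epsilon}$ using only the weak sparsity parameters available for nowhere dense classes rather than the cleaner bounded-expansion machinery.
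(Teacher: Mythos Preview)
The paper does not contain a proof of this theorem: it is stated purely as a citation from Vigny's thesis, immediately after pointing to the constant-delay algorithm of Schweikardt--Segoufin--Vigny for nowhere-dense classes, and the reader is explicitly sent to \cite{vigny2018query} ``for more details on this line of work''. There is therefore nothing in the paper to compare your proposal against.

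That said, your outline matches the architecture of the argument in the cited sources. The algorithmic direction is precisely the Schweikardt--Segoufin--Vigny result, and your plan via Gaifman locality, neighbourhood covers, and the sparsity toolbox (uniform quasi-wideness, weak colouring numbers) is the correct skeleton. The hardness direction does go through the $r$-subdivision characterisation of subgraph-closed somewhere-dense classes and the $\AW[*]$-hardness of FO model checking on them. Your uniformity caveat is well taken and is the real content on that side: a per-query $\DelayP$ bound where the polynomial degree may depend on $|\varphi|$ only yields model checking in $\mathrm{XP}$, not $\FPT$; the dichotomy holds under the reading of $\DelayP$ here as FPT-delay (delay $f(|\varphi|)\cdot n^{O(1)}$ with the exponent independent of the query), which is exactly what the nowhere-dense algorithm delivers and what the lower bound excludes.
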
 

See~\cite{vigny2018query} for more details on this line of work. 
Note that there are lower bounds for the complexity of solving an FO query with regard to the query size (coming from the model checking problem). For instance, if $\FPT \neq \AW[*]$,
there is no doubly exponential algorithm to solve an FO query over a bounded degree structure~\cite{frick2004complexity}.

\paragraph{Large solutions}

If we consider the combined complexity of solving a query, that is the formula is part of the input and the complexity depends on its size,
then we must consider simpler queries (but with larger solutions).
The most studied queries are conjunctive queries (CQ): conjunction of relational atoms and equality atoms, existentially quantified.
Enumeration algorithms with linear preprocessing and constant delay are known~\cite{bagan2007acyclic,Bagan09,brault2013pertinence} for the
class of self-join \emph{free-connex} acyclic CQs. Modulo a strong complexity assumption, this class characterizes exactly the self-join free acyclic CQs with a constant delay algorithm. For a nice exposition of the algorithm for conjunctive queries and of the lower bound see~\cite{DBLP:journals/siglog/BerkholzGS20}

Large solutions also appear when the query is of fixed size but the free variables are second order.
Hence, we list sets of elements of the model, which are of the same size as the domain of the model.
Generalizing Courcelle's theorem, solving a monadic second order query over a tree, a graph or a matroid of bounded branch width can be done 
in $\SDelayP$~\cite{DBLP:conf/csl/Bagan06,DBLP:journals/dam/Courcelle09,strozecki2011monadic}. This result cannot be generalized further, since modulo $\ETH$ and some technical conditions, the tree-width of a class of structures must be polylogarithmic to obtain an \FPT model checking algorithm~\cite{kreutzer2009parameterised}.

With Arnaud Durand~\cite{durand2011enumeration}, we have studied  first order queries with free second order variables in an attempt to 
find classes of tractable queries with large solutions on general models. Note that the second order variables are not monadic, they can be of any fixed arity. 
Since in full generality such formulas may be very expressive, we consider fragments defined by the quantifier alternation of 
formulas inspired by results on counting problems~\cite{saluja1995descriptive}. 

\begin{example}
The formula $IS(T) = \forall x \forall y \ T(x) \wedge T(y) \Rightarrow \neg E(x,y)$ holds if and only 
if $T$ is an independent set.

A hypergraph $H$ is represented by the incidence structure $\langle D,\{V,E,R\}\rangle $ where $V(x)$ means that $x$ is a vertex,
$E(y)$ that $y$ is an hyperedge and $R(x,y)$ that $x$ is a vertex of the hyperedge $y$.
The formula $HS(T) = \forall x (T(x) \Rightarrow V(x)) \wedge \forall y \exists x  E(y) \Rightarrow (T(x) \wedge R(x,y))$ 
holds if and only if $T$ is a hitting set of $H$.
\end{example}

For any quantifier free formula, there is an algorithm with polynomial time preprocessing and constant delay. 
The proof uses a reduction to a constant size \textsc{CNF} formula, whose models can be enumerated in constant time and are then extended using a Gray code.
Note that we cannot obtain an algorithm with a polynomial preprocessing independent from the formula if $\FPT \neq \W[1]$, since we can express the class of $k$-cliques of a graph with a quantifier free formula.

For formulas with only existential quantifiers, the solutions are a non disjoint union of solutions of quantifier free formulas. 
Hence, we obtain a polynomial delay algorithm using the method of Theorem~\ref{th:stability}. Moreover, the problem is equivalent to the enumeration of models of a $k$-\textsc{DNF} formula, a problem we have recently shown to be in constant delay~\cite{capelli2020enumerating}. Hence, the complexity of solving quantifer free queries and existential queries is the same.

There is a formula with only universal quantifiers such that the enumeration problem defined by this formula is equivalent to $\enum{SAT}$, which is $\EnumP$-complete. Hence, we have obtained a complete characterization of the enumeration complexity of first order formulas with regard to their quantifier alternation.
There are interesting tractable fragments of first order formulas above existential formulas. They correspond to formulas which can be reduced to the enumeration of 
models of a $2$-\textsc{CNF} formula or a Horn formula. They contain the problem of generating (maximal) independent sets or dominating sets of a graph.

\subsection{Polynomial Interpolation}

In this section, we consider the interpolation of a multivariate polynomial, a classical problem from computer algebra, as an enumeration problem. The input polynomial $P$ is given as a black box: the number of its variables is $n$, and we can query the black box on $n$ values $v_1,\dots, v_n$,  which returns $P(v_1,\dots,v_n)$. The polynomial may be defined over $\mathbb{Z}$, $\mathbb{Q}$ or a finite field.
Interpolating $P$ means computing explicitly $P$ as a set of monomials, by using queries to compute the value of the polynomial on several points. When analyzing interpolation algorithms from the enumeration point of view, we are interested in their delay and incremental time. 

While any set of objects can be mapped to the monomials of a polynomial, to obtain good enumeration algorithms we are interested in polynomials with low degree and which can be evaluated in polynomial time.
Many interesting problems can be embedded in the monomials of easy to evaluate polynomials:

\begin{itemize}
	\item the cycle-covers of a graph are the monomials of the determinant of its adjacency matrix,  
	\item the spanning trees of a graph are the monomials of the determinant of its Kirchoff matrix by the Matrix-Tree theorem (see~\cite{jerrum2003counting})
	\item the spanning hypertrees of a $3$-uniform hypergraph are the monomials of the Pfaffian of its Laplacian matrix, as proved in~\cite{masbaum2002new}
	\item the probability of acceptation of $n$ letter words by a probabilistic automaton can be encoded into the monomials of a multilinear polynomial obtained as a product of matrices~\cite{kiefer2011language,peyronnet2012approximate}
\end{itemize}

To obtain a complexity depending on the number of monomials, interpolation algorithms for multivariate polynomials need to be randomized. When the complexity depends on the number of monomials, the interpolation is for \emph{sparse polynomials} by opposition to interpolation algorithms for \emph{dense polynomials}, with a complexity depending on the degree and the number of variables. This corresponds to the notions of input sensitive and output sensitive algorithms in enumeration. The first algorithms designed for sparse polynomials are in output polynomial time with a small probability to fail~\cite{ben1988deterministic,zippel1990interpolating}, while we describe here more recent incremental polynomial time or polynomial delay algorithms~\cite{klivans2001randomness,strozecki2010,strozecki2013enumerating}.  

The decision problem associated to interpolation is called \emph{polynomial identity testing} ($\PIT$): given a black box polynomial decide whether it is the zero polynomial.
To solve $\PIT$, we use a probabilistic algorithm, which has a good complexity and an error bound which can be made exponentially small in polynomial time. 

\begin{lemma}[Schwarz-Zippel \cite{zippel1979probabilistic}]
\label{proba}
Let $P$ be a non zero polynomial with $n$ variables of total degree $D$, let $\epsilon >0$, if  $x_{1}, \dots, x_{n}$ are randomly chosen values in a set of integers $S$ of size $\frac{D}{\epsilon}$ then the probability that $P(x_{1}, \dots, x_{n}) = 0$ is bounded by $\epsilon$.
\end{lemma}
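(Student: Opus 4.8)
The plan is to prove the sharper quantitative statement that, for any finite subset $S$ of an integral domain (taking $S \subseteq \mathbb{Z}$ here), a uniformly random point $(x_1,\dots,x_n)\in S^n$ satisfies $\Pr[P(x_1,\dots,x_n)=0]\le D/|S|$; the claimed bound then follows immediately by choosing $S$ with $|S|=\lceil D/\epsilon\rceil \ge D/\epsilon$. I would carry this out by induction on the number of variables $n$.

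For the base case $n=1$, a non-zero univariate polynomial of degree at most $D$ over an integral domain has at most $D$ roots, so at most $D$ of the $|S|$ admissible values of $x_1$ annihilate $P$, giving probability at most $D/|S|$.

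For the inductive step I would write $P$ as a polynomial in its last variable, $P=\sum_{i=0}^{d} x_n^{i}\,Q_i(x_1,\dots,x_{n-1})$, where $d\le D$ is chosen so that the leading coefficient $Q_d$ is not identically zero; the key observation is that then $\deg Q_d \le D-d$. Let $A$ be the event $Q_d(x_1,\dots,x_{n-1})=0$. By the induction hypothesis applied to the $(n-1)$-variable polynomial $Q_d$ of degree at most $D-d$, we get $\Pr[A]\le (D-d)/|S|$. Conditioned on $\neg A$, the polynomial $P$ viewed as a univariate polynomial in $x_n$ (with the first $n-1$ coordinates fixed) is non-zero of degree exactly $d$, hence has at most $d$ roots, so $\Pr[P=0\mid \neg A]\le d/|S|$. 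Combining through $\Pr[P=0]\le \Pr[A]+\Pr[P=0\mid \neg A]$ gives $\Pr[P=0]\le (D-d)/|S| + d/|S| = D/|S|$, which closes the induction.

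There is no real obstacle: the argument is a short and robust induction. The one point that needs care is the bookkeeping on degrees — when we peel off the top power of $x_n$, the surviving leading coefficient $Q_d$ has degree only $D-d$, not $D$, and it is exactly this drop that makes the two contributions add up to $D$ rather than $2D$. I would also make explicit that the sole structural fact being used is that a univariate polynomial of degree $D$ over an integral domain has at most $D$ roots, which is why the statement holds uniformly over $\mathbb{Z}$, $\mathbb{Q}$ and any sufficiently large finite field.
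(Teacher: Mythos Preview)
Your argument is correct: it is the standard induction on the number of variables, and the degree bookkeeping (that the leading coefficient $Q_d$ has total degree at most $D-d$) is exactly what makes the two contributions add up to $D/|S|$. The paper does not actually prove this lemma; it merely states it with a citation to Zippel, so there is nothing to compare against beyond noting that your proof is the classical one.
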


The algorithms presented in the following are based on solving $\PIT$ in randomized polynomial time, transformations of the considered polynomials by substitution of variables and interpolation of univariate polynomials in polynomial time.

\paragraph{Polynomial Delay Algorithm}

Let $e$ be a vector of dimension $n$ with positive integer coefficients. We denote by $\textbf{X}^{e}$
the term $\prod_{i=1}^{n} X_i^{e_i}$. A multivariate polynomial is the sum of monomials, that is disctinct terms multiplied by scalar coefficients.
The degree of a monomial $\lambda \textbf{X}^{e}$ is $\max_{1 \leq i \leq n} e_i$, the maximum of the degrees of its variables. The total degree of a monomial
is $\sum_{i=1}^{n} e_i$, the sum of the degrees of its variables. Let $d$ (respectively $D$) denote the degree (respectively the total degree) of the polynomial we consider,
that is the maximum of the degrees (respectively total degrees) of its monomials.
We assume for the first algorithm that the polynomial is multilinear, i.e. $d =1$ which implies that $D$ is bounded by $n$.

We consider the decision problem \textsc{Monomial-Factor}: an input is a term $\textbf{X}^{e}$,
that is a product of variables, and a polynomial $P$ given as a black box; the input is accepted if $\textbf{X}^{e}$ divides a monomial of $P$.  
This problem corresponds to the extension problem defined in Chapter~\ref{chap:time}.
Problem \textsc{Monomial-Factor} can be solved in randomized polynomial time for multilinear polynomials (see~\cite{strozecki2013enumerating}):
Substitute a single variable $Y$ to all variables in $\textbf{X}^{e}$, then choose a random assignment
of the other variables. The problem has a positive answer (with high probability) if and only if the degree of the obtained univariate polynomial in $Y$ is equal to the total degree of $\textbf{X}^{e}$.  

From the resolution of \textsc{Monomial-Factor}, we obtain a randomized polynomial delay interpolation using a flashlight search, with a few technicalities due to the randomization of the algorithm and the problem of computing the coefficient of each monomial.

\begin{theorem}[\cite{strozecki2013enumerating}]
Let $P$ be a multilinear black box polynomial with $n$ variables, and total degree $D$ and let $\epsilon >0$.
There is an algorithm which computes the set of monomials of $P$ with probability $1- \epsilon$, with a delay in time $O(nD^{2}(n + \log(\epsilon^{-1})))$ and $O(nD(n + \log(\epsilon^{-1})))$ black box queries on points of size  $O(\log(D))$.
\end{theorem}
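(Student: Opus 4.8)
The plan is to phrase the interpolation as a \emph{flashlight search} over the monomials of $P$, with the randomized algorithm for \textsc{Monomial-Factor} recalled just before the statement playing the role of the extension oracle, and then to track carefully how the randomness accumulates over the run. Since $P$ is multilinear, every monomial has the form $c_U\prod_{i\in U}X_i$ for a set $U\subseteq\{1,\dots,n\}$ and a nonzero scalar $c_U$, and there are at most $2^n$ of them; I enumerate the supports $U$ and recover each coefficient at the corresponding leaf. Fix the order $X_1,\dots,X_n$ and build a partial solution by deciding, for $i=1,\dots,n$ in turn, whether $i\in U$. A node of the search tree is then a pair $(A,B)$ of disjoint sets with $A\cup B=\{1,\dots,k-1\}$, meaning ``the variables of $A$ are forced into the monomial, those of $B$ are forbidden''.

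To run the flashlight I must decide whether such a node is nonempty, i.e.\ whether $P$ has a monomial whose support contains $A$ and is disjoint from $B$. Substituting $0$ for every variable of $B$ (a trivial black-box operation) yields a multilinear polynomial $P_B$ of total degree at most $D$ whose monomials are exactly those of $P$ supported outside $B$, and the node $(A,B)$ is nonempty precisely when $\prod_{i\in A}X_i$ divides a monomial of $P_B$ — that is, \textsc{Monomial-Factor} on $P_B$. By the recalled reduction this is decided by putting a fresh variable $Y$ in place of all $X_i$ with $i\in A$, choosing random values for the remaining live variables, interpolating the resulting univariate polynomial $p(Y)$ of degree at most $|A|\le D$ (at most $D+1$ black-box queries), and answering ``nonempty'' iff $\deg p=|A|$. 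The crucial feature for the error analysis is that, by Lemma~\ref{proba} applied with an evaluation set of size $2D$ (hence coordinates of bit-size $O(\log D)$), this test has \emph{one-sided} error at most $1/2$: it never reports ``nonempty'' wrongly, and reports ``empty'' wrongly with probability at most $1/2$. I therefore amplify by running $t:=\Theta(n+\log\epsilon^{-1})$ independent trials and answering ``nonempty'' iff at least one trial does; the error, still one-sided, drops to $2^{-t}$.

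Feeding this amplified oracle to the flashlight / binary-partition scheme of Section~\ref{sec:delay} produces each support $U$ exactly once. At a leaf, where all $n$ variables are fixed, the test performed at its parent already certifies (with the controlled failure probability) that $\prod_{i\in U}X_i$ is a monomial of $P$; its coefficient is then obtained \emph{deterministically}: substitute $Y$ for every $X_i$ with $i\in U$ and $0$ for the remaining variables, interpolate the resulting univariate polynomial of degree $|U|\le D$, and read off the coefficient of $Y^{|U|}$, which equals $c_U$ because $U$ is the only subset of itself of cardinality $|U|$. Between two consecutive outputs the search backtracks and descends through $O(n)$ nodes, doing $O(1)$ oracle calls at each; one amplified oracle call costs $O(t)$ trials, each using $O(D)$ queries and $O(D^2)$ arithmetic operations (univariate interpolation), so the delay is $O(n)\cdot O(tD^2)=O\bigl(nD^2(n+\log\epsilon^{-1})\bigr)$ time and $O\bigl(nD(n+\log\epsilon^{-1})\bigr)$ queries, all on points of bit-size $O(\log D)$; the single extra interpolation needed for a coefficient is absorbed in this bound. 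For global correctness, the whole run makes $O(nM)$ oracle calls with $M\le 2^n$ the number of monomials, each failing with probability at most $2^{-t}$, so a union bound gives total failure probability at most $n2^n2^{-t}$, which is below $\epsilon$ once $t=O(n+\log\epsilon^{-1})$ is chosen suitably.

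The main obstacle is precisely this interaction of randomness with the enumeration guarantee: the algorithm must neither repeat nor miss a monomial, yet it does not know $M$ in advance, so the amplification must be one-sided and strong enough that a union bound over up to $2^n$ oracle calls survives — and it must be achieved by repeating trials rather than by enlarging the evaluation set, which is what keeps the queried points of size $O(\log D)$ rather than $O(\log D+n+\log\epsilon^{-1})$. The coefficient extraction and the standard delay accounting of binary partition are routine by comparison.
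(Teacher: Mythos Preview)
Your proposal is correct and follows essentially the same route the paper sketches: a flashlight search where the extension problem is \textsc{Monomial-Factor}, solved by the substitute-$Y$-and-interpolate trick, with the ``technicalities due to the randomization of the algorithm and the problem of computing the coefficient'' handled respectively by one-sided amplification plus a union bound over the $O(nM)\le O(n2^n)$ queries of a correct run, and by the deterministic univariate interpolation at each leaf. The delay and query counts you derive match the statement, and your observation that the error is one-sided (so empty subtrees are never explored) is exactly what makes the union bound clean.
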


The support of a monomial is the set of variables which have degree one or more 
in the monomial. If the set of supports of monomials of a polynomial is such that,
no support is included into another, then \textsc{Mon-Factor} can be solved in polynomial time 
by substituting zero to all variables not in $\textbf{X}^{e}$ and solving $\PIT$ on the obtained polynomial. We can then use the same flashlight search as for multilinear polynomial.  

\begin{openproblem}
Is it possible to solve \textsc{Mon-Factor} for other classes of polynomials, for instance 
polynomials given by read twice arithmetic formulas or depth three arithmetic circuits? Is it possible to design polynomial delay algorithms based on reverse search rather than on flashlight search?
\end{openproblem}

\paragraph{Polynomial Incremental Time Algorithm}

In the context of polynomial interpolation, the problem of finding another solution 
is equivalent to the problem of finding a single monomial. Indeed, given a polynomial $P$ by a black box and an explicit subset $S$ of its monomials, it is easy to simulate queries to $\displaystyle{P - \sum_{M \in S} M}$. 
Then, it is easy to grow $S$, until $\displaystyle{P - \sum_{M \in S} M} = 0$, which can be decided by solving $\PIT$.

We can relax the condition of supports of monomials not included into another one presented in the previous section. Let us consider polynomials such that all the supports of the monomials are different, we say that
these polynomials have \emph{distinct supports}. In this case, a simple greedy algorithm produces a monomial in polynomial time: from the set of all variables, remove a variable while the polynomial restricted to this set of variables is non zero.  

For polynomials with disjoint supports, all monomials can be easily isolated, hence finding a monomial is easy.
For general polynomials, there is a clever transformation of a multivariate polynomial
into a univariate polynomial, such that the monomial of lowest degree in the univariate polynomial comes from
a single monomial of the multivariate polynomial (see~\cite{klivans2001randomness}). The proof relies on a generalized isolation lemma, where all variables are mapped to a different random power of a single variable.

\begin{theorem}[Adapted from Theorem $12$ of~\cite{klivans2001randomness}]
  There is a randomized polynomial time algorithm which given a black box access to a non zero polynomial $P$ with $n$ variables and total degree $D$ and $\epsilon > 0$ returns a monomial of $P$ with probability $1-\epsilon$ in a time polynomial in $n$, $D$ and with $O(n^{6}D^{4})$ calls to the black box on integers of size $\log(\frac{nD}{\epsilon})$.
\end{theorem}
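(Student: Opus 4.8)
The plan is to assemble the statement from two subroutines: first, the Klivans--Spielman transformation that collapses $P$ to a univariate polynomial in which one monomial is \emph{isolated}; second, a coordinate-by-coordinate recovery of that monomial's exponent vector and coefficient. Write $n$ for the number of variables; since $\deg P\le D$, each variable occurs in $P$ with degree at most $D$. For the isolation step, I would draw integer weights $r_1,\dots,r_n$ (from a range of size $\mathrm{poly}(nD/\epsilon)$, hence of bit-length $O(\log(nD/\epsilon))$, following the randomness-efficient construction of~\cite{klivans2001randomness}) and substitute $X_i\mapsto Y^{r_i}$, so that a monomial $\lambda\mathbf{X}^{e}$ of $P$ is sent to $\lambda Y^{\langle e,r\rangle}$. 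The generalized isolation lemma, applied to the set of exponent vectors of monomials of $P$ viewed as multisets over $[n]$ with multiplicities at most $D$, ensures that with probability at least $1-\epsilon$ the linear form $\langle\cdot,r\rangle$ is minimized by a \emph{unique} exponent vector $e^{\star}$; crucially, its failure probability depends only on $n$, $D$ and the weight range, not on the number of monomials of $P$, which is what lets the argument survive the (possibly exponential) sparsity of $P$. In the construction of~\cite{klivans2001randomness} the weights are additionally reduced modulo a small random prime $p=\mathrm{poly}(nD)$, so the resulting univariate polynomial $\tilde P(Y)=P(Y^{r_1},\dots,Y^{r_n})$ has degree $\mathrm{poly}(n,D)$ independent of $\epsilon$.

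Granting a successful isolation, I would recover $\tilde P$ completely by $\mathrm{poly}(n,D)$ black box evaluations and ordinary univariate interpolation (passing to a large enough field extension if $P$ is defined over a small finite field), and read off its lowest-degree term $c^{\star}Y^{a^{\star}}$ with $c^{\star}\ne 0$ and $a^{\star}=\langle e^{\star},r\rangle$; by uniqueness of the minimizer no cancellation occurs at the minimal degree, so $c^{\star}$ is exactly the coefficient of $\mathbf{X}^{e^{\star}}$ in $P$. To recover each entry $e^{\star}_i$, I would introduce a fresh indeterminate $Z$ and substitute $X_i\mapsto ZY^{r_i}$ while keeping $X_j\mapsto Y^{r_j}$ for $j\ne i$; the coefficient of $Y^{a^{\star}}$ in the resulting bivariate polynomial equals $c^{\star}Z^{e^{\star}_i}$, again a single term by uniqueness, so evaluating at $D+1$ distinct values $\zeta_0,\dots,\zeta_D$ of $Z$ and interpolating in $Y$ yields numbers $\gamma_\ell=c^{\star}\zeta_\ell^{e^{\star}_i}$ from which $e^{\star}_i$ is determined. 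Iterating over $i$ produces $e^{\star}$, and the algorithm outputs the monomial $c^{\star}\mathbf{X}^{e^{\star}}$. Each coordinate costs $\mathrm{poly}(n,D)$ queries on integers of size $O(\log(nD/\epsilon))$, and the sharper count $O(n^{6}D^{4})$ comes from plugging in the exact parameters of the Klivans--Spielman weight-and-prime construction.

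I expect the whole difficulty to be concentrated in the isolation step: one must state and apply the isolation lemma for exponent \emph{multisets} rather than sets, and --- this is the genuinely delicate part --- make it simultaneously randomness-efficient and magnitude-efficient, i.e. generate the weights $r_i$ from integers of size only $\mathrm{poly}(nD/\epsilon)$ while keeping the failure probability below $\epsilon$ \emph{and} keeping the degree of the univariate reduction polynomial in $n$ and $D$ alone; reusing the construction of~\cite{klivans2001randomness} is the cleanest route to this, and is why the theorem is phrased as an adaptation of it. A secondary, easily handled point is robustness: if isolation fails, the quantity extracted as ``$c^{\star}Z^{e^{\star}_i}$'' need not be a pure monomial, and this is detectable, so as an alternative one may run the plain (constant-success-probability) isolation lemma $O(\log(1/\epsilon))$ times and keep the first run passing the check, trading the stronger seed for a $\log(1/\epsilon)$ factor in the number of queries. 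The remaining ingredients --- the univariate interpolations, the move to a field extension, and solving $\gamma_\ell=c^{\star}\zeta_\ell^{e^{\star}_i}$ for $e^{\star}_i\in\{0,\dots,D\}$ --- are routine.
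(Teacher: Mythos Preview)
Your proposal is correct and follows exactly the approach the paper sketches: the paper does not give a full proof but simply records that the construction ``relies on a generalized isolation lemma, where all variables are mapped to a different random power of a single variable'' so that ``the monomial of lowest degree in the univariate polynomial comes from a single monomial of the multivariate polynomial,'' citing Klivans--Spielman for the details. Your plan fleshes out precisely this outline, including the coordinate-by-coordinate recovery of the exponent vector, and correctly identifies the isolation step as the only nontrivial ingredient.
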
 

There is another method for the case of polynomials of small degree. Using a generalization of the algorithm
solving \textsc{Mon-Factor} over multilinear polynomials, we first find a maximal power of a linear factor of a monomial. Then, we would like to do the division by this factor to recursively find a monomial in the quotient. It turns out that this quotient can be evaluated on values using substitution and univariate interpolation, which is enough for our algorithm. We obtain an algorihm to find a single monomial, which is better than the general algorithm when $d<10$.

\begin{theorem}[\cite{strozecki2013enumerating}]
Let $P$ be a polynomial with $n$ variables of degree $d$ and of total degree $D$ and let $\epsilon > 0$.
 There is an algorithm which computes a monomial of $P$, with probability $1-\epsilon$.
 It performs $O(nD^{d})$ calls to the oracle on points of size $\log(\frac{n^{2}D}{\epsilon})$.
\end{theorem}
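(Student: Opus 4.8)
The plan is to construct a single monomial of $P$ greedily, one variable at a time, keeping at each stage a black box for a suitable ``quotient'' polynomial in the remaining variables. Write $P=\sum_{j=0}^{d} X_1^{j}A_j(X_2,\dots,X_n)$ and let $e_1:=\max\{j:A_j\neq 0\}$, the largest power of the linear factor $X_1$ occurring in $P$ (this is the generalization of the \textsc{Mon-Factor} routine to degree $d$). First I would compute $e_1$: draw a point $(a_2,\dots,a_n)$ at random from a set $S$ of integers, evaluate the univariate polynomial $g(X_1):=P(X_1,a_2,\dots,a_n)$ at $d+1$ distinct values and interpolate it; by Lemma~\ref{proba} the indices $j$ for which $X_1^{j}$ has a nonzero coefficient in $g$ are, with high probability, exactly the $j$ with $A_j\neq 0$, so $e_1=\deg_{X_1}(g)$ and the leading coefficient of $g$ is $A_{e_1}(a_2,\dots,a_n)$. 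We record $X_1^{e_1}$ as the first factor of the monomial being built and continue on $A_{e_1}(X_2,\dots,X_n)$, which is a nonzero polynomial of per-variable degree $\le d$ and total degree $\le D-e_1$. The key point, already flagged in the surrounding text, is that $A_{e_1}$ is still available \emph{as a black box of the same form}: $A_{e_1}(v_2,\dots,v_n)$ is the leading coefficient of the degree-$\le d$ univariate polynomial $P(X_1,v_2,\dots,v_n)$, hence is obtained by $d+1$ calls to $P$ and one univariate interpolation.

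Iterating, after $i$ steps we hold a partial monomial $X_1^{e_1}\cdots X_i^{e_i}$ and a black box for an $i$-fold iterated quotient $P^{(i)}(X_{i+1},\dots,X_n)$, nonzero, of per-variable degree $\le d$ and total degree $\le D-(e_1+\dots+e_i)$; applying the step above to $P^{(i)}$ in the variable $X_{i+1}$ yields $e_{i+1}$. After $n$ steps we output $X_1^{e_1}\cdots X_n^{e_n}$ together with the surviving constant as its coefficient. This is genuinely a monomial of $P$ because, by construction, its coefficient in $P$ equals the coefficient of $X_2^{e_2}\cdots X_n^{e_n}$ in $A_{e_1}$, which equals the coefficient of $X_3^{e_3}\cdots X_n^{e_n}$ in $P^{(2)}$, \dots, i.e.\ the nonzero constant reached at the end; it is in fact the lexicographically largest monomial of $P$ under the ordering that maximizes $\deg_{X_1}$ first, then $\deg_{X_2}$, and so on.

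For correctness I would fix $S$ of size $\lceil n^2 D/\epsilon\rceil$, so the evaluation points have size $O(\log(n^2D/\epsilon))$ as claimed. The only randomness used in step $i$ is the random restriction of $P^{(i)}$ to a point; every quantity we read off is a polynomial of total degree $\le D$ evaluated there, so by Lemma~\ref{proba} step $i$ errs with probability $\le D/|S|\le \epsilon/n^2$, and a union bound over the $n$ steps (and the constantly many comparisons inside each step) bounds the total failure probability by $\epsilon$. One must be slightly careful that within a step the same random point is reused across the $d+1$ interpolation evaluations, so that the ``bad event'' is the simultaneous vanishing of a \emph{fixed} finite family of polynomials at one random point, which is exactly the situation covered by Lemma~\ref{proba}.

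The step I expect to be the real work is the complexity bound: $O(nD^{d})$ black-box calls and $\mathrm{poly}(n,D)$ arithmetic operations. Read naively, the recursion says ``one evaluation of $P^{(i)}$ costs $d+1$ evaluations of $P^{(i-1)}$'', which would make an evaluation of $P^{(n)}$ cost $(d+1)^{n}$ calls; so the algorithm cannot literally nest the interpolations. Instead, when passing to $X_{i+1}$ one collapses all of $X_1,\dots,X_i$ by a single substitution into $P$ and performs only \emph{one} round of univariate interpolation per coordinate that still carries degree, exploiting that each variable has degree at most $d$ in $P$ (this is where the exponent $d$, rather than $n$, enters) and that the running total degree stays $\le D$ (which bounds the number of interpolation points that can ever be needed by a function of $D$). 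Making this collapsed substitution simultaneously (i) faithful --- it must isolate the intended iterated coefficient and not some uncontrolled sum of coefficients --- and (ii) cheap, at cost $O(D^{d})$ calls, is the crux of the argument; granting it, summing the per-variable cost over the $n$ variables yields the stated $O(nD^{d})$, and the point-size and arithmetic bookkeeping are then routine.
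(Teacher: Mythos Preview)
Your plan matches the paper's description: find the maximal power of a variable that divides a monomial (the generalized \textsc{Mon-Factor} step), pass to the quotient by taking that leading coefficient, observe that the quotient is still available as a black box via univariate interpolation, and recurse. Your correctness argument (the output is the lex-leading monomial, with coefficient the surviving constant) and your probability bookkeeping via Lemma~\ref{proba} are fine and more explicit than what the surrounding text gives.

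Where your write-up stops short is exactly where you say it does. The naive nested interpolation costs $\prod_{j:e_j>0}(e_j+1)$ calls to $P$ for one evaluation of $P^{(i)}$, and under $\sum e_j\le D$, $e_j\le d$ this product can be of order $(d+1)^{D/d}$ or even $2^{D}$ when $d=1$; it is \emph{not} $O(D^{d})$. You correctly diagnose that the recursion cannot be literally nested and that some ``collapsed substitution'' is required, but you then explicitly grant the $O(D^{d})$ cost without supplying it. That is precisely the technical content of the theorem: for $d=1$ the collapse is the multilinear trick (send all selected $X_j$ to a single $Y$ and read the top coefficient of a degree-$\le D$ univariate), and the general-$d$ analogue needs a genuinely new idea that keeps the interpolation dimension bounded by $d$ rather than by the number of variables carrying degree. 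Your phrase ``one round of univariate interpolation per coordinate that still carries degree'' is actually the nested scheme again and does not give the exponent $d$. The thesis paragraph you are reconstructing is itself only a sketch deferring to the cited article, so your outline is faithful to what is written here; but as a proof, the step you flag as the crux is missing, not merely routine.
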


\begin{openproblem}
All incremental polynomial time algorithms rely on the evaluation of all the monomials found to produce a new one.
Is there a class of polynomials for which this set can be stored concisely and evaluated efficiently? Is there an alternative algorithm using 
only polynomial space? 
\end{openproblem}

\paragraph{Lower bounds}

For any $t$ evaluation points, one may build a polynomial with $t$ monomials which vanishes at every point (see \cite{zippel1990interpolating}). 
Therefore, if we do not have any a priori bound on $t$, we must evaluate the polynomial on at least $(d+1)^{n}$ $n$-tuples of integers to decide $\PIT$ with a deterministic algorithm. Hence, it is not possible to obtain a deterministic interpolation algorithm in output polynomial time for a polynomial given as a black box. 
Even for an explicit representation of a polynomial, such as arithmetic circuits, finding an algorithm in $\P$ for $\PIT$
would be a major achievement with many complexity implications, see e.g.~\cite{kabanets2004derandomizing}. 

For multilinear polynomials, it is possible to prove that $\PIT$ is equivalent to \textsc{Mon-Factor}~\cite{strozecki2013enumerating}, so the fact which yields a randomized polynomial delay algorithm is also the obstacle to an efficient \emph{deterministic enumeration algorithm}.

To rule out the possibility of doing a flashlight search on a class of polynomials, it is enough to prove
a hardness result for \textsc{Mon-Factor}. By encoding a $2$-\textsc{CNF} formula into an arithmetic circuit, which is then homogenized, we prove that \textsc{Mon-Factor} can simulate the problem of finding an assignment of a $2$-\textsc{CNF} formula of minimal Hamming weight. This proves the $\NP$-hardness of \textsc{Mon-Factor} over a very restricted class of polynomials~\cite{strozecki2013enumerating}: degree two polynomials represented by a depth four arithmetic circuit.

\chapter{Solving a Practical Enumeration Problem}
\label{chap:practical}
\minitoc

Most concrete enumeration problems are too complex to be solved by an algorithm with a theoretical guarantee on its delay.
In that case, input sensitive algorithms may be efficient in practice, more than a polynomial delay algorithm, see the example of maximal cliques~\cite{cazals2008note}. Instead of trying to obtain the best asymptotic worst case delay and fail, it becomes worthwile to optimize for the typical size and typical input, while relaxing the constraints on exhaustivity and redundancy of solutions.
In this chapter we summarize the results we obtained in~\cite{barth2015efficient}:  we present an enumeration problem from cheminformatic, and an algorithm designed to solve it on small instances rather than obtaining a good complexity. We show that methods from enumeration can be combined to build an algorithm which performs well in practice, and we try to underline how the design of a practical algorithm differs from the design of an algorithm with the best possible delay.

\section{Model}

We study the problem of \textbf{generating all colored planar maps up to isomorphism} representing possible
 molecules obtained from a set of elementary starting motifs. A \emph{map} is a graph equipped with, for each vertex, a cyclic order on its neighbors.
 We consider maps, to retain some geometric properties of the molecule we model. We want to design molecular cages which can be embbeded on a sphere and are thus planar.
When the map is planar, the orders of each neighborood are given by a planar embedding of the graph. The two main difficulties of the problem are the sheer number of solutions and the generation of duplicates because we are interested by maps up to isomorphism.

The basic chemical elements are modeled by maps called \emph{motifs} (see Figure~\ref{fig::motif}), with a single \emph{center} vertex which represents the element
and several neighbors which represent its chemical connections, which are labeled by an element of $\alphabet$. The colors are partitionned into
positive and negative colors: each positive color $a$ in $\alphabet$ has a unique complementary negative color denoted by $\overline{a}$. Each color represents a different kind of reacting center,
which can only be paired with a reacting center of complementary color.
Then the motifs are assembled to form a \emph{map of motifs}, as shown in Figure~\ref{fig::mapofmotifs}, by adding an edge between two vertices from different motifs with complementary colors. 
A vertex which is connected only to its center vertex is said to be \emph{free}. We are interested in \emph{saturated} maps of motifs, that is maps with no free vertex, because they may represent real molecules. 
Note that, in this model, all geometric informations on the molecule represented by a motif are lost, but the order of neighbors of a vertex. Hence a map of motifs does not always represent an existing molecule.

\begin{figure}
\centering  
  \begin{tikzpicture}[scale=0.74]
    \node[centre]   (C1) at (-1, 0) {$\mot{Y}$};
    \node[label] (A1) at ( 0,1) {$\mathbf{\overline{a}}$};
    \node[label] (A2) at ( -1, -1) {$\mathbf{\overline{a}}$};
    \node[label] (A3) at ( -2, 1) {$\mathbf{\overline{a}}$};
    \draw (C1) -- (A1);
    \draw (C1) -- (A2);
    \draw (C1) -- (A3);
    \draw[->] (-0.3,0.5) to[bend left=45] (-0.7,-0.5);
     \node at (0.2,0) {\Next};
    \node[centre]   (C2) at (2,0) {$\mot{I}$};
    \node[label] (B1) at (2,1) {$\mathbf{a}$};
    \node[label] (B2) at (2,-1) {$\mathbf{a}$};
    \draw (C2) -- (B1);
    \draw (C2) -- (B2);
    \node[centre]   (C3) at (5,0) {$\mot{X}$};
    \node[label] (D1) at (4,1) {$\mathbf{a}$};
    \node[label] (D2) at (6,1) {$\mathbf{a}$};
    \node[label] (D3) at (4,-1) {$\mathbf{a}$};
    \node[label] (D4) at (6,-1) {$\mathbf{a}$};
    \draw (C3) -- (D1);
    \draw (C3) -- (D2);
    \draw (C3) -- (D3);
    \draw (C3) -- (D4);
   \node[centre]   (C4) at (9, 0) {$\mot{V}$};
    \node[label] (E1) at ( 8,1) {$\mathbf{b}$};
    \node[label] (E2) at ( 10, 1) {$\mathbf{b}$};
    \node[label] (E3) at ( 9, -1) {$\mathbf{a}$};
    \draw (C4) -- (E1);
    \draw (C4) -- (E2);
    \draw (C4) -- (E3);

    \node[centre]   (C5) at (12,0) {$\mot{J}$};
    \node[label] (F1) at (12,1) {$\mathbf{a}$};
    \node[label] (F2) at (12,-1) {$\mathbf{\overline{b}}$};
    \draw (C5) -- (F1);
    \draw (C5) -- (F2);
  \end{tikzpicture}
\caption{Example of motifs on 
  $\alphabet_M=\{\mot{Y},\mot{I},\mot{X},\mot{V},\mot{J}\}$ and
  $\alphabet=\{a,\overline{a},b,\overline{b}\}$.}
\label{fig::motif}
\end{figure}

\begin{figure}
  \begin{tikzpicture}[scale=0.74]
    \node[centre]   (C1) at (-1, 0) {$\mot{Y}$};
    \node[label] (A1) at ( 0,-1) {$\mathbf{\overline{a}}$};
    \node[label] (A2) at ( 0, 0) {$\mathbf{\overline{a}}$};
    \node[labelsat] (A3) at ( 0, 1) {$\mathbf{\overline{a}}$};
    \draw (C1) -- (A1);
    \draw (C1) -- (A2);
    \draw (C1) -- (A3);

    \node[centre]   (C2) at (2,1) {$\mot{I}$};
    \node[labelsat] (B1) at (1,1) {$\mathbf{a}$};
    \node[labelsat] (B2) at (3,1) {$\mathbf{a}$};
    \draw (C2) -- (B1);
    \draw (C2) -- (B2);



    \node[centre] (C5) at (5, 0) {$\mot{Y}$};
    \node[label] (F1) at (4,-1) {$\mathbf{\overline{a}}$};
    \node[label] (F2) at (4, 0) {$\mathbf{\overline{a}}$};
    \node[labelsat] (F3) at (4, 1) {$\mathbf{\overline{a}}$};
    \draw (C5) -- (F1);
    \draw (C5) -- (F2);
    \draw (C5) -- (F3);

        \draw (A3) -- (B1);
        \draw (F3) -- (B2);
  \end{tikzpicture}\hfill
  \begin{tikzpicture}[scale=0.74]
    \node[centre]   (C1) at (-1, 0) {$\mot{Y}$};
    \node[labelsat] (A1) at ( 0,-1) {$\mathbf{\overline{a}}$};
    \node[labelsat] (A2) at ( 0, 0) {$\mathbf{\overline{a}}$};
    \node[labelsat] (A3) at ( 0, 1) {$\mathbf{\overline{a}}$};
    \draw (C1) -- (A1);
    \draw (C1) -- (A2);
    \draw (C1) -- (A3);

    \node[centre]   (C2) at (2,1) {$\mot{I}$};
    \node[labelsat] (B1) at (1,1) {$\mathbf{a}$};
    \node[labelsat] (B2) at (3,1) {$\mathbf{a}$};
    \draw (C2) -- (B1);
    \draw (C2) -- (B2);

    \node[centre]   (C3) at (2,0) {$\mot{I}$};
    \node[labelsat] (D1) at (1,0) {$\mathbf{a}$};
    \node[labelsat] (D2) at (3,0) {$\mathbf{a}$};
    \draw (C3) -- (D1);
    \draw (C3) -- (D2);

    \node[centre]   (C4) at (2,-1) {$\mot{I}$};
    \node[labelsat] (E1) at (1,-1) {$\mathbf{a}$};
    \node[labelsat] (E2) at (3,-1) {$\mathbf{a}$};
    \draw (C4) -- (E1);
    \draw (C4) -- (E2);

    \node[centre]   (C5) at (5, 0) {$\mot{Y}$};
    \node[labelsat] (F1) at (4,-1) {$\mathbf{\overline{a}}$};
    \node[labelsat] (F2) at (4, 0) {$\mathbf{\overline{a}}$};
    \node[labelsat] (F3) at (4, 1) {$\mathbf{\overline{a}}$};
    \draw (C5) -- (F1);
    \draw (C5) -- (F2);
    \draw (C5) -- (F3);

        \draw (A3) -- (B1);
        \draw (A2) -- (D1);
        \draw (A1) -- (E1);
        \draw (F3) -- (B2);
        \draw (F2) -- (D2);
        \draw (F1) -- (E2);
  \end{tikzpicture}
\caption{Example of two maps of motifs based on $\setofmotifs
  = \{\mot{Y},\mot{I}\}$, the right one is saturated.}
\label{fig::mapofmotifs}
\end{figure}

\section{Heuristic Algorithm}

To solve such a problem, we decompose it into several simpler steps. Our algorithm works as follows:

\begin{itemize}
\item Generate very simple maps of motifs with efficient algorithms. Since all graphs have a spanning tree,
we generate all colored ranked trees built from a given set of motifs. To do that, classical CAT algorithms to generate all rooted ranked trees
can be adapted so that the produced trees respect all degrees and color constraints. This could be further improved by generating unrooted trees, by adapting CAT algorithms for this problem such as~\cite{li1999advantages}, but it is less clear how to do so.
\item From a tree, with many free vertices, we need to find all possible ways to pair them so that the obtained map is planar and saturated.
We call the pairing operation a \emph{fold}. It is easy to see that the order in which vertices are paired is irrelevant. Using a dynamic program, the possibility for a fold to contain a given pairing is computed in quadratic time in the number of vertices. Then, using this information, a flashlight search gives all solutions with linear delay.
\item Detect whether the generated map is isomorphic to one generated before, by comparing its signature to the signatures of the previous maps stored in a set data structure (hash map or AVL in our implementation). If the map is new, compute several indices useful for the chemist. We consider the minimum sparsity, to represent structural soundness of the molecule, which can be computed in polynomial time on planar graphs~\cite{park1993finding}.
The equivalence classes of the vertices help to understand the symmetries and the roundness of the molecule. We are also interested in the sizes of the faces of the map, which can be used as entrance inside the molecule when used as a cage.
\end{itemize}

The problem with this three steps approach is that many useless intermediate results are generated, such as trees which cannot be folded into a molecular map. 
Moreover, the same map has many different spanning trees and is thus generated by folding each of these different trees obtained at step one. While the three steps have very efficient algorithms, with strong theoretical guarantees on their complexity, their combination solves our problem, but \emph{has no bound on its delay nor its total time}.

One solution to this problem is to avoid to generate trees which cannot be folded.
A necessary condition for a map of motifs to be foldable into a saturated map is to have as many free vertices of color $a \in \alphabet$
as vertices of color $\bar{a}$. Such a map is called an \emph{almost foldable map}.
To a map of motifs $M$, we associate the statistic vector $C_M$, which counts the number of vertices of each type (negative colors are counted negatively).
An almost foldable map has an all zero vector of statistics.
It is possible to compute by dynamic programming a predicate $A(C_T,n)$, which is true if and only if there is a tree $T$
of size $n$ and statistic vector $C_T$. Using this predicate, the algorithm generating all trees can be turned into an algorithm of the same complexity generating all
almost foldable trees. This makes the algorithm run faster by several orders of magnitude, since the cost of computing $A$, while large, is done once in the preprocessing.

The second problem is that we generate graphs with bad indices which may not be interesting for the chemist, especially because the maps generated are in too large number
to be inspected manually. The trouble here is that indices of interest may change (and have changed during our work) depending on the application, the kind of instances or even our comprehension of the problem.
Moreover, there is no clear cut off $\lambda$ such that we can say, we want to generate all maps with a minimum sparsity larger than $\lambda$. 
We were not able to solve this problem, but a good approach, described in Chapter~\ref{chap:horizons}, is to enumerate only a subset of the solutions, which covers as best as possible
the set of all solutions.

Our work leaves many theoretical questions open. 

\begin{openproblem}
Can we prove that our algorithm is efficient when seen as an output sensitive algorithm, i.e. is its total time bounded by a polynomial (or even linear) in the maximal number of generated solutions.

Is the problem in $\OutputP$? Representation of planar maps by trees~\cite{DBLP:journals/tcs/AleardiDS08} may be helpful to obtain a better algorithm.
\end{openproblem}

\section{Implementing a Tool}

The aim of this work was to produce a usable tool for chemists, 
efficient enough to generate molecular maps with ten to twenty motifs.
The source code is available on \url{https://yann-strozecki.github.io/textesmaths.html} and results are available on \url{http://kekule.prism.uvsq.fr/}. Figure~\ref{fig:kekule} shows how these results are presented to the user. These results have been used by our colleague Olivier David, to synthesize several new molecules, as described in~\cite{david2020chimie},
see Figure~\ref{fig:transformation} for the interpretation of a molecular map as a real molecule.

\begin{figure}

\includegraphics[scale=0.4]{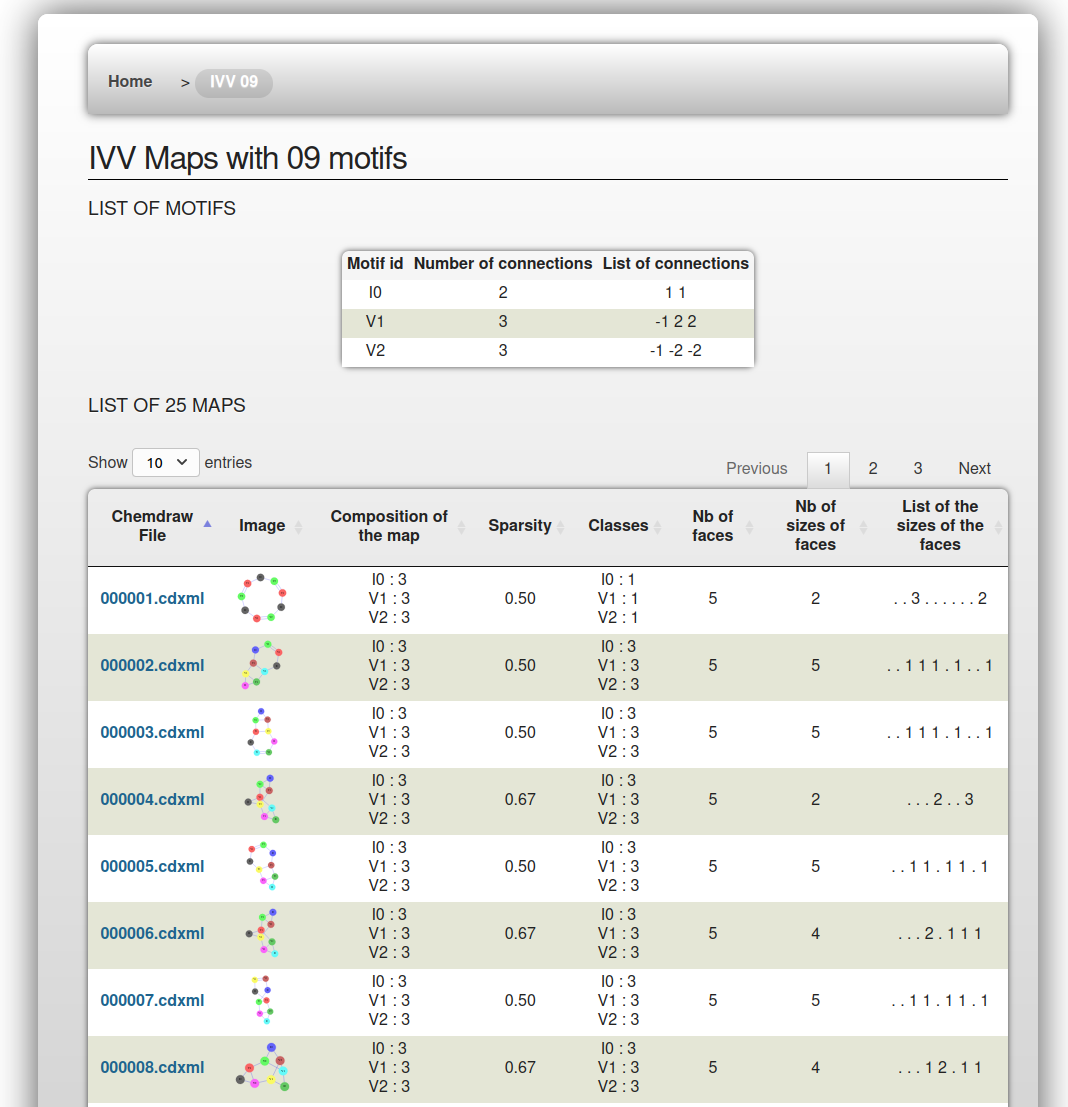}
\caption{Presentation of the results for a set of three motifs and a target size of $9$.}
\label{fig:kekule}
\end{figure}

\begin{figure}
 \centering
 \begin{tikzpicture}[scale=0.70]
   \node[centre]   (Y1) at (0, 0) {$\mot{Y}$};
   \node[centre]   (V11) at (1.5,0) {$\mot{V1}$};
   \node[centre]   (V21) at (3,0) {$\mot{V2}$};
   \node[centre]   (Y2) at (4.5,0) {$\mot{Y}$};
   \node[centre]   (V12) at (1, 1.5) {$\mot{V1}$};
   \node[centre]   (V13) at (1, -1.5) {$\mot{V1}$};
   \node[centre]   (V22) at (3.5, 1.5) {$\mot{V2}$};
   \node[centre]   (V23) at (3.5, -1.5) {$\mot{V2}$};
    \draw (Y1) -- (V11);
    \draw (V11) -- (V21);
    \draw (V21) -- (Y2);
    \draw (Y1) -- (V12);
    \draw (Y1) -- (V13);
    \draw (Y2) -- (V22);
    \draw (Y2) -- (V23);
    \draw (V22) -- (V12);
    \draw (V23) -- (V13);
    \draw (V12) -- (V21);
    \draw (V11) -- (V23);
    \draw (V22) to[out= 120 ,in= 60] (-0.5,1.75) to[out= 240 ,in= 180]  (V13);
  \end{tikzpicture}
  \hspace{2cm}
\includegraphics[scale=0.4]{./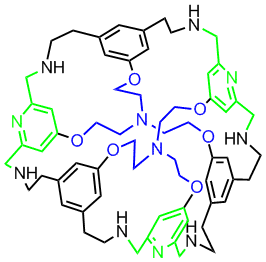}
 \caption{A molecule derived from a map generated by our algorithm.}
 \label{fig:transformation}
\end{figure}

To make our tool useful, it has been quite heavily optimized and tuned. Indeed, generating all molecular maps of size $12$ from three different motifs
of degree three already takes $50$ seconds, because it generates about $50$ million trees and $10$ million saturated maps, to get only $4476$ distinct maps.
We present three important choices made to obtain an efficient algorithm, which are common in practice but may not be natural for theoreticians. 

\begin{enumerate}
	\item \textbf{Forget exhaustiveness.} We restrict the set of enumerated solutions to decrease their number and 
	to get more efficient enumeration algorithms. However, the enumeration is not exhaustive anymore and we may skip interesting solutions. 
	To do that, instead of generating trees in the first phase, we generate either paths or cycles, that are folded like the trees in the second phase. It implies that we get only maps which have either
  a Hamiltonian path or a Hamiltonian cycle. Since small planar graphs have often such subgraphs~\cite{holton1988smallest,aldred1999cycles} it is a mild restriction and it implies that the generated maps  are connected or biconnected, which are relevant properties for a molecule. Generating paths instead of trees in the previous example yields $4463$ distinct maps instead of $4476$, but the generation time drops to less than a second.

	\item \textbf{Reconsider asymptotic complexity.} The main practical bottleneck of our algorithm, taking more than $90$ percent of the time, is the computation of 
  the signature  of each molecular map we obtain, to decide whether it has already been generated. There is a theoretical linear time algorithm to compute the signature of a map~\cite{hopcroft1974linear} and 
  an almost linear time algorithm which is implementable~\cite{hopcroft1973av}.
  However, we deal with very small maps in our work (size at most $40$), hence the complex linear time algorithms are less efficient than a simpler quadratic time algorithm: 
  from each edge of the map, do a traversal following the order of the edges, and keep the smallest traversal (sequence of vertices encountered) as a signature. 

  \item \textbf{Adapt the algorithm to the input.} Worst case complexity is not always relevant since the typical input
  may be quite different from the worst one. In our algorithm, we know that many redundant solutions are generated, hence when we compute a signature, most of the time it is already present.
  Therefore, we want to be able to decide quickly if a solution is redundant, but we have a lot of time budget for each new solution found.
  During the internship of Ruben Staub~\cite{ruben16}, we proposed an alternative signature: the set of all traversals instead of the minimal one. When a map distinct from the previously generated is found, all traversals
  are computed and stored. Each time a map is generated, a \emph{single} traversal is computed, which is enough to find whether the map has already been generated. This requires more space, but 
  speed-ups our algorithm by a factor $10$ in some cases.
\end{enumerate}
\chapter{New Horizons for Enumeration}
\label{chap:horizons}
\minitoc

There are several popular areas of complexity which have been brought to enumeration and that we have not covered. 
For instance, it is possible to define the $\FPT$ versions of the classes presented in this thesis. Enumeration of vertex covers or weighted assignments of satisfiability problems admit enumeration algorithms with an $\FPT$ delay~\cite{creignou2015parameterized,creignou2017paradigms}. Some complexity results presented here have been adapted to $\FPT$ classes~\cite{meier2020incremental}. A more restrictive way of defining kernelization for enumeration has been proposed in~\cite{golovach2021refined}. A fully polynomial enumeration kernel characterizes problems solvable in total time bounded by a polynomial in the input times a function of the parameter, while a polynomial-delay enumeration kernel charaterizes problems solvable in $\FPT$ delay. Using these definitions, lower and upper bounds for several parametrizations of the problem of enumerating maximal (minimal) matching cuts are proved.

In dynamic problems, the input is changed incrementally and some object is maintained, for instance the maximal matching of a graph~\cite{baswana2015fully}.
This idea is a good fit for query answering, where the input is huge and the cost of preprocessing is proportional to the input but usually not the enumeration phase.
Several enumeration algorithms from databases have been adapted to allow for updates of the input. It means that after modifying the input, e.g. adding or removing an edge of a graph, or changing the color of a node, the enumeration can be restarted without doing another preprocessing (or a very short one). For instance, enumerating the satisfying assignments of MSO formulas over trees can be done with a delay linear in the size of each solution and update in logarithmic time~\cite{amarilli2019enumeration}.

Enumeration problems are especially subject to the problem of combinatorial explosion because even a constant delay algorithm cannot escape the cost of going through the whole solution space, which is often too large. Moreover, even if a modern computer equipped with a good algorithm can easily generate thousands of billions of spanning trees or shortest paths, no user will ever be able to make sense of such amount of data. Hence, we should find ways to generate fewer solutions while still capturing the essence of the set of solutions.  Following ideas of several enumeration specialists (in particular several presentations by Takeaki Uno), we present promising directions, which have been explored in the last years. The main idea is, instead of listing the whole set of solutions,
to produce a \emph{compressed} version of this set, of much smaller size but still useful.

\section{Succinct Representation of the Solution Space}

Usually, when confronted to an enumeration problem, one first tries to find a representation 
of the set of solutions which minimizes its size while allowing efficient computation on it
like finding its size, its maximum, some statistic on its elements \dots It should be some middle ground 
between the set of solutions, which is easy to exploit but often too large to compute and store and the instance which represents the set of solution very succinctly but is not directly useful for computation. 

An extreme case consists in problems which admit a polynomial time sampling algorithm or which are in $\QueryP$.
In that case, the input itself (or rather the solution generator) is a good representation of the set of solutions, since they can all be accessed easily and statistical properties of the set of solutions can easily be computed.

Another example of a large set represented in a succinct way is given by algebra: a vector space is represented by a finite basis, a group by a presentation and a matroid by its fundamental cycles\dots In all these cases, it is possible to decompress efficiently the representation, that is to enumerate the elements of these objects from the succinct representation in incremental polynomial time. In Chapter~\ref{chap:limits}, we have presented the problem of computing closure under set operators (see~\cite{DBLP:journals/dmtcs/MaryS19}), which also asks to produce from a basis all elements of a set system closed under several operations.

Sometimes, partial solutions can be trivially extended into many solutions. For instance,
models of existential $FO$ formulas~\cite{durand2011enumeration} or of $k$-\textsc{DNF} formulas~\cite{capelli2020enumerating} are obtained by extending a partial model of constant size (a fixed number of variables are set) with all possible values of the remaining variables using Gray code enumeration. In this case, instead of exhaustively generating models, it is more efficient to generate only the 
partial models, as a concise representation of the set of models. This approach is relevant because the sets of solutions represented by each partial model are disjoint.
A similar but more general representation of set of models is proposed and enumerated for Horn formulas~\cite{wild2012compactly}.

\paragraph{Decomposition}

The two most simple operations on sets are the Cartesian product and the union and we have seen that they let 
enumeration classes stable in Chapter~\ref{chap:time}. Hence, it is interesting to decompose a set of solutions thanks to these two operations.
It may come naturally: the maximal cliques of a graph are the union of the maximal cliques of its connected components and the spanning forests of a graph are the Cartesian product of the spanning forests of its connected components. A constant delay algorithm to list spanning trees using a more involved Cartesian product decomposition is given in~\cite{conte_et_al:LIPIcs:2018:9666} as well as several examples of Cartesian decomposition for enumeration. 

In algebraic complexity, Cartesian product and union correspond to product and sum:
instead of representing a multivariate polynomial explicitly by a list of monomials, it can be represented by an arithmetic circuit with sum and product gates. However, this representation is hard to exploit: deciding whether the represented polynomial is zero is not believed to be in $\P$, deciding whether a monomial has a zero coefficient is $\sharp \P$-hard and counting monomials is even harder, see~\cite{fournier2015monomials}.
A more tractable representation, for which most of the previous problems are simple, is to decompose a polynomial
into product of small polynomials. We have investigated with several coauthors the task of factoring the sparse representation of a polynomial~\cite{chattopadhyay2021computing}. Interestingly, this does not always yield a smaller representation since $X^{k+1} - 1 = (X-1)\sum_{i=0}^{k} X^i$.

The previous decompositions are reminiscent of knowledge compilation~\cite{darwiche2002knowledge}, where a boolean function is represented by a succinct logical circuit (an OBDD or a DNNF). While the cost of computing the representation may be large, after compilation all queries over the original function can be evaluated efficiently over the representation. Amarilli et al.~\cite{AmarilliBJM17} give a strong polynomial delay algorithm to enumerate the models of d-DNNF circuits used in knowledge compilation. They relate it to set circuits, which can be seen as way of generating a set of solutions by using only Cartesian products and \emph{disjoint} unions of ground solutions. Several known enumeration problems such as the enumeration of the models of an MSO formula on a structure of bounded tree-width can then be reduced to the enumeration of the models of a d-DNNF. Set circuit is also essentially the same notion as factorised representation of a database~\cite{olteanu2012factorised}, used to succintly represent databases and query results. The methods of enumerating objects by first reducing the instance to compact boolean circuits is also used to obtain good practical algorithms, for instance to generate interval subgraphs~\cite{kawahara2019colorful}.

\begin{openproblem}
What enumeration problem can we capture using circuits of Cartesian products
and \emph{general} unions? Is it possible to generalize Cartesian product by some form of join without losing tractability? 
\end{openproblem}

\begin{openproblem}
What decomposition of solution sets can be useful in enumeration outside of union and Cartesian product?
\end{openproblem}

\paragraph{Equivalence Relation}

Another way to reduce the solution space is to define an equivalence relation, 
to identifiate solutions which are too similar. Then, only one representative per equivalence class is enumerated. Representatives are often harder to enumerate, a good example is the enumeration of classes of trees or graphs \emph{up to isomorphism}, see Chapter $8$ of~\cite{ruskey2003combinatorial} and the problem presented in Chapter~\ref{chap:practical}.

For problems whose solutions can be represented by AND/OR graphs (representing dynamic programs),
there is a natural notion of equivalence for which equivalence classes can be generated with polynomial delay~\cite{wang2020lazy}, with interesting applications in bioinformatics.

With Maël Guiraud and Dominique Barth, we have designed algorithms to find schedulings of periodic messages with a small latency~\cite{deterministicscheduling,phdmael}. As often with optimization problems, we enumerate exhaustively all solutions to find the best one. Therefore, to make our algorithms interesting both theoretically and practically, the set of considered schedulings must be reduced as much as possible. We use two equivalence relations and a partial order to filter the set of schedulings:

\begin{itemize}
    \item Compact representation: an equivalence class, which takes into account the combinatorial structure
    of a scheduling but abstracts away the precise timings.
    \item Canonical representation: an equivalence class, to abstract out the symmetries of a scheduling.
    \item Minimal solutions: the latency of all messages must be locally minimal, otherwise the scheduling is dominated and is not useful for computing an optimum.
\end{itemize}

\section{Approximate Representation of the Solution Space}

In the previous section  we have explored the idea of compressing the set of solutions exactly.
But we could also compress it with loss, that is producing a representation which 
cannot be used to recover all solutions but only to approximate them.

This idea has been used to replace the Pareto's frontier of a problem by a much smaller number of approximate Pareto optimal points~\cite{papadimitriou2000approximability}. The lossy compression approach is also reminiscent of sketches used to obtain a succinct representation of a very large stream, while maintaining some property with high probability. For instance, using a logarithmic space, the number of distinct elements in a stream can be evaluated~\cite{durand2003loglog}, or with a sublinear space the minimum spanning tree or maximal matching can be approximated~\cite{mcgregor2014graph}.

\paragraph{Over-approximation}

Instead of approximating each solution, we may approximate the set of solutions by a larger set,
outputting elements which are not far from being solutions. For the problem of enumerating minimal sets of size at most $k$ satisfying monotone properties such as minimal vertex covers, minimal dominating sets in bounded degree graphs, minimal feedback vertex set \dots,  efficient algorithms exist but they produce solutions of weight up to $ck$ for a constant $c$~\cite{kobayashi2020efficient}.

Query enumeration over a bounded degree structure can be done in constant delay and linear preprocessing~\cite{DBLP:journals/tocl/DurandG07}. To speed-up the preprocessing, since the database cannot be read entirely anymore, a randomized algorithm must be used, as in property testing.
There is a constant delay algorithm with \emph{polylogarithmic preprocessing} instead of linear if tuples not far from a solution for the edit distance are allowed in the enumeration~\cite{adler2021towards}.
If the proportion of tuples that are answers to the query is sufficiently large, then all answers will be enumerated, by interleaving randomly sampled tuples.
The idea of interleaving solutions, when there is an efficient algorithm to enumerate all solutions and that almost all potential solutions are indeed solutions has already been proposed by Leslie Ann Goldberg in a deterministic setting. She proved that all classes of graphs defined by almost sure first order property can be enumerated with polynomial delay~\cite{Goldberg91}.

Many enumeration problems ask to enumerate minimal or maximal objects, such as maximal cliques or minimal hitting sets,
to avoid less useful solutions. However, these problems are much harder and are sometimes solved by generating all non extremal objects. 
For instance, to generate minimal hitting set, the most efficient algorithms are based on the generation of all hitting sets,
using pruning rules to avoid as most non minimal hitting set as possible~\cite{murakami2014efficient}.

Finally, a more restricted way to do an over-approximation is to allow repetitions of a solution
instead of allowing elements which are not solutions. It is our approach when turning a random generator into an exhaustive enumeration algorithm using only polynomial space, as explained in Chapter~\ref{chap:space}.

\paragraph{Under-approximation}

In the previous section, the approximation is done by outputting more elements than solutions, which goes against the idea of reducing the solution space
and may thus not be useful in practice. To approximate a set of solutions $A(x)$, we can use some subset $S$ such that $|S|$ is within some factor of $|A(x)|$.
I am not aware of any result of this kind, however some algorithms use the property that a large subset of solutions is easy to enumerate to have enough time to enumerate the rest. Such algorithms are used for the enumeration of clause sequences of a $k$-\textsc{CNF} formula~\cite{berczi2021generating} and for the enumeration of the models of a $k$-\textsc{DNF} formula~\cite{capelli2020enumerating}.

\begin{openproblem}
Design a constant factor under-approximation in incremental polynomial time for a non artificial enumeration problem not known to be in $\IncP$.    
\end{openproblem}

When designing an under-approximation, the choice of solutions in the approximation is critical. 
We would like them to be more interesting than the ones not enumerated. When there is a relevant order on solutions, enumerating in this order, as in Chapter~\ref{chap:order}, can be seen as a way to do meaningful under-approximation.

To be more general, we could allow both to enumerate elements which are not solutions and
to miss some solutions. Let $S$ be the set enumerated to approximate the solution set $A(x)$.
To control the quality of the approximation, we say that $S$ is an $\epsilon$-approximation if 
$|S \triangle A(x)| \leq \epsilon |A(x)|$. This definition is similar to the notion of approximation
in knowledge compilation, for which several lower bounds are known~\cite{DBLP:conf/ijcai/ColnetM20}.

\paragraph{Diversity}

The approximation based on the number of solutions may not be relevant: what if we get
a large family of similar solutions but miss all the diversity of the solution set? 

Most of the time, there is no notion of quality over the solutions or this notion is too vague to formalize. 
Think about a system recommanding routes between two locations, each user may have specific desires such 
as only using streets she already knows or constraints too informal to be implemented such as prefering roads with a beautiful scenery.
Hence, such a system should propose solutions as \emph{diverse} as possible, to maximize the chance that one fulfills the unknown constraints of each user.

 To this aim, we could equip the solutions with some distance and try to cover them all
 by as few of them as possible.  Given an instance $x$ of a problem $\enum{A}$ and an integer $d$, we define a $d$-cover of $A(x)$ as $S \subseteq A(x)$ such that for all solutions of $A(x)$, there is a solution in $S$ at distance at most $d$. An enumeration problem is $d$-approximable in polynomial time when $S$ can be generated in total time polynomial in the size of the smallest $d$-cover of $A(x)$.
 Allowing randomization to be able to sample the solutions could be important in this context,
 since we have much less time than solutions. A $d$-approximation algorithm could be extremely useful for a user-interactive system, where the user sets the value $d$, obtains first a coarse representation of the solutions and can then zoom on an interesting solution by restricting the problem around it and asking for an approximation with a smaller $d$.
The notion of $d$-approximability is designed so that any solution is close to some solution in the cover.
Hence, among the cover, there must be many solutions which are quite different, which hopefully makes such a cover useful in practice.
 
\begin{openproblem}
Give a $d$-approximation algorithm for a problem with a structured set of solutions, such as the minimal
spanning trees or the shortest paths, for any relevant distance over solutions.
\end{openproblem}

\begin{openproblem}
Assume we have a supergraph of solutions, and the distance over solutions is related to the distance in the supergraph.
What random walk on the supergraph and which property of the supergraph yield a $d$-approximation? Can we use the same approach with flashlight search?
\end{openproblem}

 A similar approach has been recently proposed through the notion of \emph{diversity}, which is the sum of the Hamming distances of pairs of elements in a set.
  A family of hard problems, such as vertex cover, admit $\FPT$ algorithms for producing a set of solutions of high enough diversity~\cite{DBLP:conf/ijcai/BasteFJMOPR20}. The method converts any tree-decomposition based dynamic programming algorithm into a dynamic programming algorithm for the same problem with an additional diversity constraint. However, the method is exponential in the number of produced solutions in addition to the treewidth of the graph.  To find $r$ maximally diverse spanning trees, there is a polynomial time algorithm in $r$~\cite{hanaka2021finding}, which allows to produce a large set of diverse solutions.

 Diversity seems to well capture the notion of producing different solutions, however it does not garantee to cover well the set of solutions,
 see Figure~\ref{fig:diversity}. We would like to relate large set of diverse solutions to the notion of $d$-approximability. To that aim, it 
 is interesting to replace the sum of the distances between solutions by the minimum of the distances in the definition of diversity.

\begin{openproblem}
Generate $r$ spanning trees, while maximising their minimal Hamming distances, in time polynomial in $r$.
\end{openproblem}

\begin{figure}
\begin{center}
\includegraphics{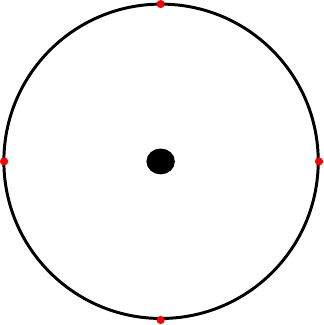}
\end{center}
\caption{Solutions on a circle and around its center. A subset of solutions (here in red) maximising diversity always lies on the circle, missing solutions at the center.}
\label{fig:diversity}
\end{figure}
\appendix

\bibliographystyle{ThesisStyleWithEtAl}
\bibliography{thesis}












\end{document}